\theoremstyle{plain}
\newtheorem{theorem}{Theorem}[section]
\newtheorem*{theorem*}{Theorem}
\newtheorem{lemma}[theorem]{Lemma}
\newtheorem*{lemma*}{Lemma}
\newtheorem{corollary}[theorem]{Corollary}
\newtheorem*{corollary*}{Corollary}
\newtheorem*{proposition*}{Proposition}
\newtheorem*{claim*}{Claim}
\newtheorem*{namedtheorem}{\theoremname}
\theoremstyle{definition}
\newtheorem{definition}[theorem]{Definition}
\newtheorem*{definition*}{Definition}
\newtheorem*{notation*}{Notation}
\newtheorem*{example*}{Example}
\newtheorem*{examples*}{Examples}
\theoremstyle{remark}
\newtheorem{remark}[theorem]{Remark}
\newtheorem*{remark*}{Remark}
\newtheorem*{note*}{Note}
\newtheorem*{acknowledgments*}{Acknowledgments}
\newcommand{\IH}{IH}
\newcommand{\SIH}{SIH}
\newcommand{\der}{\vdash}
\newcommand{\co}{\colon}
\DeclareMathOperator{\dom}{dom}
\newcommand{\case}{\emph{Case}}
\newcommand{\subcase}{\emph{Subcase}}
\newcommand{\all}{\forall}
\newcommand{\NN}{\mathbb{N}}
\newcommand{\id}{\mathrm{id}}
\newcommand{\set}[1]{\{#1\}}
\newcommand{\limp}{\rightarrow}
\newcommand{\Imp}{\Rightarrow}
\newcommand{\Iff}{\Leftrightarrow}
\newcommand{\textnoti}{\text{n.i.}}
\newcommand{\noti}{\text{ n.i.}}
\renewcommand{\phi}{\varphi}
\newcommand{\II}{\mathbb{I}}
\newcommand{\FF}{\mathbb{F}}
\newcommand{\su}[2]{#1/#2}
\newcommand{\subst}[2]{(\su #1 #2)}
\newcommand{\emptyctx}{{\diamond}} 
\newcommand{\pp}{\mathsf{p}}       
\newcommand{\J}{\mathcal{J}}       
\newcommand{\N}{\mathsf{N}}
\DeclareMathOperator{\suc}{\mathsf{S}}
\newcommand{\natrec}{\mathsf{natrec}}
\DeclareMathOperator{\pred}{\mathsf{pred}}
\newcommand{\num}[1]{\underline{#1}} 
\newcommand{\Path}{\mathsf{Path}}
\newcommand{\nabs}[1]{\langle #1 \rangle}
\newcommand{\smap}{~}
\newcommand{\comp}{\mathsf{comp}}
\newcommand{\Comp}{\mathsf{fill}}
\newcommand{\Glue}{\mathsf{Glue}} 
\newcommand{\glue}{\mathsf{glue}} 
\newcommand{\unglue}{\mathsf{unglue}}
\newcommand{\Equiv}{\mathsf{Equiv}}
\newcommand{\eq}{\mathsf{equiv}}
\newcommand{\pres}{\mathsf{pres}}
\newcommand{\U}{\mathsf{U}}
\DeclareMathOperator{\ptoeq}{\mathsf{equiv}} 
\newcommand{\sone}{\mathsf{S^1}}
\newcommand{\base}{\mathsf{base}}
\DeclareMathOperator{\sloop}{\mathsf{loop}}
\newcommand{\sonerec}{\sone\textsf{-elim}}
\newcommand{\inh}[1]{\lVert#1\rVert}
\DeclareMathOperator{\inc}{\mathsf{inc}}
\newcommand{\squash}{\mathsf{squash}}
\newcommand{\hcomp}{\mathsf{hcomp}}
\newcommand{\fwd}{\mathsf{fwd}}
\newcommand{\ptElim}{\mathsf{elim}}
\newcommand{\red}{\succ}
\newcommand{\sred}{\red_\mathsf{s}}
\newcommand{\dn}{{\downarrow}}    
\newcommand{\hr}{{!}}             
\newcommand{\force}{\Vdash}
\newcommand{\lv}{\ell}           
\newcommand{\lforce}{\force_\lv} 
\newcommand{\peq}{\doteqdot}
\newcommand{\rulename}[1]{\textsc{#1}}
\newcommand{\rnc}[1]{\rulename{#1-C}}
\newcommand{\rcn}{\rnc{N}}
\newcommand{\rcpi}{\rnc{Pi}}
\newcommand{\rcsi}{\rnc{Si}}
\newcommand{\rcpa}{\rnc{Pa}}
\newcommand{\rcgl}{\rnc{Gl}}
\newcommand{\rcu}{\rnc{U}}
\newcommand{\rcni}{\rnc{Ni}}
\newcommand{\rcpt}{\rnc{Pt}}
\newcommand{\rne}[1]{\rulename{#1-E}}
\newcommand{\ren}{\rne{N}}
\newcommand{\repi}{\rne{Pi}}
\newcommand{\resi}{\rne{Si}}
\newcommand{\repa}{\rne{Pa}}
\newcommand{\regl}{\rne{Gl}}
\newcommand{\reu}{\rne{U}}
\newcommand{\reni}{\rne{Ni}}
\newcommand{\rept}{\rne{Pt}}
\newcommand{\nir}[1]{\inferrule* [right=#1]}
\title{Canonicity for Cubical Type Theory}
\date{\today}
\author{Simon Huber}
\address{Department of Computer Science and Engineering, University of
  Gothenburg, SE-412 96 G\"oteborg, Sweden}
\email{simon.huber@cse.gu.se}
\begin{document}

\begin{abstract}
  Cubical type theory is an extension of Martin-L\"of type theory
  recently proposed by Cohen, Coquand, M\"ortberg, and the author
  which allows for direct manipulation of $n$-dimensional cubes and
  where Voevodsky's Univalence Axiom is provable.  In this paper we
  prove canonicity for cubical type theory: any natural number in a
  context build from only name variables is judgmentally equal to a
  numeral.  To achieve this we formulate a typed and deterministic
  operational semantics and employ a computability argument adapted to
  a presheaf-like setting.
\end{abstract}

\maketitle

\section{Introduction}
\label{sec:introduction}

Cubical type theory as presented in~\cite{CohenCoquandHuberMortberg15}
is a dependent type theory which allows one to directly argue about
$n$-dimensional cubes, and in which function extensionality and
Voevodsky's Univalence Axiom~\cite{Voevodsky10} are provable.  Cubical
type theory is inspired by a \emph{constructive} model of dependent
type theory in cubical sets~\cite{CohenCoquandHuberMortberg15} and a
previous variation thereof~\cite{BezemCoquandHuber14,Huber15}.  One of
its important ingredients is that expressions can depend on
\emph{names} to be thought of as ranging over a formal unit
interval~$\II$.

Even though the consistency of the calculus already follows from its
model in cubical sets, desired---and expected---properties like
normalization and decidability of type checking are not yet
established.  This note presents a first step in this direction by
proving \emph{canonicity} for natural numbers in the following form:
given a context $I$ of the form $i_1 : \II, \dots, i_k : \II$, $k \ge
0$, and a derivation of $I \der u : \N$, there is a unique $n \in \NN$
with $I \der u = \suc^n 0 : \N$.  This $n$ can moreover be effectively
calculated.  Canonicity in this form also gives an alternative proof
of the consistency of cubical type theory (see
Corollary~\ref{cor:consistency}).

The main idea to prove canonicity is as follows.  First, we devise an
operational semantics given by a typed and deterministic weak-head
reduction included in the judgmental equality of cubical type theory.
This is given for general contexts although we later on will only use
it on terms whose only free variables are name variables, i.e.,
variables of type~$\II$.  One result we obtain is that our reduction
relation is ``complete'' in the sense that any term in a name context
whose type is the natural numbers can be reduced to one in weak-head
normal form (so to zero or a successor).  Second, we will follow
Tait's computability method~\cite{Tait67,MartinLoef98} and devise
computability predicates on \emph{typed} expressions in name contexts
and corresponding computability relations (to interpret judgmental
equality).  These computability predicates are indexed by the list of
free name variables of the involved expressions and should be such
that substitution induces a cubical set structure on them.  This poses
a major difficulty given that the reduction relation is in general not
closed under name substitutions.  A solution is to require for
computability that reduction should behave ``coherently'' with
substitution: simplified, reducing an expression and then substituting
should be related, by the computability relation, to first
substituting and then reducing.  A similar condition appeared
independently in the Computational Higher Type Theory of Angiuli,
Harper, and Wilson~\cite{AngiuliHarperWilson16,AngiuliHarperWilson17}
and Angiuli and Harper~\cite{AngiuliHarper16} who work in an untyped
setting; they achieve similar results but for a theory not
encompassing the Univalence Axiom.

In a way, our technique can be considered as a presheaf extension of
the computability argument given
in~\cite{AbelScherer16,AbelCoquandMannaa16}; the latter being an
adaption of the former using a typed reduction relation instead.  A
similar extension of this technique has been used to show the
independence of Markov's principle in type theory
\cite{CoquandMannaa16}.

The rest of the paper is organized as follows.  In
Section~\ref{sec:reduction} we introduce the typed reduction relation.
Section~\ref{sec:computability} defines the computability predicates
and relations and shows their important properties.  In
Section~\ref{sec:soundness} we show that cubical type theory is sound
w.r.t.\ the computability predicates; this entails canonicity.
Section~\ref{sec:hits} sketches how to adapt the computability
argument for the system extended with the circle and propositional
truncation, and we deduce an existence property for existentials
defined as truncated $\Sigma$-types.  We conclude by summarizing and
listing further work in the last section.  We assume that the reader
is familiar with cubical type theory as given
in~\cite{CohenCoquandHuberMortberg15}.

\section{Reduction}
\label{sec:reduction}

In this section we give an operational semantics for cubical type
theory in the form of a typed and deterministic weak-head reduction.
Below we will introduce the relations $\Gamma \der A \red B$ and
$\Gamma \der u \red v : A$.  These relations are deterministic in the
following sense: if $\Gamma \der A \red B$ and $\Gamma \der A \red C$,
then $B$ and $C$ are equal as expressions (i.e., up to
$\alpha$-equivalence); and, if $\Gamma \der u \red v : A$ and $\Gamma
\der u \red w : B$, then $v$ and $w$ are equal as expressions.
Moreover, these relations entail judgmental equality, i.e., if $\Gamma
\der A \red B$, then $\Gamma \der A = B$, and if $\Gamma \der u \red v
: A$, then $\Gamma \der u = v : A$.

For a context $\Gamma \der$, a \emph{$\Gamma$-introduced} expression is an
expression whose outer form is an introduction, so one of the form
\begin{gather*}
  0, \suc u, \N, \lambda x : A. u, (x : A) \limp B, (u,v), ( x :
  A) \times B, \U, \nabs i u, \Path\,A\,u\,v,\\
  [ \phi_1 \smap t_1, \dots, \phi_n \smap t_n], \glue\,[\phi \mapsto
  t]\,a, \Glue\,[\phi \mapsto (T,w)]\,A,
\end{gather*}
where we require $\phi \neq 1 \mod \Gamma$ (which we from now on write
as $\Gamma \der \phi \neq 1 : \FF$) for the latter two cases, and in
the case of a system (third to last) we require $\Gamma \der \phi_1
\lor \dots \lor \phi_n = 1 : \FF$ but $\Gamma \der \phi_k \neq 1 :
\FF$ for each $k$.  In case $\Gamma$ only contains object and interval
variable declarations (and \emph{no} restrictions $\Delta,\psi$) we
simply refer to $\Gamma$-introduced as \emph{introduced}.  In such a
context, $\Gamma \der \phi = \psi : \FF$ if{f} $\phi = \psi$ as
elements of the face lattice $\FF$; since $\FF$ satisfies the
disjunction property, i.e.,
\[
\phi \lor \psi = 1 \Imp \phi = 1 \text{ or } \psi = 1,
\]
a system as above will never be introduced in such a context without
restrictions.  We call an expression \emph{non-introduced} if it is
not introduced and abbreviate this as ``\textnoti'' (often this is
referred to as neutral or non-canonical).  A $\Gamma$-introduced
expression is normal w.r.t.\ $\Gamma \der \cdot \red \cdot$ and
$\Gamma \der \cdot \red \cdot : A$.
\\

We will now give the definition of the reduction relation starting
with the rules concerning basic type theory.
\begin{mathparpagebreakable}
  \inferrule {\Gamma \der u \red v : A \\
    \Gamma \der A = B}
  { \Gamma \der u \red v : B}
  \\
  \inferrule { %
    \Gamma, x : \N \der C \\
    \Gamma \der z : C \subst x 0 \\
    \Gamma \der s : (x : \N) \limp C \limp C \subst x {\suc x} \\
  } %
  { \Gamma \der \natrec \, 0 \, z \, s \red z : C \subst x 0} \and %
  \inferrule { %
    \Gamma \der t : \N \\
    \Gamma, x : \N \der C \\
    \Gamma \der z : C \subst x 0\\
    \Gamma \der s : (x : \N) \limp C \limp C \subst x {\suc x} \\
  } %
  { \Gamma \der \natrec \, (\suc t) \, z \, s \red s \, t \, (\natrec
    \, t \, z \, s ) : C \subst x {\suc t} } %
  \and %
  \inferrule { %
    \Gamma \der t \red t' : \N \\
    \Gamma, x : \N \der C \\
    \Gamma \der z : C \subst x 0 \\
    \Gamma \der s : (x : \N) \limp C \limp C \subst x {\suc x} \\
  } %
  { \Gamma \der \natrec \, t \, z \, s\red \natrec \, t' \, z \, s: C
    \subst x {t'} } %
  \and %
  \inferrule {\Gamma, x : A \der t : B \\
    \Gamma \der u : A} %
  {\Gamma \der (\lambda x : A. t) \, u \red t \subst x u : B \subst x
    u} \and %
  \inferrule {\Gamma \der t \red t' : (x : A) \to B \\
    \Gamma \der u : A} %
  {\Gamma \der t \, u \red t' u : B \subst x u} %
  \and %
  \inferrule { %
    \Gamma, x : A \der B\\
    \Gamma \der u : A \\
    \Gamma \der v : B \subst x u} %
  { \Gamma \der (u,v).1 \red u : A} \and %
  \inferrule { %
    \Gamma \der t \red t' : (x : A) \times B } %
  { \Gamma \der t.1 \red t'.1 : A} \and %
  \inferrule { %
    \Gamma, x : A \der B\\
    \Gamma \der u : A \\
    \Gamma \der v : B \subst x u} %
  { \Gamma \der (u,v).2 \red v : B \subst x u} \and %
  \inferrule { %
    \Gamma \der t \red t' : (x : A) \times B } %
  { \Gamma \der t.2 \red t'.2 : B \subst x {t'.1}}
\end{mathparpagebreakable}
Note, $\natrec\,t \, z \, s$ is not considered as an application
(opposed to the presentation in \cite{CohenCoquandHuberMortberg15});
also the order of the arguments is different to have the main premise
as first argument.

Next, we give the reduction rules for $\Path$-types.  Note, that like
for $\Pi$-types, there is no $\eta$-reduction or expansion, and also
there is no reduction for the end-points of a path.
\begin{mathparpagebreakable}
  \inferrule { %
    \Gamma \der A \\
    \Gamma, i : \II \der t : A\\
    \Gamma \der r : \II}
  { \Gamma \der (\nabs i t) \, r \red t \subst i r : A}
  \and %
  \inferrule {\Gamma \der t \red t' : \Path \, A \, u\, v \\
    \Gamma \der r : \II} %
  {\Gamma \der t \, r \red t' r : A } %
\end{mathparpagebreakable}
The next rules concern reductions for $\Glue$.
\begin{mathparpagebreakable}
  \inferrule {\Gamma \der A \\
    \Gamma, \phi \der T \\
    \Gamma,\phi \der w : \Equiv \, T \, A \\
    \Gamma \der \phi = 1 : \FF} %
  {\Gamma \der \Glue \, [\phi \mapsto (T,w)] \, A \red T} %
  \and %
  \inferrule { %
    \Gamma,\phi \der w : \Equiv \, T \, A \\
    \Gamma,\phi \der t : T \\
    \Gamma \der a : A [\phi \mapsto w.1 \, t]\\
    \Gamma \der \phi = 1 : \FF} %
  {\Gamma \der \glue \, [\phi \mapsto t] \, a \red t : T}
  \and %
  \inferrule { %
    \Gamma,\phi \der w : \Equiv \, T \, A \\
    \Gamma,\phi \der t : T\\
    \Gamma \der a : A [\phi \mapsto w.1 \, t]\\
    \Gamma \der \phi \neq 1 : \FF
  } %
  {\Gamma \der \unglue\,[\phi \mapsto w]\, (\glue \, [\phi \mapsto t]
    \, a) \red a : A}
  \and %
  \inferrule { %
    \Gamma,\phi \der w : \Equiv \, T \, A \\
    \Gamma \der u : \Glue \, [\phi \mapsto (T,w)] \, A\\
    \Gamma \der \phi = 1 : \FF} %
  {\Gamma \der \unglue\,[\phi \mapsto w]\,u \red w.1 \, u : A}
  \and %
  \inferrule { %
    \Gamma \der u \red u' : \Glue \, [\phi \mapsto (T,w)] \, A \\
    \Gamma \der \phi \neq 1 : \FF
  } %
  { \Gamma \der \unglue\,[\phi \mapsto w]\, u \red \unglue\,[\phi
    \mapsto w]\, u' : A }
\end{mathparpagebreakable}
Note that in~\cite{CohenCoquandHuberMortberg15} the annotation $[\phi
\mapsto w]$ of $\unglue$ was left implicit.  The rules for systems are
given by:
\begin{mathparpagebreakable}
  \inferrule { %
    \Gamma \der \phi_1 \lor \dots \lor \phi_n = 1 : \FF\\
    \Gamma, \phi_i \der A_i ~ (1 \le i \le n)\\
    \Gamma, \phi_i \land \phi_j \der A_i = A_j ~ (1\le i,j\le n)\\
    \text{$k$ minimal with }\Gamma \der \phi_k = 1 : \FF } %
  { \Gamma \der [ \phi_1 \smap A_1, \dots, \phi_n \smap A_n] \red
    A_k} %
  \and %
  \inferrule { %
    \Gamma \der \phi_1 \lor \dots \lor \phi_n = 1 : \FF\\
    \Gamma \der A\\
    \Gamma, \phi_i \der t_i : A ~ (1 \le i \le n)\\
    \Gamma, \phi_i \land \phi_j \der t_i = t_j : A ~ (1\le i,j\le n)\\
    \text{$k$ minimal with }\Gamma \der \phi_k = 1 : \FF } %
  { \Gamma \der [ \phi_1 \smap t_1, \dots, \phi_n \smap t_n] \red t_k
    : A}
\end{mathparpagebreakable}
The reduction rules for the universe are:
\begin{mathparpagebreakable}
  \inferrule { %
    \Gamma \der A \red B : \U } %
  { \Gamma \der A \red B} %
  \and %
  \inferrule {\Gamma \der A : \U \\
    \Gamma, \phi \der T : \U \\
    \Gamma,\phi \der w : \Equiv \, T \, A \\
    \Gamma \der \phi = 1 : \FF} %
  {\Gamma \der \Glue \, [\phi \mapsto (T,w)] \, A \red T : \U} %
\end{mathparpagebreakable}
Finally, the reduction rules for compositions are given as follows.
\begin{mathparpagebreakable}
  \inferrule { %
    \Gamma, i : \II \der A \red B \\
    \Gamma \der \phi : \FF\\
    \Gamma,\phi, i : \II \der u : A \\
    \Gamma \der u_0 : A (i0) [\phi \mapsto u (i0)]\\
  } { \Gamma \der \comp^i \, A \, [\phi \mapsto u] \, u_0 \red %
    \comp^i \, B \, [\phi \mapsto u] \, u_0 : B(i1) %
  } %
  \and %
  \inferrule { %
    \Gamma \der \phi : \FF\\
    \Gamma,\phi,i:\II \der u : \N \\
    \Gamma,\phi,i:\II \der u = 0 : \N } %
  { \Gamma \der \comp^i \, \N \, [ \phi \mapsto u] \, 0 \red 0 : \N} %
  \and %
  \inferrule { %
    \Gamma \der \phi : \FF\\
    \Gamma,\phi,i:\II \der u : \N\\
    \Gamma,\phi,i:\II \der w : \N\\
    \Gamma,\phi,i:\II \der u = \suc w : \N\\
    \Gamma \der u_0 : \N\\
    \Gamma,\phi \der u (i0) = \suc u_0 : \N\\
  } %
  { \Gamma \der \comp^i \, \N \, [ \phi \mapsto u] \, (\suc u_0)
    \red %
    \suc (\comp^i \, \N \, [ \phi \mapsto \pred u] \, u_0) : \N} %
\end{mathparpagebreakable}
Here $\pred$ is the usual predecessor function defined using
$\natrec$.\footnote{This trick allows us that we never have to reduce
  in the system of a composition when defining composition for natural
  numbers, which also gives that reduction over $\Gamma$ never refers
  to reduction in a restricted context $\Gamma,\phi$ (given that
  $\Gamma$ is not restricted).  If we would instead directly require
  $u$ above to be of the form $\suc u'$, we would have to explain
  reductions for systems like $[ (i=0) \smap (\suc t), (i=1) \smap
  (\suc {t'})]$ and more generally how reduction and systems
  interact.}
\begin{mathparpagebreakable}
  \inferrule { %
    \Gamma \der \phi : \FF\\
    \Gamma, \phi, i : \II \der u : \N \\
    \Gamma \der u_0 : \N [\phi \mapsto u(i0)] \\
    \Gamma \der u_0 \red v_0 : \N } %
  {\Gamma \der \comp^i \, \N \, [ \phi \mapsto u] \, u_0 \red %
    \comp^i \, \N \, [ \phi \mapsto u] \, v_0 : \N} \and %
  \inferrule { %
    \Gamma \der \phi : \FF\\
    \Gamma, i : \II \der A \\
    \Gamma, i : \II, x : A \der B \\
    \Gamma, \phi, i : \II \der u : (x : A) \to B \\
    \Gamma \der u_0 : ((x : A) \to B) (i0) [\phi \mapsto u(i0)] \\
  } %
  { \Gamma \der \comp^i \, ((x : A) \to B) \, [\phi \mapsto u] \, u_0
    \red
    \\
    \lambda y : A(i1). \comp^i \, B \subst{x}{\bar y} \, [\phi \mapsto
    u \, \bar y] \, (u_0 \, \bar y (i0)): (x : A(i1)) \to B(i1)
    \\
    \text{where }y' = \Comp^i\,A \subst{i} {1-i}\,[]\,y \text { and
    }\bar y = y' \subst{i}{1-i} } %
  \and %
  \inferrule { %
    \Gamma \der \phi : \FF\\
    \Gamma, i : \II \der A \\
    \Gamma, i : \II, x:A \der B \\
    \Gamma, \phi, i : \II \der u : (x : A) \times B \\
    \Gamma \der u_0 : ((x : A) \times B) (i0) [\phi \mapsto u(i0)] \\
  } %
  { \Gamma \der \comp^i \, ((x : A) \times B) \, [\phi \mapsto u] \,
    u_0 \red \\
    (v (i1), \comp^i \, B \subst {x} {v} \, [\phi \mapsto u.2]
    \,(u_0.2)) : (x : A (i1)) \times B(i1)
    \\
    \text{where } v = \Comp^i \, A \, [\phi \mapsto u.1] \, (u_0 .1)}
  \and %
  \inferrule { %
    \Gamma \der \phi : \FF\\
    \Gamma, i : \II \der A\\
    \Gamma, i : \II \der v : A\\
    \Gamma, i : \II \der w : A\\
    \Gamma, \phi, i : \II \der u : \Path\,A\,v\,w \\
    \Gamma \der u_0 : \Path\,A(i0)\,v(i0)\,w(i0) [\phi \mapsto u(i0)] \\
  } %
  { \Gamma \der \comp^i \, (\Path\,A\,v\,w) \, [\phi \mapsto u] \, u_0
    \red \\
    \nabs j \, \comp^i \, A \,[ (j=0) \mapsto v, (j=1) \mapsto w, \phi
    \mapsto u \, j] \, (u_0 \, j) : \Path\,A(i1)\,v(i1)\,w(i1)} %
  \and
  \inferrule { %
    \Gamma, i : \II \der A \\
    \Gamma, i : \II \der \phi : \FF\\
    \Gamma, i : \II \der \phi \neq 1 : \FF\\
    \Gamma, i : \II, \phi \der T \\
    \Gamma, i : \II, \phi \der w : \Equiv \, T \, A \\
    \Gamma \der \psi : \FF\\
    \Gamma,\psi, i : \II \der u : \Glue \, [\phi \mapsto (T,w)] \, A\\
    \Gamma \der u_0 : (\Glue \, [\phi \mapsto (T,w)] \,
    A) (i0) [\psi \mapsto u (i0)]\\
  } %
  { \Gamma \der \comp^i \, (\Glue \, [\phi \mapsto (T,w)] \, A) \, %
    [ \psi \mapsto u] \, u_0 \red\\
    \glue \, [\phi (i1) \mapsto t_1] \, a_1 : %
    (\Glue \, [\phi \mapsto (T,w)] \, A)(i1)
  } %
\end{mathparpagebreakable}
Here $a_1$ and $t_1$ are defined like
in~\cite{CohenCoquandHuberMortberg15}, i.e., given by
\begin{align*}
  a &= \unglue\,[\phi \mapsto w]\, u
  && \Gamma,i : \II ,\psi
  \\
  a_0 &= \unglue\,[\phi(i0) \mapsto w(i0)]\, u_0
  && \Gamma
  \\
  \delta &= \forall i.\varphi && \Gamma
  \\
  a_1' &= \comp^i\,A\,[\psi\mapsto a]\,a_0 && \Gamma
  \\
  t_1' &= \comp^i\,T\,[\psi\mapsto u]\,u_0 && \Gamma,\delta
  \\
  \omega &=\pres^i\,w\,[\psi\mapsto u]\,u_0 && \Gamma,\delta
  \\
  (t_1,\alpha) &= \eq\,w(i1)\,[\delta \mapsto (t'_1,\omega),
  \psi \mapsto (u(i1),\nabs {j}{a_1'})]\,a_1' && \Gamma,\varphi(i1)
  \\
  a_1 &= \comp^j\,A(i1)\,[\varphi(i1)\mapsto \alpha\,j,\psi\mapsto
  a(i1)]\,a_1' && \Gamma
\end{align*}
where we indicated the intended context on the right.

\begin{mathparpagebreakable}
  \inferrule { %
    \Gamma \der \phi : \FF\\
    \Gamma, \phi, i : \II \der u : \U \\
    \Gamma \der u_0 : \U [\phi \mapsto u(i0)] \\
  } %
  { \Gamma \der \comp^i \, \U \, [\phi \mapsto u] \, u_0 \red
    \Glue\,[\phi \mapsto (u(i1), \ptoeq^i {u \subst {i} {1-i}})]\,u_0 :
    \U
  }
\end{mathparpagebreakable}
Here $\ptoeq^i$ is defined as in~\cite{CohenCoquandHuberMortberg15}.
This concludes the definition of the reduction relation.
\\

For $\Gamma \der A$ we write $A \hr_\Gamma$ if there is $B$ such that
$\Gamma \der A \red B$; in this case $B$ is uniquely determined by $A$
and we denote $B$ by $A \dn_\Gamma$; if $A$ is normal we set $A
\dn_\Gamma$ to be $A$.  Similarly for $\Gamma \der u : A$, $u
\hr_\Gamma^A$ and $u \dn^A_\Gamma$.  Note that if a term or type has a
reduct it is non-introduced.  We usually drop the subscripts and
sometimes also superscripts since they can be inferred.

From now on we will mainly consider contexts $I,J,K,\dots$ only built
from dimension name declarations; so such a context is of the form
$i_1 : \II, \dots, i_n : \II$ for $n \ge 0$.  We sometimes write $I,i$
for $I, i : \II$.  Substitutions between such contexts will be denoted
by $f,g,h,\dots$.  The resulting category with such name contexts $I$
as objects and substitutions $f \co J \to I$ is reminiscent of the
category of cubes as defined in
\cite[Section~8.1]{CohenCoquandHuberMortberg15} with the difference
that the names in a contexts $I$ are ordered and not sets.  This
difference is not crucial for the definition of computability
predicates in the next section but it simplifies notations.  (Note
that if $I'$ is a permutation of $I$, then the substitution assigning
to each name in $I$ itself is an isomorphism $I' \to I$.)  We write $r
\in \II (I)$ if $I \der r : \II$, and $\phi \in \FF (I)$ if $I \der
\phi : \FF$.

Note that in general reductions $I \der A \red B$ or $I \der u \red v
: A$ are not closed under substitutions $f \co J \to I$.  For example,
if $u$ is a system $[ (i = 0) \smap u_1, 1 \smap u_2]$, then $i \der u
\red u_2 : A$ (assuming everything is well typed), but $\der u (i0)
\red u_1 (i0) : A(i0)$ and $u_1,u_2$ might be chosen that $u_1(i0)$
and $u_2(i0)$ are judgmentally equal but not syntactically (and even
normal by considering two $\lambda$-abstractions where the body is not
syntactically but judgmentally equal).  Another example is when $u$ is
$\unglue\,[\phi \mapsto w]\, (\glue\,[\phi \mapsto t]\,a)$ with $\phi
\neq 1$ and with $f \co J \to I$ such that $\phi f = 1$; then $u$
reduces to $a$, but $uf$ reduces to $wf.1\,(\glue\,[\phi f \mapsto t
f]\,a f)$ which is in general not syntactically equal to $af$.

We write $I \der A \sred B$ and $I \der u \sred v : A$ if the
respective reduction is closed under name substitutions.  That is, $I
\der A \sred B$ whenever $J \der A f \red B f$ for all $f \co J \to
I$.  Note that in the above definition, all the rules which do not
have a premise with a \emph{negated} equation in $\FF$ and which do
not have a premise referring to another reduction are closed under
substitution.

\section{Computability Predicates}
\label{sec:computability}

In this section we define computability predicates and establish the
properties we need for the proof of Soundness in the next section.  We
will define when a type is \emph{computable} or \emph{forced}, written
$I \lforce A$, when two types are forced equal, $I \lforce A = B$,
when an element is computable or forced, $I \lforce u : A$, and when
two elements are forced equal, $I \lforce u = v : A$.  Here $\lv$ is
the \emph{level} which is either $0$ or $1$, the former indicating
smallness.

The definition is given as follows: by main recursion on $\lv$ (that
is, we define ``$\force_0$'' before ``$\force_1$'') we define by
induction-recursion~\cite{Dybjer00}
\begin{align*}
  I &\lforce A
  \\
  I &\lforce A = B
  \\
  I &\lforce u : A && \text{by recursion on } I \lforce A
  \\
  I &\lforce u = v : A && \text{by recursion on } I \lforce A
\end{align*}
where the former two are mutually defined by induction, and the latter
two mutually by recursion on the derivation of $I \lforce A$.
Formally, $I \lforce A$ and $I \lforce A = B$ are witnessed by
derivations for which we don't introduce notations since the
definitions of $I \lforce u : A$ and $I \lforce u = v : A$ don't
depend on the derivation of $I \lforce A$.  Each such derivation has a
height as an ordinal, and often we will employ induction not only on
the structure of such a derivation but on its height.

Note that the arguments and definitions can be adapted to a hierarchy
of universes by allowing $\lv$ to range over a (strict) well-founded
poset.

%
We write $I \lforce A \peq B$ for the conjunction of $I \lforce A$, $I
\lforce B$, and $I \lforce A = B$.
For $\phi \in \FF (I)$ we write $f \co J \to I,\phi$ for $f \co J \to
I$ with $\phi f = 1$; furthermore we write
\begin{align*}
  I,\phi &\lforce A &&\text{for} &&\all f \co J \to I,\phi
  \, (J \lforce A f) \And I,\phi \der A,\\
  I,\phi &\lforce A = B &&\text{for} &&\all f \co J \to I,\phi
  \, (J \lforce A f = B f) \And I,\phi \der A = B,\\
  I,\phi &\lforce u : A &&\text{for} &&\all f \co J \to I,\phi \,
  (J \lforce u f : A f) \And I,\phi \der u : A,\\
  I,\phi &\lforce u = v: A &&\text{for} &&\all f \co J \to I,\phi \, (J
  \lforce u f = v f : A f) \And I,\phi \der u = v : A.
\end{align*}
where the last two abbreviations need suitable premises to make sense.
Note that $I,1 \lforce A$ is a priori stronger than $I \lforce A$;
that these notions are equivalent follows from the Monotonicity Lemma
below. Moreover, the definition is such that $I \der \J$ whenever $I
\lforce \J$ (where $\J$ is any judgment form); it is shown in
Remark~\ref{rem:restricted-contexts} that the condition $I,\phi \der
\J$ in the definition of $I,\phi \lforce \J$ is actually not needed
and follows from the other.
\\

\noindent\fbox{$I \lforce A$} assuming $I \der A$ (i.e., the
rules below all have a suppressed premise $I \der A$).
\begin{mathparpagebreakable}
  \nir{\rcn} { {} } {I \lforce \N} %
  \and
  \nir
  {\rcpi} { %
    I,1 \lforce A \\ %
    I, x : A \der B\\
    \all f \co J \to I \all u (J \lforce u : A f \Imp J \lforce B (f,
    \su x {u})) \\ %
    \all f \co J \to I \all u,v (J \lforce u = v : A f \Imp J \lforce B
    (f,\su x {u}) \peq B (f, \su x {v})) %
  } %
  {I \lforce (x : A) \to B} %
  \and
  \nir {\rcsi} { %
    I,1 \lforce A \\ %
    I, x : A \der B\\
    \all f \co J \to I \all u (J \lforce u : A f \Imp J \lforce B (f,
    \su x {u})) \\ %
    \all f \co J \to I \all u,v (J \lforce u = v : A f \Imp J \lforce
    B (f, \su x {u}) \peq B(f, \su x {v})) %
  } %
  {I \lforce (x : A) \times B} %
  \and
  \nir {\rcpa} {I,1 \lforce A \\ I \lforce a_0 : A \\ I \lforce a_1 : A}
  {I \lforce \Path \, A \, a_0 \, a_1} %
  \and
  \nir {\rcgl} { %
    1 \neq \phi \in \FF (I) \\
    I,1 \lforce A \\
    I, \phi \lforce \Equiv \, T\, A \\
    I, \phi \lforce w : \Equiv \, T\, A \\
    I, \phi \lforce \Glue \, [ \phi \mapsto (T, w)] \, A } %
  { I \lforce \Glue \, [ \phi \mapsto (T, w)] \, A}
  \and %
  \nir {\rcu} { %
  } %
  { I \force_1 \U}
  \and %
  \nir {\rcni} { %
    A \noti \\
    \all f \co J \to I (A f \hr \And J \lforce A f \dn) \\\\
    \all f \co J \to I\all g \co K \to J %
    (K \lforce A f \dn g = A f g\dn) } %
  {I \lforce A}
\end{mathparpagebreakable}
Note, that the rule $\rcgl$ above is not circular, as for any $f \co J
\to I,\phi$ we have $\phi f = 1$ and so $(\Glue \, [ \phi \mapsto
(T, w)] \, A)f$ is non-introduced.
\medskip

\noindent\fbox{$I \lforce A = B$} assuming $I \lforce A$, $I \lforce
B$, and $I \der A = B$ (i.e., each rule below has the suppressed
premises $I \lforce A$, $I \lforce B$, and $I \der A = B$).
\begin{mathparpagebreakable}
  \nir{\ren} { {} } {I \lforce \N = \N} %
  \and
  \nir {\repi} {I,1 \lforce A = A' \\ %
    I, x : A \der B = B'\\
    \all f \co J \to I \all u (J \lforce u : A f \Imp J \lforce B (f,
    \su x {u}) = B' (f, \su x {u})) %
  } %
  {I \lforce (x : A) \to B = (x : A') \to B'} %
  \and
  \nir
  {\resi}  {I,1 \lforce A = A' \\ %
    I, x : A \der B = B'\\
    \all f \co J \to I \all u (J \lforce u : A f \Imp J \lforce B (f,
    \su x {u}) = B' (f, \su x {u})) %
  } %
  {I \lforce (x : A) \times B = (x : A') \times B'} %
  \and
  \nir {\repa} {I,1 \lforce A = B \\ I \lforce a_0 = b_0 : A \\ I \lforce
    a_1 = b_1 : A}
  {I \lforce \Path \, A \, a_0 \, a_1 = \Path \, B \, b_0 \, b_1} %
  \and
  \nir {\regl} { %
    1 \neq \phi \in \FF (I) \\
    I,1 \lforce A = A' \\
    I, \phi \lforce \Equiv \, T\, A = \Equiv \, T' \, A'\\
    I, \phi \lforce w = w' : \Equiv \, T\, A \\
    I, \phi \lforce \Glue \, [ \phi \mapsto (T, w)] \, A = \Glue \, [
    \phi \mapsto (T', w')] \, A' } %
  { I \lforce \Glue \, [ \phi \mapsto (T, w)] \, A = \Glue \, [ \phi
    \mapsto (T', w')] \, A'} \and
  \nir {\reu} { %
  } %
  { I \force_1 \U = \U}
  \and %
  \nir {\reni} {A \text{ or } B \noti \\ \all f \co J \to I (J \lforce
    A f \dn = B f \dn)} %
  {I \lforce A = B}
\end{mathparpagebreakable}
\medskip

\noindent\fbox{$I \lforce u : A$ by induction on $I \lforce A$} assuming
$I \der u : A$.  We distinguish cases on the derivation of $I \lforce
A$.

\case\ \rcn.
\begin{mathpar}
  \inferrule { } {I \lforce 0 : \N} \and %
  \inferrule {I \lforce u : \N} {I \lforce \suc u : \N} \and %
  \inferrule { %
    u \noti \\
    \all f \co J \to I (u f \hr^\N \And J \lforce u f \dn^\N : \N) \\\\
    \all f\co J \to I \all g \co K \to J (K \lforce u f \dn^\N g = u
    fg \dn^\N: \N) } %
  { I \lforce u : \N}
\end{mathpar}

\case\ \rcpi.
\begin{mathpar}
  \inferrule { %
    \all f \co J \to I \all u (J \lforce u : A f \Imp J \lforce w f\,
    u : B (f,\su x u ))
    \\
    \all f \co J \to I \all u,v (J \lforce u = v : Af \Imp J \lforce w
    f \,u = w f \, v : B (f, \su x u )) } %
  { I \lforce w : (x : A) \to B}
\end{mathpar}

\case\ \rcsi.
\begin{mathpar}
  \inferrule {I \lforce u.1 : A \\ I \lforce u.2 : B \subst x {u.1}}%
  {I \lforce u : (x : A) \times B}
\end{mathpar}

\case\ \rcpa.
\begin{mathpar}
  \inferrule { %
    \all f \co J \to I \all r \in \II (J) (J \lforce u f \, r : A f) \\
    I \lforce u \, 0 = a_0 : A \\
    I \lforce u \, 1 = a_1 : A } %
  { I \lforce u : \Path \, A \, a_0 \, a_1}
\end{mathpar}

\case\ \rcgl.
\begin{mathpar}
  \inferrule { %
    I,\phi \lforce u : \Glue \, [ \phi \mapsto (T,w)]  \, A \\
    \all f \co J \to I \all w' (J,\phi f \force w' = wf :
    \Equiv\,Tf\,Af \Imp \qquad\qquad \\\\
    \qquad J \force \unglue\,[\phi f \mapsto w']\,uf =
    \unglue\,[\phi f \mapsto wf]\,uf : A f)
  }
  { I \lforce u : \Glue \, [ \phi \mapsto (T,w)] \, A}
\end{mathpar}
Later we will see that from the premises of \rcgl\ we get $I \force w
= w : \Equiv\,T\,A$, and the second premise above implies in
particular $I \force \unglue\,[\phi \mapsto w]\,u : A$; the
quantification over other possible equivalences is there to ensure
invariance for the annotation.

\case\ \rcu.
\begin{mathpar}
  \inferrule { %
    I \force_0 A
  } %
  { I \force_1 A : \U}
\end{mathpar}

\case\ \rcni.
\begin{mathpar}
  \inferrule { %
    \all f \co J \to I (J \lforce u f : A f \dn) } %
  { I \lforce u : A}
\end{mathpar}
\medskip

\noindent\fbox{$I \lforce u = v : A$ by induction on $I \lforce A$}
assuming $I \lforce u : A$, $I \lforce v : A$, and $I \der u = v :A$.
(I.e., each of the rules below has the suppressed premises $I \lforce
u : A$, $I \lforce v : A$, and $I \der u = v : A$, but they are not
arguments to the definition of the predicate.  This is subtle since
in, e.g., the rule for pairs we only know $I \lforce v.2 : B \subst x
{v.1}$ not $I \lforce v.2 : B \subst x {u.1}$.)  We distinguish cases
on the derivation of $I \lforce A$.

\case\ \rcn.
\begin{mathpar}
  \inferrule { } {I \lforce 0 = 0 : \N} \and %
  \inferrule {I \lforce u = v : \N} {I \lforce \suc u = \suc v : \N}
  \and %
  \inferrule { u \text { or } v \noti \\ 
    \all f  (J \lforce u f \dn^\N = v f \dn^\N : \N)} %
  { I \lforce u = v : \N}
\end{mathpar}

\case\ \rcpi.
\begin{mathpar}
  \inferrule { %
    \all f \co J \to I \all u (J \lforce u : A f \Imp J \lforce w f \,
    u = w' f \, u
    : B (f, \su x u )) \\
  } %
  { I \lforce w = w' : (x : A) \to B}
\end{mathpar}

\case\ \rcsi.
\begin{mathpar}
  \inferrule {I \lforce u.1 = v.1 : A \\ I \lforce u.2 = v.2 : B
    \subst x {u.1}} {I \lforce u = v : (x : A) \times B}
\end{mathpar}

\case\ \rcpa.
\begin{mathpar}
  \inferrule { %
    \all f \co J \to I \all r \in \II(J) (J \lforce u f \, r = v f \,
    r : A f) } %
  { I \lforce u = v : \Path \, A \, a_0 \, a_1}
\end{mathpar}

\case\ \rcgl.
\begin{mathpar}
  \inferrule { %
    I,\phi \lforce u = v : \Glue \, [ \phi \mapsto (T,w)] \, A\\
    I,1 \lforce \unglue\,[\phi \mapsto w]\,u = \unglue\,[\phi \mapsto
    w]\, v : A %
  } %
  { I \lforce u = v : \Glue \, [ \phi \mapsto (T,w)] \, A}
\end{mathpar}

\case\ \rcu.
\begin{mathpar}
  \inferrule { %
    I \force_0 A = B
  } %
  { I \force_1 A = B : \U}
\end{mathpar}

\case\ \rcni.
\begin{mathpar}
  \inferrule { %
    \all f \co J \to I (J \lforce u f = v f: A f \dn) } %
  { I \lforce u = v : A}
\end{mathpar}


Note that the definition is such that $I \lforce A = B$ implies $I
\lforce A$ and $I \lforce B$; and, likewise, $I \lforce u = v : A$
gives $I \lforce u : A$ and $I \lforce v :A$.


\begin{remark}
  \label{rem:computability-def}
  \leavevmode
  \begin{enumerate}
  \item In the rule \reni\ and the rule for $I \lforce u = v : \N$ in
    case $u$ or $v$ are non-introduced we suppressed the premise that
    the reference to ``$\dn$'' is actually well defined; it is easily
    seen that if $I \lforce A$, then $A \dn$ is well defined, and
    similarly for $I \lforce u : \N$, $u \dn^\N$ is well defined.
  \item It follows from the substitution lemma below that $I \lforce A$
    whenever $A$ is non-introduced and $I \der A \sred B$ with $I
    \lforce B$.  (Cf.\ also the Expansion Lemma below.)
  \item Note that once we also have proven transitivity, symmetry, and
    monotonicity, the last premise of \rcni\ in the definition of $I
    \lforce A$ (and similarly in the rule for non-introduced naturals)
    can be restated as $J \lforce A f \dn = A \dn f$ for all $f \co J
    \to I$.
  \end{enumerate}
\end{remark}

\begin{lemma}
  \label{lem:indep-of-derivation}
  The computability predicates are independent of the derivation,
  i.e., if we have two derivations trees $d_1$ and $d_2$ of $I \lforce
  A$, then
  \begin{align*}
    I \lforce^{d_1} u : A &\Iff I \lforce^{d_2} u : A, \text{ and}\\
    I \lforce^{d_1} u = v : A &\Iff I \lforce^{d_2} u = v : A
  \end{align*}
  where $\lforce^{d_i}$ refers to the predicate induced by $d_i$.
\end{lemma}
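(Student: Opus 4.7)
The plan is to proceed by a primary induction on the level $\lv$ and a secondary (simultaneous) induction on the ordinal heights of the two derivations $d_1$ and $d_2$. The first observation is that the last rule applied in a derivation of $I \lforce A$ is uniquely determined by the syntactic form of $A$: the rules $\rcn$, $\rcpi$, $\rcsi$, $\rcpa$, $\rcgl$ (and $\rcu$ at level $1$) apply only when $A$ has the corresponding introduced form, whereas $\rcni$ applies only when $A \noti$. Since these conditions are mutually exclusive, $d_1$ and $d_2$ must end in the same rule, so I can proceed by case analysis on this common last rule.

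For each case, I inspect the definition of $I \lforce^{d_i} u : A$ (and $I \lforce^{d_i} u = v : A$). In every case the defining clause is a universally quantified formula whose only reference to the derivation $d_i$ is through computability judgments at strictly smaller types or smaller level. For example, in the \rcpi\ case the clause reads $\all f \co J \to I \all u (J \lforce u : A f \Imp J \lforce w f \, u : B (f, \su x u))$; the premises $J \lforce u : Af$ and $J \lforce w f \, u : B(f, \su x u)$ use sub-derivations of $d_i$ of strictly smaller height (namely the sub-derivation of $I,1 \lforce A$ supplied by $d_i$ and the sub-derivation of $J \lforce B(f,\su x u)$ extracted from the third premise of \rcpi). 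By the inductive hypothesis these predicates do not depend on the choice of sub-derivation, and so the outer predicate $I \lforce^{d_i} w : (x:A) \to B$ is independent of $d_i$. The cases \rcn, \rcsi, \rcpa, and \rcgl\ are handled in exactly the same way, using the inductive hypothesis at the immediate sub-derivations (and, in the \rcgl\ case, at the sub-derivation of $I,1 \lforce A$ that is needed to make sense of $J \force \unglue\,[\phi f \mapsto w']\, uf = \unglue\,[\phi f \mapsto wf]\, uf : A f$).

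The \rcni\ case is essentially trivial: the defining clause $\all f \co J \to I (J \lforce u f : A f \dn)$ refers only to a computability predicate on $Af\dn$ of strictly smaller height (noting Remark~\ref{rem:computability-def}(1) for well-definedness of $\dn$), so the IH applies. The \rcu\ case, occurring only at $\lv = 1$, reduces directly to computability at level $0$ which has already been handled by the primary induction on $\lv$. The argument for $I \lforce^{d_i} u = v : A$ proceeds completely in parallel, case by case, invoking the IH on the same sub-derivations.

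The main subtlety to address carefully is the choice of the ordinal on which the secondary induction is carried out. A derivation of $I \lforce A$ ending in \rcni\ (or in one of the other rules, through the universally quantified premises over all $f \co J \to I$) contains a proper class—bounded by a set-sized family—of sub-derivations indexed by substitutions; one must therefore take the height of a derivation to be the ordinal supremum of $1 +$ heights of all such sub-derivations (as already flagged in the preamble to this lemma). With that convention the secondary induction is well-founded and every appeal to the IH in the cases above is to a strictly smaller ordinal, so the proof goes through.
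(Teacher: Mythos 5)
Your proposal is correct and follows essentially the same route as the paper's (very terse) proof: main induction on the level $\lv$, side induction on the two derivations, the observation that the rules for $I \lforce A$ are syntax-directed so $d_1$ and $d_2$ end in the same rule, and an appeal to the inductive hypothesis on the sub-derivations. Your additional care about measuring derivations by ordinal heights (suprema over the substitution-indexed families of sub-derivations) matches the convention the paper already fixes before the lemma.
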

\begin{proof}
  By main induction on $\lv$ and a side induction on the derivations
  $d_1$ and $d_2$.  Since the definition of $I \lforce A$ is syntax
  directed both $d_1$ and $d_2$ are derived by the same rule.  The
  claim thus follows from the \IH.
\end{proof}

\begin{lemma}
  \label{lem:cpred-complete}
  \leavevmode
  \begin{enumerate}
  \item If $I \lforce A$, then $I \der A$ and:
    \begin{enumerate}
    \item $I \lforce u : A \Imp I \der u : A$,
    \item $I \lforce u = v : A \Imp I \der u = v : A$.
    \end{enumerate}
  \item If $I \lforce A = B$, then $I \der A = B$.
  \end{enumerate}
\end{lemma}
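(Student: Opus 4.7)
The plan is to read the four implications directly off the definitions of the computability predicates. Each of the four judgment forms $I \lforce A$, $I \lforce A = B$, $I \lforce u : A$, and $I \lforce u = v : A$ was introduced with its corresponding raw judgment as a suppressed premise: namely $I \der A$ for $I \lforce A$, $I \der A = B$ for $I \lforce A = B$, $I \der u : A$ for $I \lforce u : A$, and $I \der u = v : A$ for $I \lforce u = v : A$. Consequently each of the four conclusions in the lemma is literally the suppressed premise of the matching computability rule; the lemma essentially records, as an explicit statement, a convention already built into the definitions.

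Formally I would proceed by induction on the height of the derivation of $I \lforce A$ (for parts (1), (1a), (1b)) and on the derivation of $I \lforce A = B$ (for part~(2)), handling each clause ($\rcn$, $\rcpi$, $\rcsi$, $\rcpa$, $\rcgl$, $\rcu$, $\rcni$) in turn. For part~(1) the suppressed premise $I \der A$ of the applied rule is precisely what is needed. For parts (1a) and (1b) one additionally inspects the case analysis of $I \lforce u : A$ and $I \lforce u = v : A$ (organized by induction on the derivation of $I \lforce A$) and reads off the suppressed $I \der u : A$ and $I \der u = v : A$ premises. For part~(2) the suppressed premise $I \der A = B$ of each clause of $I \lforce A = B$ gives the conclusion directly. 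In no case is the induction hypothesis actually invoked.

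There is thus no substantive obstacle. The only care required is bookkeeping: one has to check uniformity across all rules---in particular the higher-order clauses $\rcgl$ and $\rcni$ that quantify over name substitutions, as well as the atomic cases like $I \lforce 0 : \N$ and $I \force_1 \U$---to confirm that every clause is indeed asserted only under the appropriate suppressed judgmental premise. Once this is observed, the four implications of the lemma follow immediately.
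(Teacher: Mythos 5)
Your proposal is correct and matches the paper's treatment: the lemma is immediate from the suppressed judgmental premises ($I \der A$, $I \der A = B$, $I \der u : A$, $I \der u = v : A$) built into every clause of the definitions, which is why the paper states it without proof and remarks earlier that ``the definition is such that $I \der \J$ whenever $I \lforce \J$''. Your observation that the induction is pure bookkeeping and the induction hypothesis is never actually needed is exactly right.
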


\begin{lemma}
  \label{lem:lvl-cumulative}
  \leavevmode
  \begin{enumerate}
  \item If $I \force_0 A$, then:
    \begin{enumerate}
    \item $I \force_1 A$
    \item $I \force_0 u : A \Iff I \force_1 u : A$
    \item $I \force_0 u = v : A \Iff I \force_1 u = v : A$
    \end{enumerate}
  \item If $I \force_0 A = B$, then $I \force_1 A = B$.
  \end{enumerate}
\end{lemma}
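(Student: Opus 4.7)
The plan is to prove all four assertions simultaneously by induction on the derivation of $I \force_0 A$ (for (1)) and of $I \force_0 A = B$ (for (2)). The crucial observation is that the rules \rcu\ and \reu\ are the only ones in the entire definition of the computability predicates that explicitly mention the level, and both of them introduce judgments at level~$1$; hence no derivation at level $0$ can use them. Consequently, every rule appearing in a level-$0$ derivation makes equally good sense at level~$1$, with the very same premises.

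For (1)(a) and (2), I would case-analyze the last rule of the given level-$0$ derivation. In each case, the side premises are of the forms $I,1 \force_0 C$, $I,\phi \force_0 C$, $I,\phi \force_0 C = C'$, or $I,\phi \force_0 w : \Equiv\,T\,A$, possibly together with premises about reducts in the \rcni/\reni\ cases; unfolding the abbreviations for restricted contexts these amount to families of level-$0$ judgments of strictly smaller derivation rank (substitution does not increase the rank). Applying the IH to each premise yields the corresponding level-$1$ premises, and the same inference rule then gives $I \force_1 A$ (respectively $I \force_1 A = B$).

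For (1)(b) and (1)(c), recall that $I \lforce u : A$ and $I \lforce u = v : A$ are defined by recursion on the derivation of $I \lforce A$. Let $d$ be the given derivation of $I \force_0 A$ and let $d'$ be the level-$1$ derivation produced as above; by Lemma~\ref{lem:indep-of-derivation} it suffices to show that the predicates induced by $d$ and by $d'$ coincide. The defining clause for $I \lforce u : A$ in each case of the definition references only further forcing judgments at subsidiary types---e.g.\ $J \lforce u : Af$ in \rcpi, $J \lforce u f \dn^\N : \N$ in the non-introduced clause of \rcn, or $J,\phi f \force w' = wf : \Equiv\,Tf\,Af$ in \rcgl---and each such judgment is governed by a strictly smaller subderivation of $d$. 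The IH, invoking clauses (1)(b) or (1)(c) on those subderivations, immediately supplies the required biconditional term-by-term, and the overall biconditionals for $u$ and for $u = v$ follow.

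The main obstacle is really just the bookkeeping: one must be precise about the termination measure, so that the appeals inside the clauses of \rcni\ and \rcgl---which quantify over all $f \co J \to I$ (with $\phi f = 1$ in the latter case)---genuinely refer to strictly smaller derivations. Once this is set up, the argument collapses into a routine rule-by-rule verification, driven entirely by the absence of \rcu\ and \reu\ in any level-$0$ derivation.
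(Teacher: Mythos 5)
Your proposal is correct and follows essentially the same route as the paper, whose proof is exactly the simultaneous induction on $I \force_0 A$ and $I \force_0 A = B$ that you carry out; your observations that \rcu\ and \reu\ cannot occur in a level-$0$ derivation and that Lemma~\ref{lem:indep-of-derivation} lets you compare the predicates induced by the original derivation and its level-$1$ replay are precisely the details the paper leaves implicit.
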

\begin{proof}
  By simultaneous induction on $I \force_0 A$ and $I \force_0 A = B$.
\end{proof}

We will write $I \force A$ if there is a derivation of $I \lforce A$
for some $\lv$; etc.  Such derivations will be ordered
lexicographically, i.e., $I \force_0 A$ derivations are ordered before
$I \force_1 A$ derivations.

\begin{lemma}
  \label{lem:per-field}
  \leavevmode
  \begin{enumerate}
  \item\label{item:typ-refl} $I \lforce A \Imp I \lforce A = A$
  \item\label{item:ter-refl} $I \lforce A \And I \lforce u : A \Imp I
    \lforce u = u : A$
  \end{enumerate}
\end{lemma}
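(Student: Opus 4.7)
The plan is to prove (1) and (2) together by the primary induction on the level $\lv$, and within each level by induction on the derivation of $I \lforce A$; part (2) will additionally need a side induction on the structure of $I \lforce u : A$ in the \rcn\ case. Part (1) will be needed as a building block for part (2), so I will carry it out first.

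For (1), I will do a case split on the last rule in the derivation of $I \lforce A$. In each case, the corresponding equality rule has essentially the same premises as the introduction rule, modulo replacing $A$ with $A = A$ and $B$ with $B = B$, so the work is just feeding the premises of $I \lforce A$ to the inductive hypothesis and assembling the result. For instance, in the \rcpi\ case the premise $I,1 \lforce A$ yields $I,1 \lforce A = A$ by IH, and for each $f \co J \to I$ and $u$ with $J \lforce u : Af$ the premise $J \lforce B(f, \su x u)$ yields $J \lforce B(f, \su x u) = B(f, \su x u)$ by IH, so \repi\ applies. The cases \rcn, \rcsi, \rcpa, \rcu\ are analogous. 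For \rcni\ I apply the IH to each forced reduct $Af\dn$ and conclude via \reni. For \rcgl\ I apply the IH to the premises $I,1 \lforce A$ and $I,\phi \lforce \Equiv\,T\,A$ (giving self-equalities for $A$ and for $w$ using (2) on the latter), together with \rcgl-reflexivity of $\Glue$ itself at $I,\phi$, and assemble via \regl.

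For (2), I again case-split on the derivation of $I \lforce A$. The case \rcn\ requires a sub-induction on the structure of $I \lforce u : \N$: the $0$ and $\suc$ subcases are immediate from the corresponding equality clauses, and in the non-introduced subcase the coherence premise tells me that each $uf\dn^\N$ is forced in $\N$ with a strictly smaller derivation, so the sub-IH yields $J \lforce uf\dn^\N = uf\dn^\N : \N$ for all $f$, as required. In the \rcpi\ case, I unfold $I \lforce w : (x : A) \to B$ to get $J \lforce wf\,u : B(f, \su x u)$ for each $f \co J \to I$ and $J \lforce u : Af$; the outer IH applied at the smaller derivation $J \lforce B(f, \su x u)$ gives $J \lforce wf\,u = wf\,u : B(f, \su x u)$, and then the rule for $I \lforce w = w : (x : A) \to B$ applies. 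The cases \rcsi, \rcpa, \rcu, \rcni\ are handled similarly by unfolding and appealing to the outer IH on the appropriate subderivation (and, for \rcu, to the IH at the strictly smaller level $\lv = 0$).

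The main obstacle will be the \rcgl\ case of (2). Here $I \lforce u : \Glue\,[\phi \mapsto (T,w)]\,A$ gives me two pieces of data: $I,\phi \lforce u : \Glue\,[\phi \mapsto (T,w)]\,A$, and the coherence that $\unglue\,[\phi f \mapsto w']\,uf$ is equal to $\unglue\,[\phi f \mapsto wf]\,uf$ at $Af$ for every $f \co J \to I$ and every $w'$ forced equal to $wf$ over $J,\phi f$. For the first premise of the equality rule, I apply the outer IH pointwise at each $J \lforce uf : (\Glue\,[\phi \mapsto (T,w)]\,A)f$ (which is fine since $\phi f = 1$ makes the reduct available through a derivation strictly smaller than that of $I \lforce \Glue\,[\phi \mapsto (T,w)]\,A$), and reassemble into $I,\phi \lforce u = u$. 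For the second premise, I use the coherence clause with $w' = wf$, which requires $J,\phi f \force wf = wf : \Equiv\,Tf\,Af$; this I obtain from part (2) applied to the subderivation $I,\phi \lforce w : \Equiv\,T\,A$, which is strictly smaller than the derivation of $I \lforce \Glue\,[\phi \mapsto (T,w)]\,A$, hence available by IH. This produces $I,1 \lforce \unglue\,[\phi \mapsto w]\,u = \unglue\,[\phi \mapsto w]\,u : A$, and together with the first premise the rule for $I \lforce u = u : \Glue\,[\phi \mapsto (T,w)]\,A$ applies, completing the case.
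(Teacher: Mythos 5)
Your plan is correct and follows essentially the same route as the paper's proof: simultaneous induction on the level $\lv$ with a side induction on the derivation of $I \lforce A$, and the key observation in the $\rcgl$ case for~(2) — instantiate the quantifier in the coherence clause of $I \lforce u : \Glue\,[\phi \mapsto (T,w)]\,A$ with $w' = wf$, discharging the hypothesis $J,\phi f \force wf = wf : \Equiv\,Tf\,Af$ via the induction hypothesis applied to the (shorter) subderivations $I,\phi \lforce \Equiv\,T\,A$ and $I,\phi \lforce w : \Equiv\,T\,A$ — is exactly what the paper does. The only minor slip is in the $\rcn$ case of~(2), where you attribute the fact that $J \lforce uf\dn^\N : \N$ is a strictly smaller derivation to ``the coherence premise''; this actually comes from the reduct premise $\all f (uf \hr^\N \And J \lforce uf\dn^\N : \N)$ of the non-introduced rule, not the coherence premise $\all f, g (K \lforce uf\dn\, g = ufg\dn : \N)$, but the inference is sound.
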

\begin{proof}
  Simultaneously, by induction on $\lv$ and side induction on $I
  \lforce A$.  In the case \rcgl, to see~\eqref{item:ter-refl}, note
  that from the assumption $I \force u : B$ with $B$ being
  $\Glue\,[\phi\mapsto (T,w)]\,A$ we get in particular
  \[
  I,\phi \force w = w : \Equiv\,T\,A \Imp I \force \unglue\,[\phi
  \mapsto w]\,u = \unglue\,[\phi \mapsto w]\,u : A.
  \]
  But by \IH, the premise follows from $I,\phi \force w :
  \Equiv\,T\,A$; moreover, $I,\phi \force u = u : B$ is immediate by
  \IH, showing $I \force u = u : B$.
\end{proof}

\begin{lemma}[Monotonicity/Substitution]
  \label{lem:subst}
  For $f \co J \to I$ we have
  \begin{enumerate}
  \item\label{item:subst-typ} $I \lforce A \Imp J \lforce A f$,
  \item $I \lforce A = B \Imp J \lforce A f = B f$,
  \item $I \lforce A \And I \lforce u : A \Imp J \lforce u f : A f$,
  \item $I \lforce A \And I \lforce u = v : A \Imp J \lforce u f = v f :
    A f$.
  \end{enumerate}
  Moreover, the respective heights of the derivations don't
  increase.
\end{lemma}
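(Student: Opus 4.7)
The plan is to prove all four statements simultaneously by main induction on $\ell$ and a side induction on the derivation of $I \lforce A$ (respectively $I \lforce A = B$ for item~(2)); since $I \lforce u : A$ and $I \lforce u = v : A$ are defined by recursion on the derivation of $I \lforce A$, a single case analysis on that derivation delivers items~(1), (3), and~(4), while items~(1) and~(2) are mutually inductive and must be treated together. The key observation is that the clauses $\rcn$, $\rcpi$, $\rcsi$, $\rcpa$, $\rcu$, $\rcni$, and their equality and term-level variants are already phrased as universal quantifications over \emph{all} substitutions $g : K \to I$ from arbitrary name contexts into $I$. Substituting along $f : J \to I$ therefore amounts to precomposing these quantified substitutions with $f$, and the desired derivations of $J \lforce Af$, $J \lforce Af = Bf$, $J \lforce uf : Af$, and $J \lforce uf = vf : Af$ are obtained by re-using the original premise derivations, which keeps the height bounded by that of the input.

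For $\rcni$ it matters that non-introducedness is preserved by name substitutions, which holds because such substitutions leave the outermost constructor intact; the reduction-coherence premises then transfer verbatim under composition of substitutions, and similarly for the non-introduced clause of $\rcn$ at the term level.

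The main obstacle is the $\rcgl$ clause, whose side condition $I \der \phi \neq 1 : \FF$ can fail after substitution. I would split into two subcases. If $J \der \phi f \neq 1 : \FF$, then the Glue form is preserved and a fresh application of $\rcgl$ closes the goal: each of its premises follows from the corresponding premise of the original derivation precomposed with $f$, because they are all of the form $I, 1 \lforce \cdot$ or $I, \phi \lforce \cdot$, quantifying over substitutions into $I$ or $I, \phi$, and composing with $f$ gives substitutions into $J$ or $J, \phi f$. If $J \der \phi f = 1 : \FF$, then $f$ is itself a substitution $J \to I, \phi$; the last premise $I, \phi \lforce \Glue \, [\phi \mapsto (T,w)] \, A$ therefore directly yields $J \lforce (\Glue \, [\phi \mapsto (T,w)] \, A) f$, and by Lemma~\ref{lem:indep-of-derivation} we may use this derivation in subsequent arguments. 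The analogous dichotomy handles the term-level Glue clause: when $\phi f = 1$ the required $J \lforce uf : (\Glue \cdots) f$ is supplied by the first premise $I, \phi \lforce u : \Glue \cdots$ evaluated at $f$, while when $\phi f \neq 1$ both premises are stable under restriction along $f$. In both subcases the height is preserved: in the first we reassemble $\rcgl$ from the same subderivations with restricted quantification ranges, and in the second we replace a rule application by one of its strictly smaller premise derivations.
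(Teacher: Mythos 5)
Your proposal is correct and follows essentially the same route as the paper: an induction on the level with a side induction on the derivations of $I \lforce A$ and $I \lforce A = B$, observing that the clauses were set up to be stable under precomposition with substitutions, with the only delicate case being \rcgl, which the paper handles by exactly your dichotomy --- if $\phi f = 1$ use the premise $I,\phi \lforce \Glue\,[\phi \mapsto (T,w)]\,A$ (and its term-level analogue), and if $\phi f \neq 1$ reapply the rule to the restricted premises. Your remarks on preservation of non-introducedness, the use of Lemma~\ref{lem:indep-of-derivation}, and why the heights do not increase fill in details the paper leaves implicit, and they are accurate.
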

\begin{proof}
  By induction on $\lv$ and side induction on $I \lforce A$ and $I
  \lforce A=B$.  The definition of computability predicates and
  relations is lead such that this proof is immediate.  For instance,
  note for~\eqref{item:subst-typ} in the case $\rcgl$, i.e.,
  \begin{equation*}
    \nir {\rcgl} { %
      1 \neq \phi \in \FF (I) \\
      I,1 \lforce A \\
      I, \phi \lforce \Equiv \, T\, A \\
      I, \phi \lforce w : \Equiv \, T\, A \\
      I, \phi \lforce \Glue \, [ \phi \mapsto (T, w)] \, A } %
    { I \lforce \Glue \, [ \phi \mapsto (T, w)] \, A}
  \end{equation*}
  we distinguish cases: if $\phi f = 1$, then $J \lforce \Glue \, [
  \phi f \mapsto (T f, w f)] \, A f$ by the premise $ I, \phi \lforce
  \Glue \, [ \phi \mapsto (T, w)] \, A$; in case $\phi f \neq 1$ we
  can use the same rule again.
\end{proof}

\begin{lemma}
  \label{lem:red-closure}
  \leavevmode
  \begin{enumerate}
  \item\label{item:7} $I \force A \Imp I \force A \dn$
  \item\label{item:12} $I \force A = B\Imp I \force A \dn = B \dn$
  \item\label{item:13} $I \force A \And I \force u : A \Imp I
    \force u : A \dn$
  \item\label{item:14} $I \force A \And I \force u = v : A \Imp I
    \force u = v: A \dn$
  \item\label{item:11} $I \force u : \N \Imp I \force u \dn : \N$
  \item\label{item:16} $I \force u = v : \N \Imp I \force u \dn = v
    \dn: \N$
  \end{enumerate}
  Moreover, the respective heights of the derivations don't increase.
\end{lemma}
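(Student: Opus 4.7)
My plan is to prove all six statements uniformly by case analysis on the last rule of the relevant derivation, with an outer induction on $\lv$ to handle $\rcu$ (which reduces $\force_1$-claims at $\U$ to $\force_0$-claims). The guiding observation is that $\rcni$, $\reni$, and the non-introduced clauses in the definitions at $\N$ all have premises universally quantified over substitutions $f \co J \to I$, and specializing any of them to $f = \id_I$ produces the desired ``$\dn$''-form conclusion as a direct subderivation of the input hypothesis.

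For (1), case on the derivation of $I \lforce A$. If the last rule is any of $\rcn, \rcpi, \rcsi, \rcpa, \rcgl, \rcu$, then $A$ is $\Gamma$-introduced, hence normal, so $A \dn = A$ and the hypothesis is already the conclusion. If the last rule is $\rcni$, then instantiate its second premise at $f = \id_I$ to obtain both $A \hr$ and $I \lforce A \dn$. Item (2) is analogous, but casing on the derivation of $I \lforce A = B$: the rules $\ren, \repi, \resi, \repa, \regl, \reu$ force both $A$ and $B$ to be introduced, so $A \dn = A$ and $B \dn = B$ and there is nothing to do; in the $\reni$ case, taking $f = \id_I$ in the premise yields $I \lforce A \dn = B \dn$ directly, while the suppressed side conditions $I \lforce A \dn$ and $I \lforce B \dn$ are supplied by applying (1) to the implicit premises $I \lforce A$ and $I \lforce B$ of $I \lforce A = B$.

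For (3) and (4), case on the derivation of $I \lforce A$, since this is what the definitions of $I \lforce u : A$ and $I \lforce u = v : A$ recurse over. In every introduction case we have $A \dn = A$ and nothing is required. In the $\rcni$ case the defining clause reads $\all f \co J \to I\, (J \lforce u f : A f \dn)$, and $f = \id_I$ delivers $I \lforce u : A \dn$ (and analogously $I \lforce u = v : A \dn$ for (4)). For (5) and (6), case on the derivation of $I \lforce u : \N$ (respectively the equational version). When $u = 0$ or $u = \suc u'$ the reduct $u \dn^{\N}$ equals $u$; in the non-introduced case the premise $\all f\, (u f \hr^\N \wedge J \lforce u f \dn^\N : \N)$ at $f = \id_I$ finishes the job, and (6) is handled the same way from the corresponding equational clause.

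The ``heights don't increase'' proviso is immediate: in every introduced case we return the input derivation verbatim, and in every non-introduced case the output derivation is literally an instance of a premise of the input rule. No step is a genuine obstacle; the main content of the proof is simply that the computability predicates were set up with these ``$\dn$''-clauses universally quantified over $f$ precisely so that this lemma falls out by reading off the $f = \id_I$ instance.
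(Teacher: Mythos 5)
Your proposal is correct and follows essentially the same route as the paper: case analysis on the derivation, with the introduced cases trivial because $A\dn$ (resp.\ $u\dn$) is $A$ (resp.\ $u$), and the non-introduced cases obtained by reading off the universally quantified premise of \rcni, \reni, or the corresponding $\N$-clauses at $f=\id$, which also gives the non-increase of heights. The extra outer induction on $\lv$ for \rcu\ is unnecessary (there $\U$ is introduced, so nothing is required) but harmless.
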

\begin{proof}
  \eqref{item:7} By induction on $I \force A$. All
  cases were $A$ is an introduction are immediate since then $A \dn$
  is $A$.  It only remains the case \rcni:
  \[
  \nir {\rcni} { %
    A \noti \\
    \all f \co J \to I (A f \hr \And J \force A f \dn) \\
    \all f \co J \to I\all g \co K \to J (K \force A f \dn g = A f
    g\dn) } %
  {I \force A}
  \]
  We have $I \force A \dn$ as this is one of the premises.

  \eqref{item:11}  By induction on $I
  \force u : \N$ similarly to the last paragraph.

  \eqref{item:12} By induction on $I \force A = B$.  The only case
  where a reduct may happen is \reni, in which $I \force A \dn = B
  \dn$ is a premise.  Similar for \eqref{item:16}.

  \eqref{item:13} and \eqref{item:14}: By induction on $I \force A$,
  where the only interesting case is \rcni, in which what we have to
  show holds by definition.
\end{proof}

\begin{lemma}
  \label{lem:eq-resp-trans}
  \leavevmode
  \begin{enumerate}
  \item\label{item:eq-resp}
    If $I \force A = B$, then
    \begin{enumerate}
    \item\label{item:eq1} $I \force u : A \Iff I \force u : B$, and
    \item\label{item:eq2} $I \force u = v : A \Iff I \force u = v : B$.
    \end{enumerate}
  \item\label{item:typ-trans} $I \force A = B \And I \force B = C \Imp
    I \force A = C$
  \item\label{item:ter-trans}  Given $I \force A$ we get
    \[
    I \force u = v : A \And I \force v = w : A \Imp I \force u = w :
    A.
    \]
  \item\label{item:typ-sym} $I \force A = B \Imp I \force B = A$
  \item\label{item:ter-sym} $ I \force A \And I \force u = v : A \Imp
    I \force v = u : A$
  \end{enumerate}
\end{lemma}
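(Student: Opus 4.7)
The plan is to prove all five items simultaneously by main induction on the level $\lv$, and within each level by a side induction on the derivations involved: on $I \lforce A$ for items~\eqref{item:eq-resp}, \eqref{item:ter-trans}, and~\eqref{item:ter-sym}, and jointly on $I \lforce A = B$ (and $I \lforce B = C$ for transitivity) for items~\eqref{item:typ-trans} and~\eqref{item:typ-sym}. Lemma~\ref{lem:indep-of-derivation} lets me choose derivations freely, and Lemma~\ref{lem:subst} ensures that the substitution instances invoked inside each rule require derivations of no larger height, so the side induction is well-founded.

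For item~\eqref{item:eq-resp} I would case-split on the derivation of $I \force A = B$. The definition is syntax-directed, so it pairs $A$ and $B$ with matching outer shape and componentwise equalities. In each introduced case ($\repi$, $\resi$, $\repa$, $\regl$) the defining clauses of $I \force u : A$ and $I \force u : B$ coincide up to the pointwise equalities of components, which are handled by the \IH\ applied to the (strictly smaller) derivations of those components; the $\regl$ case uses crucially the quantification over equivalence witnesses $w'$ built into the rule for $I \force u : \Glue \, [\phi \mapsto (T,w)] \, A$, which is precisely what makes membership depend only on the equivalence class of $w$. In the $\reni$ case, Lemma~\ref{lem:red-closure} together with the \IH\ applied to $A f \dn$ vs.\ $B f \dn$ yields the desired biconditional. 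Item~\eqref{item:eq2} is analogous, using that $I \force u = v : A$ entails both $I \force u : A$ and $I \force v : A$.

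For transitivity~\eqref{item:typ-trans} I induct jointly on the two given derivations. Since both concern the same $B$, they must agree on its outer shape: either both are derived by $\reni$, or both by the same introduction rule forced by the shape of $B$. In the introduction cases, the premises required for $I \force A = C$ follow by componentwise \IH. In the $\reni$ case, Lemma~\ref{lem:red-closure} yields $I \force A \dn = B \dn$ and $I \force B \dn = C \dn$, and the \IH\ on reducts concludes; note that at least one of $A, C$ must be non-introduced (since $B$ is), so $\reni$ indeed produces the conclusion. Term transitivity~\eqref{item:ter-trans} and the two symmetry statements~\eqref{item:typ-sym} and~\eqref{item:ter-sym} then follow by induction on $I \force A$, using the already established item~\eqref{item:eq-resp} to recast premises stated at a ``wrong'' but forced-equal type -- for instance, in $\rcsi$, chaining $I \force u.2 = v.2 : B \subst x {u.1}$ with $I \force v.2 = w.2 : B \subst x {v.1}$ requires equality respect together with $I \force B \subst x {u.1} = B \subst x {v.1}$ obtained from the $\rcsi$ hypothesis applied to $I \force u.1 = v.1 : A$.

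I expect the main obstacle to be the $\rcgl$ case throughout: the computability predicate for Glue types carries an auxiliary clause about $\unglue$ with an equivalence-witness annotation, and this clause must be threaded coherently through equality respect, transitivity, and symmetry, using Lemma~\ref{lem:per-field} and the computability of $w$ under the various substitutions $f$ with $\phi f = 1$. A secondary technical care is to verify that every appeal to the \IH\ is strictly smaller in the lexicographic order $(\lv, \text{derivation height})$; in particular the $\rcu$ case at level $\lv = 1$ drops to the level-$0$ predicates on $A$ and $B$ and so is handled by the outer induction on $\lv$.
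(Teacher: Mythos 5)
Your overall plan---main induction on the level $\lv$, a lexicographic side induction on derivation heights using Lemma~\ref{lem:subst} and Lemma~\ref{lem:red-closure} to keep heights under control, and flagging the $\rcgl$ case as the subtle one---matches the paper's proof. However, your case analysis for transitivity~\eqref{item:typ-trans} does not go through as stated. You claim that because $I \force A = B$ and $I \force B = C$ both ``concern the same $B$'' they must be derived by the same rule (both by $\reni$, or both by the same introduction rule). This is false: which rule derives $I \force A = B$ depends on whether $A$, not only $B$, is introduced. For instance, if $A$ is a non-introduced type (say a composition in $\U$) that reduces to $\N$ while $B = C = \N$, then $I \force A = B$ is derived by $\reni$ whereas $I \force B = C$ is derived by $\ren$. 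Likewise, the parenthetical ``at least one of $A, C$ must be non-introduced (since $B$ is)'' does not hold even in the subcase where both derivations are by $\reni$: one may have $B$ non-introduced with $A$ and $C$ both introduced. The paper's decomposition is cleaner and actually covers all cases: first handle uniformly the situation in which at least one of $A$, $B$, $C$ is non-introduced by proving $J \force A f\dn = C f\dn$ for \emph{every} $f \co J \to I$ (this is what $\reni$ requires when $A$ or $C$ is non-introduced, and when both are introduced the instance $f = \id$ already yields $I \force A = C$); only then restrict to the case where all three are introduced, where the two derivations necessarily share a rule.

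Two smaller points in the same vein. Concluding merely $I \force A\dn = C\dn$ (as your sketch does in the $\reni$ case) is insufficient when $A$ or $C$ is non-introduced, since $\reni$ asks for the universally quantified family over all $f$; and in eq-resp~\eqref{item:eq1} under $\reni$ you should distinguish whether $B$ is introduced, since the unfolding of $I \force u : B$ is different in the non-introduced case (the $\rcni$ clause again quantifies over all $f \co J \to I$). Finally, for term transitivity~\eqref{item:ter-trans} the $\rcn$ case requires an additional side induction on the sum of heights of the two derivations $I \force u = v : \N$ and $I \force v = w : \N$, which your sketch does not mention; you do correctly handle $\rcu$ by the outer induction on $\lv$.
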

\begin{proof}
  We prove the statement for ``$\lforce$'' instead of ``$\force$'' by
  main induction on $\lv$ (i.e., we prove the statement for
  ``$\force_0$'' before the statement for ``$\force_1$''); the
  statement for ``$\force$'' follows then from
  Lemma~\ref{lem:lvl-cumulative}.

  Simultaneously by threefold induction on $I \lforce A$, $I \lforce
  B$, and $I \lforce C$. (Alternatively by induction on the (natural)
  sum of the heights of $I \lforce A$, $I \lforce B$, and $I \lforce
  C$; we only need to be able to apply the \IH\ if the complexity of
  at least one derivation decreases and the others won't increase.)
  In the proof below we will omit $\lv$ to simplify notation, except
  in cases where the level matters.

  \eqref{item:eq-resp} By distinguishing cases on $I \force A = B$.
  We only give the argument for \eqref{item:eq1} as \eqref{item:eq2}
  is very similar except in case \regl.  The cases \ren\ and \reu\ are
  trivial.

  \case\ \repi.  Let $I \force w : (x :A) \to B$ and we show $I \force
  w : (x :A') \to B'$.  For $f \co J \to I$ let $J \force u : A' f$;
  then by \IH\ (since $J \force A f = A' f$) we get $J \force u : A
  f$, and thus $J \force w f \, u : B (f, \su x u )$; again by \IH\ we
  obtain $J \force w f \,u : B' (f, \su x u )$.  Now assume $J \force
  u = v: A' f$; so by \IH, $J \force u = v: A f$, and thus $J \force w
  f \, u = w f \, v : B (f, \su x u )$.  Again by \IH, we conclude $J
  \force w f \, u = w f \, v : B' (f, \su x u )$.  Thus we have proved $I
  \force w : (x :A') \to B'$.

  \case\ \resi.  Let $I \force w : (x :A) \times B$ and we show
  $I \force w : (x :A') \times B'$.  We have $I \force w.1 : A$ and
  $I \force w.2 : B \subst {x} {w.1}$.  So by \IH,
  $I \force w.1 : A'$; moreover, we have $I \force B \subst {x} {w.1}
  = B' \subst {x} {w.1}$; so, again by \IH, we conclude with $I \force
  w.2 : B' \subst {x} {w.1}$.

  \case\ \repa.  Let $I \force u : \Path\,A\,a_0\,a_1$ and we show $I
  \force u : \Path\,B\,b_0\,b_1$.  Given $f \co J \to I$ and $r \in
  \FF (J)$ we have $J \force u f \, r : A f$ and thus $J \force u f \,
  r : B f$ by \IH.  We have to check that the endpoints match: $I
  \force u \, 0 = a_0 : A$ by assumption; moreover, $I \force a_0 =
  b_0 : A$, so by \IH~\eqref{item:ter-trans}, $I \force u \, 0 = b_0 :
  A$, thus again using the \IH, $I \force u \, 0 = b_0 : B$.

  \case\ \regl.  Abbreviate $\Glue \, [ \phi \mapsto (T, w)] \, A$ by
  $D$, and $\Glue \, [ \phi \mapsto (T', w')] \, A'$ by~$D'$.

  \eqref{item:eq1} Let $I \force u : D$, i.e., $I,\phi \force u : D$ and
  \begin{equation}
    \label{eq:38}
    J \force \unglue\,[\phi f \mapsto w'']\,uf = %
    \unglue\,[\phi f \mapsto wf]\,uf : A f
  \end{equation}
  whenever $f \co J \to I$ and $J,\phi f \force w'' = wf :
  \Equiv\,Tf\,Af$.  Directly by \IH\ we obtain $I,\phi \force u : D'$.
  Now let $f \co J \to I$ and $J,\phi f \force w''=w'f :
  \Equiv\,T'f\,A'f$; by \IH, also $J,\phi f \force w''=w'f :
  \Equiv\,Tf\,Af$.  Moreover, we have $J,\phi f \force w f = w' f :
  \Equiv\,Tf\,Af$, hence \eqref{eq:38} gives (together with symmetry
  and transitivity, applicable by \IH)
  \begin{align*}
    J &\force \unglue\,[\phi f \mapsto w'']\,uf = %
    \unglue\,[\phi f \mapsto wf]\,uf : A f, \text{ and}\\
    J &\force \unglue\,[\phi f \mapsto w' f]\,uf = %
    \unglue\,[\phi f \mapsto wf]\,uf : A f.
  \end{align*}
  Hence, transitivity and symmetry (which we can apply by \IH) give
  that the above left-hand sides are forced equal of type $A f$,
  applying the \IH~\eqref{item:eq2} gives that they are forced equal
  of type $A'f$, and thus $I \force u : D'$.

  \eqref{item:eq2} Let $I \force u = v : D$, so we have $I,\phi \force
  u = v : D$ and
  \begin{equation}
    \label{eq:39}
    I \force \unglue\,[\phi \mapsto w]\, u = \unglue\,[\phi \mapsto
    w]\, v : A
  \end{equation}
  By \IH, we get $I,\phi \force u = v : D'$ from $I,\phi \force u = v
  : D$.  Note that we also have $I \force u : D$ and $I \force v : D$,
  and thus
  \begin{align*}
    I &\force \unglue\,[\phi \mapsto w]\, u = \unglue\,[\phi \mapsto
    w']\, u : A, \text{ and}\\
    I &\force \unglue\,[\phi \mapsto w]\, v = \unglue\,[\phi \mapsto
    w']\, v : A
  \end{align*}
  and thus with \eqref{eq:39} and transitivity and symmetry (which we
  can apply by \IH) we obtain $I \force \unglue\,[\phi \mapsto w']\, u
  = \unglue\,[\phi \mapsto w']\, v : A$, hence also at type $A'$ by
  \IH.  Therefore we proved $I \force u = v : D'$.

  \case\ \reni.  Let $I \force u : A$; we have to show $I \force u :
  B$.

  \subcase\ $B$ is non-introduced.  Then we have to show $J \force u f
  : B f \dn$ for $f \co J \to I$.  We have $J \force A f \dn = B f
  \dn$ and since $I \force B$ is non-introduced, the derivation $J
  \force B f \dn$ is shorter than $I \force B$, and the derivation $J
  \force A f \dn$ is not higher than $I \force A$ by
  Lemma~\ref{lem:red-closure}.  Moreover, also $J \force uf : Af$ so
  by Lemma~\ref{lem:red-closure}~\eqref{item:13} we get $J \force u f
  : A f \dn$, and hence by \IH, $J \force u f : B f \dn$.

  \subcase\ $B$ is introduced.  We have $I \force u : A \dn$ and $I
  \force A \dn = B \dn$ but $B \dn$ is $B$, and $I \force A \dn$ has a
  shorter derivation than $I \force A$, so $I \force u : B$ by \IH.

  \eqref{item:typ-trans} Let us first handle the cases where $A$, $B$,
  or $C$ is non-introduced.  It is enough to show $J \force A f \dn =
  C f \dn$ (if $A$ and $C$ are both introduced, this entails $I \force
  A = C$ for $f$ the identity).  We have $J \force A f \dn = B f \dn$
  and $J \force B f \dn = C f \dn$.  None of the respective
  derivations get higher (by Lemma~\ref{lem:red-closure}) but one gets
  shorter since one of the types is non-introduced.  Thus the claim
  follows by \IH.

  It remains to look at the cases where all are introduced; in this
  case both equalities have to be derived by the same rule.  We
  distinguish cases on the rule.

  \case\ \ren. Trivial. \case\ \resi.  Similar to \repi\ below.
  \case\ \repa\ and \regl.  Use the \IH.  \case\ \reu. Trivial.

  \case\ \repi. Let use write $A$ as $(x : A') \to A''$ and similar
  for $B$ and $C$.  We have $I \force A' = B'$ and $I \force B' = C'$,
  and so by \IH, we get $I\force A' = C'$; for $J \force u : A' f$
  where $f \co J \to I$ it remains to be shown that $J \force A''
  (f,\su x u ) = C'' (f, \su x u )$.  By \IH, we also have $J \force u
  : B' f$, so we have
  \[
  J \force A'' (f,\su x u ) = B'' (f,\su x u ) \text{ and } %
  J \force B'' (f,\su x u ) = C'' (f,\su x u )
  \]
  and can conclude by the \IH.

  \eqref{item:ter-trans} By cases on $I \force A$.  All cases follow
  immediately using the \IH, except for \rcn\ and \rcu.  In case \rcn,
  we show transitivity by a side induction on the (natural) sum of the
  height of the derivations $I \force u = v : \N$ and $I \force v = w
  : \N$.  If one of $u$,$v$, or $w$ is non-introduced, we get that one
  of the derivations $J \force u f \dn = v f \dn : \N$ and $J \force v
  f \dn = w f \dn : \N$ is shorter (and the other doesn't get higher),
  so by \SIH, $J \force u f \dn = w f \dn : \N$ which entails $I
  \force u = w : \N$.  Otherwise, $I \force u = v : \N$ and $I \force
  v = w : \N$ have to be derived with the same rule and $I \force u =
  w : \N$ easily follows (using the \SIH\ in the successor case).

  In case \rcu, we have $I \force_1 u = v : \U$ and $I \force_1 v = w
  : \U$, i.e., $I \force_0 u = v$ and $I \force_0 v = w$. We want to
  show $I \force_1 u = w : \U$, i.e., $I \force_0 u = w$.  But by \IH
  ($\lv$), we can already assume the lemma is proven for $\lv = 0$,
  hence can use transitivity and deduce $I \force_0 u = w$.
  %
  %

  The proofs of \eqref{item:typ-sym} and \eqref{item:ter-sym} are by
  distinguishing cases and are straightforward.
\end{proof}

\begin{remark}
  \label{rem:transitivity-and-pi}
  Now that we have established transitivity, proving computability for
  $\Pi$-types can also be achieved as follows.  Given we have $I
  \force (x :A) \to B$ and derivations $I \der w : (x :A) \to B$, $I
  \der w' : (x :A) \to B$, and $I \der w = w' : (x :A) \to B$, then $I
  \force w = w' : (x :A) \to B$ whenever we have
  \begin{equation*}
    \all f \co J \to I \all u, v (J \force u = v : Af \Imp J \force w
    f \, u = w' f \, v : B (f, \su x u )).
  \end{equation*}
  (In particular, this gives $I \force w : (x :A) \to B$ and $I \force
  w' : (x :A) \to B$.)

  Likewise, given $I \der (x :A) \to B$, $I \der (x :A') \to B'$, and
  $I \der (x :A) \to B = (x :A') \to B'$, we get $I \force (x :A) \to
  B = (x :A') \to B'$ whenever $I \force A = A'$ and
  \begin{equation*}
    \all f \co J \to I \all u, v (J \force u = v : Af \Imp %
    J \force B (f, \su x u ) = B' (f, \su x v )).
  \end{equation*}
\end{remark}

\pagebreak
\begin{lemma}
  \label{lem:red-eq}
  \leavevmode
  \begin{enumerate}
  \item\label{item:8} $I \force A \Imp I \force A = A \dn$
  \item\label{item:nat-red-cl} $I \force u : \N \Imp I \force u = u
    \dn : \N$
  \end{enumerate}
\end{lemma}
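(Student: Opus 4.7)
The plan is to prove both parts simultaneously by induction on the height of the computability derivation ($I \force A$ for \eqref{item:8}, $I \force u : \N$ for \eqref{item:nat-red-cl}), distinguishing cases on the last rule.

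For \eqref{item:8}, if $A$ is introduced (rules $\rcn$, $\rcpi$, $\rcsi$, $\rcpa$, $\rcgl$, or $\rcu$), then $A \dn = A$ and the conclusion is immediate from reflexivity, i.e.\ Lemma~\ref{lem:per-field}\eqref{item:typ-refl}. The only genuine work is in case $\rcni$, where $A \noti$. To derive $I \force A = A \dn$ by \reni, it suffices to produce $J \force A f \dn = (A \dn) f \dn$ for every $f \co J \to I$. Instantiating the coherence premise of $\rcni$ with the pair $(\id_I, f)$ gives $J \force (A \dn) f = A f \dn$; applying Lemma~\ref{lem:red-closure}\eqref{item:12} to this equation yields $J \force (A \dn) f \dn = (A f \dn) \dn$. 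Separately, $J \force A f \dn$ is a premise of the $\rcni$ derivation, hence a strict sub-derivation; by Lemma~\ref{lem:subst} the height does not increase, so the induction hypothesis applies and gives $J \force A f \dn = (A f \dn) \dn$. Symmetry and transitivity (Lemma~\ref{lem:eq-resp-trans}\eqref{item:typ-trans}, \eqref{item:typ-sym}) then chain these into $J \force A f \dn = (A \dn) f \dn$, as required.

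For \eqref{item:nat-red-cl}, the reasoning is entirely analogous. If $u$ is $0$ or of the form $\suc v$, then $u \dn^\N = u$, and reflexivity from Lemma~\ref{lem:per-field}\eqref{item:ter-refl} closes the case. If $u \noti$, we use the rule for non-introduced equalities at $\N$: its premise requires $J \force u f \dn^\N = (u \dn) f \dn^\N : \N$ for each $f$. We obtain this by invoking the coherence clause of the derivation $I \force u : \N$ (with $f' = \id_I$, $g = f$), transporting it under $\dn^\N$ via Lemma~\ref{lem:red-closure}\eqref{item:16}, and combining with the induction hypothesis applied to the subderivation $J \force u f \dn^\N : \N$ to reach the desired equality by transitivity.

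The main obstacle is exactly the $\rcni$ case: since reduction is only a single weak-head step, $A \dn$ may itself be non-introduced, and in general $(A \dn) f$ is not syntactically $A f \dn$ (reductions are not closed under substitution). What saves us is that the coherence premise of $\rcni$ was designed precisely to identify $(A \dn) f$ and $A f \dn$ at the level of the computability relation; the induction hypothesis, applied to the smaller derivation $J \force A f \dn$, then lets us further close $A f \dn$ under $\dn$ on the right-hand side of the required equation.
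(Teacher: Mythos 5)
Your proof is correct and takes essentially the same route as the paper's: induction on the computability derivation, with all introduced cases immediate (reflexivity, since the reduct is the term itself) and the \rcni\ case settled by combining the coherence premise of \rcni\ with the induction hypothesis on a shorter derivation, using symmetry and transitivity. The only immaterial difference is in the chaining: you apply the IH to the sub-derivation $J \force A f \dn$ and push the coherence equation under $\dn$ via Lemma~\ref{lem:red-closure}, whereas the paper applies the IH to $J \force A \dn f$ (obtained by monotonicity from the premise $I \force A \dn$); both chains yield the required $J \force A f \dn = (A \dn) f \dn$, and the same remark applies to your treatment of part~(2).
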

\begin{proof}
  \eqref{item:7} We already proved $I \force A \dn$ in
  Lemma~\ref{lem:red-closure}~\eqref{item:7}.  By induction on $I
  \force A$.  All cases where $A$ is an introduction are immediate
  since then $A \dn$ is $A$.  It only remains the case \rcni:
  \[
  \nir {\rcni} { %
    A \noti \\
    \all f \co J \to I (A f \hr \And J \force A f \dn) \\
    \all f \co J \to I\all g \co K \to J (K \force A f \dn g = A f
    g\dn) } %
  {I \force A}
  \]
  We now show $I \force A = A \dn$; since $A$ is non-introduced we
  have to show $J \force A f \dn = (A \dn f) \dn$ for $f \co J \to I$.
  $I \force A \dn$ has a shorter derivation than $I \force A$, thus so
  has $J \force A \dn f$; hence by \IH, $J \force A \dn f = (A \dn f)
  \dn$.  We also have $J \force A \dn f = A f \dn$ by definition of $I
  \force A$, and thus we obtain $J \force A f \dn = (A \dn f) \dn$
  using symmetry and transitivity.

  \eqref{item:nat-red-cl} Similar, by induction on $I \force u : \N$.
\end{proof}

\begin{lemma}[Expansion Lemma]
  \label{lem:exp}
  Let $I \lforce A$ and $I \der u : A$; then:
  \begin{mathpar}
    \inferrule{ %
      \all f \co J \to I (uf \hr^{A f} \And J \lforce uf \dn^{A f} : A
      f) \\
      \all f \co J \to I %
      (J \lforce u f \dn = u \dn f : A f)
    } %
    { I \lforce u : A \And I \lforce u = u \dn : A}
  \end{mathpar}
  In particular, if $I \der u \sred v : A$ and $I \lforce v : A$, then
  $I \lforce u :A$ and $I \lforce u = v : A$.
\end{lemma}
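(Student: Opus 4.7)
The plan is to induct on the derivation of $I \lforce A$ and, in each case, unfold the corresponding clause in the definition of $I \lforce u : A$ and verify it using the two hypotheses together with the already-established Lemmas~\ref{lem:subst}, \ref{lem:red-closure}, \ref{lem:eq-resp-trans}, and \ref{lem:red-eq}. Setting $f = \id$ in the premises yields $u \hr^A$ and $I \lforce u\dn : A$, so in every case $u$ is non-introduced and its reduct is already computable; the equality conclusion $I \lforce u = u\dn : A$ will be proved alongside $I \lforce u : A$ using the appropriate equality rule.

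In the cases $\rcn$, $\rcni$, and (at level $\lv = 1$) $\rcu$, the clause for non-introduced $u$ asks for exactly the data in our hypotheses together with a coherence clause of the form $K \lforce uf\dn g = ufg\dn$ which follows by applying hypothesis~(2) at $f$ and at $fg$, using Lemma~\ref{lem:subst}, symmetry, and transitivity (Lemma~\ref{lem:eq-resp-trans}). In $\rcni$ a subtlety is that the clause demands $J \lforce uf : Af\dn$ rather than $J \lforce uf : Af$; we address this by applying the inductive hypothesis to $uf$ at the sub-type $Af\dn$, whose derivation is a premise of $\rcni$ for $A$ and hence strictly shorter, and the two hypotheses for $uf$ at $Af\dn$ follow from our original hypotheses combined with the coherence premise $K \lforce Af\dn g = Afg\dn$ and Lemma~\ref{lem:eq-resp-trans}. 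The case $\rcu$ additionally uses that reduction at type $\U$ entails reduction as a type, which matches the $\rcni$-rule at level~$0$. For $\rcpi$, $\rcsi$, and $\rcpa$, we unfold the definition and appeal to the inductive hypothesis on each eliminator application $uf\,u'$, $u.1$, $u.2$, or $u\,r$ at the appropriate sub-type, which is a premise of the $\rcpi/\rcsi/\rcpa$-rule and therefore has a strictly smaller derivation; the two Expansion-Lemma hypotheses transfer because reduction propagates through the eliminators (e.g.\ $uf \red uf\dn$ yields $uf\,u' \red uf\dn\,u'$), and the coherence is preserved via Lemma~\ref{lem:subst} and the $\lforce$-definition at the sub-type.

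The main obstacle is the case $\rcgl$ with $D = \Glue\,[\phi \mapsto (T,w)]\,A$, where we must establish both $I, \phi \lforce u : D$ and the $\unglue$-invariance clause. The first follows from the inductive hypothesis applied to $uf$ for each $f \co J \to I,\phi$: since $\phi f = 1$, the type $Df$ is non-introduced (it reduces to $Tf$) and its derivation, which is a sub-derivation of the $\rcgl$-premise $I,\phi \lforce D$, is available; composing with $f$ lifts our hypotheses on $u$ to hypotheses on $uf$ at $Df$. The $\unglue$-invariance clause for $u$ is reduced to that for $u\dn$ by the reduction rule $\unglue\,[\phi f \mapsto w']\,uf \red \unglue\,[\phi f \mapsto w']\,uf\dn$ when $\phi f \neq 1$ (the case $\phi f = 1$ being handled directly, as both $\unglue$-terms reduce to $w'.1\,uf$ and $(wf).1\,uf$ respectively with $J \force w' = wf$), combined with the fact that $I \lforce u\dn : D$ already satisfies the invariance and the coherence hypothesis $J \lforce uf\dn = u\dn f : Df$ transfers this back to $u$. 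Finally, the ``In particular'' clause is immediate: if $I \der u \sred v : A$, then $\sred$ is closed under substitution, so the first hypothesis of the Expansion Lemma holds with $uf\dn^{Af} = vf$ (computable by Lemma~\ref{lem:subst} applied to $I \lforce v : A$), and the second hypothesis reduces to reflexivity of $vf$ at $Af$.
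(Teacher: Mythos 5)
Your proposal is correct and follows essentially the same route as the paper: induction on the derivation of $I \lforce A$, verifying the relevant clause of $I \lforce u : A$ in each case, pushing the two hypotheses through eliminators in the \rcpi/\rcsi/\rcpa\ cases so that the induction hypothesis applies at the sub-derivations, obtaining the coherence condition from hypothesis (2) at $f$ and $fg$ via monotonicity, symmetry and transitivity, and transferring along $A \peq A\dn$ (resp.\ $D \peq T$ on $\phi$) in the \rcni, \rcu\ and \rcgl\ cases. The only difference is one of detail: in the \rcgl\ case the paper additionally isolates an auxiliary claim about $\unglue$ under restrictions and explicitly checks the coherence of the reducts of $\unglue\,[\phi f \mapsto w']\,uf$ before invoking the induction hypothesis at $Af$, which your sketch compresses but along the same lines.
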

\begin{proof}
  By induction on $I \force A$.  We will omit the level annotation
  $\lv$ whenever it is inessential.

  \case\ \rcn.  We have to show $K \force u f \dn g = u f g \dn : \N$
  for $f \co J \to I$ and $g \co K \to J$; we have $J \force u f \dn =
  u \dn f : \N$, thus $K \force u f \dn g = u \dn f g : \N$. Moreover,
  $K \force u \dn f g = u f g \dn : \N$ by assumption, and thus by
  transitivity $K \force u f \dn g = u \dn f g = u f g \dn : \N$.
  (Likewise one shows that the data in the premise of the lemma is
  closed under substitution.)

  $I \force u = u \dn : \N$ holds by
  Lemma~\ref{lem:red-closure}~\eqref{item:nat-red-cl}.

  \case\ \rcpi.  First, let $J \force a : A f$ for $f \co J \to I$.
  We have
  \[
  K \der (uf \, a) g \red (u {fg}) \dn \, (a g) : B (fg, \su x {a g})
  \]
  for $g \co K \to J$, and also $K \force (u {fg}) \dn \, (a g) : B
  (fg, \su x {a g})$ and we have the compatibility condition
  \begin{align*}
    K \force (u f \, a) g \dn %
    &= ((ufg) \dn) \, (a g) %
    = (uf \dn g) \, (a g) \\%
    &= (uf \dn \, a) g %
    = (uf \, a) \dn g %
    : B (fg, \su x {a g}),
  \end{align*}
  so by \IH, $J \force u f \, a : B (f, \su x a )$ and $J \force u f
  \, a = u f \dn \, a : B (f,\su x a )$.  Since also $J \force u f \dn
  = u \dn f : ((x : A) \to B)f$ we also get $J \force u f \, a = u \dn
  f \, a : B (f,\su x a )$.

  Now if $J \force a = b : A f$, we also have $J \force a : A f$ and
  $J \force b : A f$, so like above we get $J \force u f \, a = uf\dn
  \, a : B (f,\su x a )$ and $J \force u f \, b = uf \dn \, b : B
  (f,\su x b )$ (and thus also $J \force u f \, b = uf\dn \, b : B
  (f,\su x a )$).  Moreover, $J \force uf \dn \, a = u f \dn \, b : B
  (f,\su x a )$ and hence we can conclude $J \force u f \, a = u f \,
  b : B (f,\su x a )$ by transitivity and symmetry.  Thus we showed
  both $I \force u : (x : A) \to B$ and $I \force u = u\dn: (x : A)
  \to B$.

  \case\ \rcsi. Clearly we have $(u.1 f) \dn = (u f \dn).1$, $J \force
  (u f \dn).1 : A f$, and
  \[
  J \force (u.1 f) \dn = (u f \dn).1 = (u \dn f) .1 = (u \dn .1) f =
  (u.1) \dn f : A f
  \]
  so the \IH\ gives $I \force u.1 : A$ and $I \force u.1 = (u \dn).1 :
  A$.  Likewise $(u.2 f) \dn = (u f \dn).2$ and $J \force (u f \dn).2
  : B (f, \su x {uf\dn.1})$, hence also $J \force (u f \dn).2 : B (f,
  \su x {uf.1})$; as above one shows $J \force (u.2 f) \dn = u.2 \dn f
  : B (f, \su x {uf.1})$, applying the \IH\ once more to obtain $I
  \force u.2 = u\dn.2 : B\subst x {u.1})$ which was what remained to
  be proven.

  \case\ \rcpa. Let us write $\Path\,A\,v\,w$ for the type and let $f
  \co J \to I$, $r \in \II (J)$, and $g \co K \to J$.  We have
  \[
  K \der (u f \, r) g \red (ufg) \dn \, (r g) : A f g
  \]
  and $K \force (uf g) \dn \, (rg) : A fg$; moreover,
  \[
  K \force (u f \, r) g \dn %
  = (ufg) \dn \, (r g) %
  = (uf \dn g) \, (r g) %
  = (uf \dn \, r)  g %
  = (uf \, r) \dn  g %
  : A fg.
  \]
  Thus by \IH, $J \force u f\, r : A f$ and
  \begin{equation}
    \label{eq:30}
    J \force uf \, r = uf\dn
    \,r = u \dn f \, r : A f.
  \end{equation}
  So we obtain $I \force u \, 0 = u\dn \, 0 = v : A$ and $I \force u
  \, 1 = u\dn \, 1 = w : A$, and hence $I \force u : \Path\,A\,v\,w$;
  $I \force u = u \dn : \Path\,A\,v\,w$ follows from~\eqref{eq:30}.

  \case\ \rcgl. Abbreviate $\Glue \, [\phi \mapsto (T,w)] \, A$ by
  $B$.  Note that we have $\phi \neq 1$.  First, we claim that for any
  $f \co J \to I$, $J \force b : Bf$, and $J,\phi f \force w' = wf :
  \Equiv\,Tf\,Af$,
  \begin{equation}
    \label{eq:8}
    J,\phi f \force \unglue\,[\phi f\mapsto w']\,b = w'.1 \, b : A f.
  \end{equation}
  (In particular both sides are computable.) Indeed, for $g \co K \to
  J$ with $\phi f g = 1$ we have that
  \[
  K \der
  (\unglue\,[\phi f\mapsto w']\,b) g \sred w'g.1 \, (b g) : A f g
  \]
  and $K \force w'g.1 \, (b g) : A f g$ since $I,\phi \force B = T$
  (which follows from Lemma~\ref{lem:red-eq}~\eqref{item:8}).  Thus by
  \IH\ ($J \force A f$ has a shorter derivation than $I \force B$), $K
  \force (\unglue\,[\phi f\mapsto w']\,b) g = (w'.1 \, b)g : A f g$ as
  claimed.

  Next, let $f \co J \to I$ such that $\phi f = 1$; then using the
  \IH\ ($J \force Bf$ has a shorter derivation than $I \force B$), we
  get $J \force uf : Bf$ and $J \force uf = u f \dn : Bf$, and hence
  also $J \force u f = u \dn f : B f$ (since $J \force u f \dn = u \dn
  f : B f$).  That is, we proved
  \begin{equation}
    \label{eq:7}
    I, \phi \force u : B \text{ and } I, \phi \force u = u \dn : B.
  \end{equation}

  We will now first show
  \begin{equation}
    \label{eq:24}
    J \force \unglue\,[\phi f \mapsto w']\,uf = (\unglue\,[\phi f
    \mapsto w']\,uf) \dn : A f
  \end{equation}
  for and $f \co J \to I$ and $J,\phi f \force w' = wf :
  \Equiv\,Tf\,Af$.  We can assume that w.l.o.g.\ $\phi f \ne 1$, since
  if $\phi f = 1$, $J \force u f : B f$ by \eqref{eq:7}, and
  \eqref{eq:24} follows from \eqref{eq:8} noting that its right-hand
  side is the reduct.  We will use the \IH\ to show \eqref{eq:24}, so
  let us analyze the reduct:
  \begin{equation}
    \label{eq:10}
    (\unglue\,[\phi f \mapsto w']\,uf) g \dn =
    \begin{cases}
      (w'g.1) \, (u fg) & \text{if } \phi f g = 1,\\
      \unglue\,[\phi f \mapsto w' g]\, (ufg\dn) & \text{otherwise.}
    \end{cases}
  \end{equation}
  where $g \co K \to J$.  In either case, the reduct is computable: in
  the first case, use \eqref{eq:7} and $J \force w'.1 : T \to A$
  together with the observation $I,\phi \force B = T$; in the second
  case this follows from $J \force ufg \dn : B fg$.  In order to apply
  the \IH, it remains to verify
  \begin{equation*}
    K \force (\unglue\,[\phi f \mapsto w']\,uf) g \dn =
    (\unglue\,[\phi f \mapsto w']\,uf) \dn g : A f g.
  \end{equation*}
  In case $\phi f g \ne 1$, we have
  \begin{align*}
    K \force{} &\unglue\,[\phi fg \mapsto w'g]\,(ufg \dn) %
    \\
    &= \unglue\,[\phi fg \mapsto w f g]\,(ufg \dn) %
    && \text{since }K \force ufg \dn : Bfg\\
    &= \unglue\,[\phi fg \mapsto w f g]\,(uf \dn g) %
    && \text{since }K \force ufg \dn = uf \dn g : Bfg\\
    &= \unglue\,[\phi fg \mapsto w' g]\,(uf \dn g) : A fg %
    && \text{since }K \force uf \dn g : Bfg
  \end{align*}
  which is what we had to show in this case.  In case $\phi f g = 1$,
  we have to prove
  \begin{equation}
    \label{eq:25}
    K \force (w'g.1) \, (u fg) =
    \unglue\,[\phi f \mapsto w'g]\,(uf \dn g) : A f g.
  \end{equation}
  But by \eqref{eq:7} we have $K \force u f g = u f g \dn = u f \dn g
  : B f g$, so also $K \force (w'g.1) \, (u fg) = (w'g.1) \, (u f\dn
  g) : A f g$, so \eqref{eq:25} follows from \eqref{eq:8} using $J
  \force u f \dn : Bf$.  This concludes the proof of \eqref{eq:24}.

  As $w'$ could have been $wf$ we also get
  \begin{equation}
    \label{eq:32}
    J \force \unglue\,[\phi f \mapsto wf]\,uf = (\unglue\,[\phi f
    \mapsto wf]\,uf) \dn : A f.
  \end{equation}

  In order to prove $I \force u : B$ it remains to check that the
  left-hand side of \eqref{eq:24} is forced equal to the left-hand
  side of \eqref{eq:32}; so we can simply check this for the
  respective right-hand sides: in case $\phi f = 1$, these are $w'.1
  \, uf$ and $wf.1 \, uf$, respectively, and hence forced equal since
  $J \force w' = wf : \Equiv\,Tf\,Af$; in case $\phi f \ne 1$, we have
  to show
  \[
  J \force \unglue\, [\phi f \mapsto w']\,(u f \dn) = \unglue\, [\phi
  f \mapsto w f]\,(u f \dn) : A f
  \]
  which simply follows since $J \force uf \dn : B f$.

  In order to prove $I \force u = u \dn : B$ it remains to check
  \[
  I \force \unglue\,[\phi  \mapsto w]\,u = \unglue\,[\phi  \mapsto
  w]\,(u \dn ) : A,
  \]
  but this is \eqref{eq:32} in the special case where $f$ is the
  identity.

  \case\ \rcu.  Let us write $B$ for $u$.  We have to prove $I
  \force_1 B : \U$ and $I \force_1 B = B \dn : \U$, i.e., $I \force_0
  B$ and $I \force_0 B = B \dn$.  By
  Lemma~\ref{lem:red-eq}~\eqref{item:8}, it suffices to prove the
  former.  For $f \co J \to I$ we have
  \[
  J \der B f \red B f \dn^\U : \U
  \]
  and hence also
  \[
  J \der B f \red B f \dn^\U
  \]
  i.e., $B f \hr$, and $B f \dn$ is $B f \dn^U$; since $J \force_1 Bf
  \dn : \U$ we have $J \force_0 B f \dn$, and likewise $J \force_0 B f
  \dn = B \dn f$.  Moreover, if also $g \co K \to J$, we obtain $K
  \force_0 B f g \dn = B \dn f g$ from the assumption.  Hence $K
  \force_0 B f \dn g = B \dn f g = B f g \dn$, therefore $I \force_0
  B$ what we had to show.

  \case\ \rcni. Then $I \force A \dn$ has a shorter derivation than $I
  \force A$; moreover, for $f \co J \to I$ we have $J \der u f \red uf
  \dn^{Af} : A f$ so also $J \der u f \red uf \dn^{A f} : A \dn f$
  since $J \der A f = A \dn f$.  By
  Lemma~\ref{lem:red-closure}~\eqref{item:8}, $I \force A = A \dn$ so
  also $J \force uf \dn : A \dn f$ and $J \force uf \dn = u \dn f : A
  \dn f$, and hence by \IH, $I \force u : A \dn$ and $I \force u = u
  \dn : A \dn$, so also $I \force u : A$ and $I \force u = u \dn : A$
  using $I \force A = A \dn$ again.
\end{proof}

\section{Soundness}
\label{sec:soundness}

The aim of this section is to prove canonicity as stated in the
introduction.  We will do so by showing that each computable instance
of a judgment derived in cubical type theory is computable (allowing
free name variables)---this is the content of the Soundness Theorem
below.

We first extend the computability predicates to contexts and
substitutions.

\noindent\fbox{${} \force \Gamma$} assuming $\Gamma \der {}$.
\begin{mathpar}
  \inferrule { } { {} \force \emptyctx } %
  \and %
  \inferrule {{} \force \Gamma \\ i \notin \dom(\Gamma) } %
  { {} \force \Gamma, i : \II} %
  \and %
  \inferrule { %
    {} \force \Gamma \\
    \Gamma \der \phi : \FF
  } %
  { {} \force \Gamma, \phi } %
  \and %
  \inferrule { %
    {} \force \Gamma \\
    \all I \all \sigma ( I \force \sigma : \Gamma \Imp I \force A
    \sigma) \\
    \all I \all \sigma, \tau ( I \force \sigma = \tau :
    \Gamma \Imp I \force A \sigma = A \tau) \\
    x \notin \dom (\Gamma)%
  } %
  { {} \force \Gamma,x:A } %
\end{mathpar}

\noindent\fbox{$I \force \sigma : \Gamma$} by induction on ${} \force
\Gamma$ assuming $I \der \sigma : \Gamma$.
\begin{mathpar}
  \inferrule { } { I \force () : \emptyctx} %
  \and %
  \inferrule {I \force \sigma : \Gamma \\ r \in \II (I)} %
  { I \force (\sigma, \su i r ) : \Gamma, i : \II}
  \and %
  \inferrule { %
    I \force \sigma : \Gamma \\
    \phi \sigma  = 1 } %
  { I \force \sigma : \Gamma,\phi} %
  \and %
  \inferrule {I \force \sigma : \Gamma \\
    I \force u  : A \sigma} %
  { I \force (\sigma, \su x u ) : \Gamma,x:A} %
\end{mathpar}

\noindent\fbox{$I \force \sigma = \tau : \Gamma$} by induction on ${}
\force \Gamma$, assuming $I \force \sigma : \Gamma$, $I \force \tau :
\Gamma$, and $I \der \sigma = \tau : \Gamma$.
\begin{mathpar}
  \inferrule { } { I \force () = () : \emptyctx} %
  \and %
  \inferrule {I \force \sigma = \tau : \Gamma \\ r \in \II (I)} %
  { I \force (\sigma, \su i r ) = (\tau, \su i r ) : \Gamma, i : \II}
  \and %
  \inferrule { %
    I \force \sigma = \tau : \Gamma \\
    \phi \sigma = \phi \tau = 1 } %
  { I \force \sigma = \tau : \Gamma,\phi} %
  \and %
  \inferrule {I \force \sigma = \tau : \Gamma \\
    I \force u = v : A \sigma} %
  { I \force (\sigma, \su x u ) = (\tau, \su x v ) : \Gamma,x:A} %
\end{mathpar}

We write $I \force r : \II$ for $r \in \II (I)$, $I \force r = s :
\II$ for $r = s \in \II (I)$, and likewise $I \force \phi : \FF$ for
$\phi \in \FF (I)$, $I \force \phi = \psi : \FF$ for $\phi = \psi \in
\FF (I)$.  In the next definition we allow $A$ to be $\FF$ or $\II$,
and also correspondingly for $a$ and $b$ to range over interval and
face lattice elements.
\begin{definition}\label{def:snd}
  \begin{align*}
    &\Gamma \models {} &:\Iff &&& {} \force \Gamma
    \\
    &\Gamma \models A = B &:\Iff&&& %
    \Gamma \der A = B \And \Gamma \models \And\\ &&&&& %
    \all I, \sigma, \tau (I \force \sigma = \tau : \Gamma \Imp I
    \force A \sigma = B \tau)
    \\
    & \Gamma \models A &:\Iff &&& \Gamma \der A \And \Gamma \models A
    = A
    \\
    & \Gamma \models a = b : A &:\Iff &&& %
    \Gamma \der a = b : A \And \Gamma \models A \And
    \\ &&&&& %
    \all I, \sigma, \tau (I \force \sigma = \tau : \Gamma \Imp I
    \force a \sigma = b \tau : A \sigma)
    \\
    & \Gamma \models a : A &:\Iff &&& \Gamma \der a : A \And \Gamma
    \models a = a : A
    \\
    & \Gamma \models \sigma = \tau : \Delta &:\Iff &&& %
    \Gamma \der \sigma = \tau : \Delta \And \Gamma \models \And \Delta
    \models  \And \\
    &&&&& %
    \all I, \delta,\gamma (I \force \delta = \gamma : \Gamma \Imp I
    \force \sigma \delta = \tau \gamma : \Delta)
    \\
    & \Gamma \models \sigma : \Delta &:\Iff &&&
    \Gamma \der \sigma : \Delta \And
    \Gamma \models \sigma = \sigma : \Delta
  \end{align*}
\end{definition}

\begin{remark}
  \label{rem:def-snd}
  \leavevmode
  \begin{enumerate}
  \item For each $I$ we have ${}\force I$, and $J \force \sigma : I$
    if{f} $\sigma \co J \to I$; likewise, $J \force \sigma = \tau : I$
    if{f} $\sigma = \tau$.
  \item For computability of contexts and substitutions monotonicity
    and partial equivalence properties hold analogous to computability
    of types and terms.
  \item\label{item:fla-val} Given $\force \Gamma$ and $I \force \sigma =
    \tau : \Gamma$, then for any $\Gamma \der \phi : \FF$ we get $\phi
    \sigma = \phi \tau \in \FF(I)$ since $\phi \sigma$ and $\phi \tau$
    only depend on the name assignments of $\sigma$ and $\tau$ which
    have to agree by $I \force \sigma = \tau : \Gamma$.  Similarly for
    $\Gamma \der r : \II$.
  \item The definition of ``$\models$'' slightly deviates from the
    approach we had in the definition of ``$\force$'' as, say, $\Gamma
    \models A$ is defined in terms of $\Gamma \models A = A$.  Note
    that by the properties we already established about ``$\force$''
    we get that $\Gamma \models A = B$ implies $\Gamma \models A$ and
    $\Gamma \models B$ (given we know $\Gamma \der A$ and $\Gamma \der
    B$, respectively); and, likewise, $\Gamma \models a = b : A$
    entails $\Gamma \models a : A$ and $\Gamma \models b : A$ (given
    $\Gamma \der a : A$ and $\Gamma \der b : A$, respectively).  Also,
    note that in the definition of, say, $\Gamma \models A$, the
    condition
    \[
    \all I, \sigma,\tau (I \force \sigma = \tau : \Gamma \Imp I \force
    A \sigma = A \tau)
    \]
    implies
    \[
    \all I, \sigma (I \force \sigma : \Gamma \Imp I \force A \sigma).
    \]
    In fact, we will often have to establish the latter condition
    first when showing the former.
  \item $I \models A = B$ if{f} $I \force A = B$, and $I \models a = b
    : A$ if{f} $I \force A$ and $I \force a = b : A$; moreover, given
    $I \models A$ and $I, x : A \der B$, then $I, x: A \models B$
    if{f}
    \begin{align*}
      &\all f \co J \to I \all u (J \force u : A f \Imp J \force B (f,
      \su x u )) \And\\
      &\all f \co J \to I \all u,v (J \force u = v : A f \Imp J \force
      B (f,\su x u ) = B(f,\su x v )) %
    \end{align*}
    (Note that the second formula in the above display implies the
    first.)  Thus the premises of \rcpi\ and \rcsi\ are simply $I
    \models A$ and $I, x :A \models B$.  Also, $I,\phi \force A = B$
    if{f} $I,\phi \models A = B$; and $I,\phi \models a = b : A$ if{f}
    $I,\phi \force A$ and $I,\phi \force a = b : A$.
  \item By Lemma~\ref{lem:eq-resp-trans} we get that $\Gamma \models
    \cdot = \cdot$, $\Gamma \models \cdot = \cdot : A$, and $\Gamma
    \models \cdot = \cdot : \Delta$ are partial equivalence relations.
  \end{enumerate}
\end{remark}

\begin{theorem}[Soundness]
  \label{thm:snd}
  $\Gamma \der \J \Imp \Gamma \models \J$
\end{theorem}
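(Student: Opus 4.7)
The plan is to prove Soundness by induction on the derivation of $\Gamma \der \J$. For every rule of cubical type theory we must show that whenever its premises hold with $\models$ in place of $\der$, so does the conclusion. Unpacking $\models$, this amounts, given $I$ and $I \force \sigma = \tau : \Gamma$, to establishing the corresponding forcing statement for the substituted type or term; the derivability part of $\Gamma \models \J$ is supplied directly from the premise, so the entire content lies in the $\force$-quantified condition.

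Structural rules (context formation, the variable rule, weakening, general substitution, and the rules for $\II$ and $\FF$) reduce to the defining clauses of $\models$ for contexts and substitutions together with the Monotonicity Lemma~\ref{lem:subst}. The conversion rule is handled by Lemma~\ref{lem:eq-resp-trans}\eqref{item:eq-resp}. For formation and introduction rules of each type former we apply the corresponding computability rule directly, appealing to Remark~\ref{rem:def-snd}, which in particular identifies $\Gamma, x : A \models B$ with the quantified premises of $\rcpi$ and $\rcsi$. The introductions ($0$, $\suc$, $\lambda$-abstraction, pairing, $\nabs{i}{t}$, $\glue$, and systems) are direct applications of the corresponding clauses in $I \lforce u : A$, with the required reflexive equalities supplied by Lemma~\ref{lem:per-field}.

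For the elimination forms ($\natrec$, application, projections, path application, $\unglue$), we follow the reduction relation: for a computable principal argument we do a side induction on its computability derivation. When the argument is introduced, the $\beta$-rule fires and the reduct is computable by the main \IH; when it is non-introduced, we use the congruence reduction and the non-introduced clause of the relevant computability predicate. The Expansion Lemma~\ref{lem:exp} then bridges the original eliminator to its reduct, and the coherence condition it demands is supplied by the last premise of $\rcni$ (and its $\N$-analogue), which gives exactly the needed commutation of reduction with substitution.

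The substantial work lies in the composition rules. For $\comp$ at $\N$, $\Pi$, $\Sigma$, $\Path$, the reduction rule produces a concrete reduct whose computability follows from that of its constituents, again transported back to the original composition term via the Expansion Lemma; the $\N$ case additionally needs a side induction on $I \force u_0 : \N$. I expect the main obstacle to be composition at $\Glue$ and at $\U$: for $\Glue$ one must verify that each auxiliary term $a, a_0, a_1', t_1', \omega, (t_1,\alpha), a_1$ is computable, that the final $\glue\,[\phi(i1)\mapsto t_1]\,a_1$ lies in the target $\Glue$-type, and, crucially, that all of this is stable under substitutions $f$ which may collapse $\phi$ to $1$. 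This is precisely where the extra quantification over equivalences $w'$ in $\rcgl$ becomes indispensable, since under such collapses the $\unglue$ coherence must be checked not only for the given $w$ but for any equivalent $w'$. Composition at $\U$ reduces to a $\Glue$-type in the universe and so rests on the $\Glue$ analysis together with the fact that $\ptoeq^i$ produces a computable equivalence.
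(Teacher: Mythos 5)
Your proposal is correct and follows essentially the same route as the paper: Soundness is proved by induction on derivations, with the heavy lifting delegated to the Expansion Lemma (for $\beta$-rules and reductions), Lemma~\ref{lem:natrec-is-computable} (with a side induction on the computability derivation of the principal argument, needed precisely for the inductive type $\N$), the local $\Glue$ lemmas, and Theorem~\ref{thm:comp-is-computable} for composition (itself a simultaneous induction on $I,i \force A$, with a side induction on $I \force u_0 : \N$ in the \rcn\ case). Your observation about the indispensability of the $w'$ quantification in \rcgl\ under $\phi$-collapsing substitutions is exactly the point the paper exploits in the \rcgl\ case of the Expansion Lemma; the only small imprecision is that the side induction on the computability derivation of the principal argument is specific to ground inductive types like $\N$ --- for $\Pi$, $\Sigma$, $\Path$ and $\Glue$ the eliminator behavior is already baked directly into the definition of the computability predicate.
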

The proof of the Soundness Theorem spans the rest of this section.  We
will mainly state and prove congruence rules as the proof of the other
rules are special cases.

\begin{lemma}
  \label{lem:ctxt-snd}
  The context formation rules are sound:
  \begin{mathpar}
    \inferrule { } {\emptyctx \models {}} %
    \and %
    \inferrule {\Gamma \models \\ i \notin \dom(\Gamma)} {\Gamma, i :
      \II \models {}} %
    \and %
    \inferrule {\Gamma \models \phi : \FF } {\Gamma, \phi \models {}}
    \and %
    \inferrule {\Gamma \models A \\ x \notin \dom (\Gamma)} {\Gamma, x
      : A \models} %
  \end{mathpar}
\end{lemma}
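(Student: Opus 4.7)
The plan is to prove each of the four rules by a direct unfolding of the definitions of $\Gamma \models {}$ and ${} \force \Gamma$, since $\Gamma \models {}$ is by Definition~\ref{def:snd} literally just ${} \force \Gamma$. Each of the four inference rules to be proved mirrors almost exactly one of the clauses in the definition of ${} \force \Gamma$, so in each case the work amounts to checking that the premises of the $\models$-rule supply exactly the data demanded by the corresponding $\force$-rule.

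For the empty context, the conclusion $\emptyctx \models {}$ is immediate from the base clause of ${} \force \Gamma$. For the interval extension, I would take the hypothesis $\Gamma \models {}$, which gives ${} \force \Gamma$, combine it with the side condition $i \notin \dom(\Gamma)$, and apply the clause defining ${} \force \Gamma, i : \II$. For the restriction extension, unfolding $\Gamma \models \phi : \FF$ yields both $\Gamma \der \phi : \FF$ and $\Gamma \models {}$ (the latter giving ${} \force \Gamma$); these are precisely the premises of the $\force$-clause for $\Gamma, \phi$.

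The only case requiring a slightly more careful reading is the type extension rule. Here $\Gamma \models A$ unfolds to $\Gamma \der A$, $\Gamma \models {}$, and $\Gamma \models A = A$, where the latter provides exactly the two quantified conditions
\begin{align*}
&\all I \all \sigma (I \force \sigma : \Gamma \Imp I \force A \sigma), \\
&\all I \all \sigma, \tau (I \force \sigma = \tau : \Gamma \Imp I \force A \sigma = A \tau)
\end{align*}
demanded by the clause for ${} \force \Gamma, x : A$ (recall from Remark~\ref{rem:def-snd} that the second of these implies the first, using Lemma~\ref{lem:per-field}). Together with the side condition $x \notin \dom(\Gamma)$, the defining clause applies and yields ${} \force \Gamma, x : A$, i.e., $\Gamma, x : A \models {}$.

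There is no real obstacle here: the lemma is a bookkeeping step whose whole purpose is to line up the two different notations, ``$\models$'' and ``$\force$'', on contexts. The only thing to be mindful of is that each $\Gamma \models \J$ in a premise must be unpacked to expose both the syntactic judgment $\Gamma \der \J$ and its semantic forcing content, and then the forcing content has to be matched slot-by-slot to the inductive clause of ${} \force \Gamma$.
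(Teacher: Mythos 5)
Your proof is correct and takes essentially the same route as the paper, whose proof is simply ``Immediately by definition''; your write-up just makes the definitional unfolding explicit, including the correct observation that $\Gamma \models A$ packages exactly the two quantified conditions required by the clause for ${}\force \Gamma, x:A$.
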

\begin{proof}
  Immediately by definition.
\end{proof}

\begin{lemma}
  \label{lem:FF-II-snd}
  Given $\Gamma \models$, $\Gamma \der r : \II$, $\Gamma \der s : \II$,
  $\Gamma \der \phi : \FF$, and $\Gamma \der \psi : \FF$ we have:
  \begin{enumerate}
  \item\label{item:9} $\Gamma \der r = s : \II \Imp \Gamma \models r =
    s : \II$
  \item\label{item:15} $\Gamma \der \phi = \psi : \FF \Imp\Gamma
    \models \phi = \psi : \FF$
  \end{enumerate}
\end{lemma}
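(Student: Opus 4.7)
The plan is to unfold Definition~\ref{def:snd} for the case where the ``type'' is $\II$ or $\FF$, and then reduce everything to the observation, already recorded in Remark~\ref{rem:def-snd}\eqref{item:fla-val}, that interval and face expressions only depend on the name part of a substitution, which is preserved by $I \force \sigma = \tau : \Gamma$. Concretely, for part~\eqref{item:9}, assume $\Gamma \der r = s : \II$ and let $I \force \sigma = \tau : \Gamma$; I must show $r \sigma = s \tau \in \II(I)$. I will factor this as $r \sigma = r \tau = s \tau$.

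The first equality $r \sigma = r \tau$ is obtained by induction on the derivation of ${} \force \Gamma$, showing that $I \force \sigma = \tau : \Gamma$ forces $\sigma$ and $\tau$ to agree on every name declaration $i : \II$ in $\Gamma$; since an interval expression $r$ built from names and the lattice operations depends only on these name assignments, $r \sigma$ and $r \tau$ coincide as elements of $\II(I)$. The second equality $r \tau = s \tau$ follows because $\Gamma \der r = s : \II$ is a judgmental equality of interval expressions, hence they denote the same element of the free distributive lattice on the names in $\Gamma$; applying the name substitution $\tau$ preserves this, yielding $r \tau = s \tau \in \II(I)$. Part~\eqref{item:15} is proved in exactly the same way, replacing $\II$ by the face lattice $\FF$, using that $\FF$-expressions also depend only on the name variables of $\Gamma$.

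It remains to verify the auxiliary clauses of Definition~\ref{def:snd}, namely that $\Gamma \models$ (given) and, in the version where ``$A$'' may be $\II$ or $\FF$, the appropriate reflexivity instances $\Gamma \models r : \II$, $\Gamma \models s : \II$, $\Gamma \models \phi : \FF$, $\Gamma \models \psi : \FF$. These follow by specializing the above argument with $r = s$ (resp.\ $\phi = \psi$): the required equalities $r\sigma = r\tau$ and $\phi\sigma = \phi\tau$ are exactly the content of Remark~\ref{rem:def-snd}\eqref{item:fla-val}.

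I do not expect a real obstacle here: the lemma is essentially a bookkeeping statement recording that $\II$ and $\FF$ behave as constant (name-only) objects under the semantic interpretation, and the substructural observation underlying it has already been isolated in the remark. The only minor care needed is that Definition~\ref{def:snd} is extended to allow $A \in \{\II, \FF\}$, so one reads $\Gamma \models r = s : \II$ simply as the conjunction of $\Gamma \der r = s : \II$, $\Gamma \models$, and the universal statement $\all I, \sigma, \tau \, (I \force \sigma = \tau : \Gamma \Imp r \sigma = s \tau \in \II(I))$, with the analogous reading for $\FF$.
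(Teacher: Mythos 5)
The factorization $r\sigma = r\tau = s\tau$ and the use of Remark~\ref{rem:def-snd}\eqref{item:fla-val} to establish the first leg is fine, and matches the paper's reduction to a ``diagonal'' instance. The problem is the justification of the second leg. You argue that $\Gamma \der \phi = \psi : \FF$ means $\phi$ and $\psi$ ``denote the same element of the free face lattice on the names in $\Gamma$,'' and that substitution preserves this. That premise is false when $\Gamma$ contains a restriction: e.g.\ for $\Gamma = i:\II, (i=0)$ one has $\Gamma \der (i=0) = 1 : \FF$, yet $(i=0) \neq 1$ in the free face lattice, and a substitution $\tau$ sending $i$ to some $r$ with $(r=0) \neq 1$ would break the conclusion. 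The paper itself flags this subtlety when it says that $\Gamma \der \phi = \psi : \FF$ coincides with lattice equality only ``in such a context'' \emph{without} restrictions, and this is precisely why the definition of $I \force \sigma : \Gamma,\phi$ requires $\phi\sigma = 1$.

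The fix, which is exactly what the paper does, is to reverse the order of your two operations: first substitute to land in a \emph{name} context $I$, then read off lattice equality there. Concretely, from $\Gamma \der \phi = \psi : \FF$ and $I \der \tau : \Gamma$ you get $I \der \phi\tau = \psi\tau : \FF$ by the substitution rule; since $I$ has no restrictions, this judgment is equivalent to $\phi\tau = \psi\tau \in \FF(I)$. So the conclusion you want is right, but the reason you gave for it is not, and for $\FF$ the gap is fatal rather than cosmetic. (For $\II$ the same caution is in order even if one believes interval equality in CCHM ignores restrictions; the paper's ordering avoids having to settle the question.) Once this step is corrected, the rest of the argument, including the check that the auxiliary reflexivity clauses of Definition~\ref{def:snd} hold, is straightforward and agrees with the paper.
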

\begin{proof}
  \eqref{item:9} By virtue of
  Remark~\ref{rem:def-snd}~\eqref{item:fla-val} it is enough to show
  $r \sigma = s \sigma \in \II(I)$ for $I \force \sigma : \Gamma$.
  But then by applying the substitution $I \der \sigma : \Gamma$ we
  get $I \der r\sigma = s \sigma : \II$, and thus $r \sigma = s \sigma
  \in \II(I)$ since the context $I$ does not contain restrictions.
  The proof of \eqref{item:15} is analogous.
\end{proof}

\begin{lemma}
  \label{lem:type-conv-snd}
  The rule for type conversion is sound:
  \begin{mathpar}
    \inferrule {\Gamma \models a = b : A \\
      \Gamma \models A = B } %
    { \Gamma \models a = b : B}
  \end{mathpar}
\end{lemma}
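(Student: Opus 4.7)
The plan is to unfold the definition of $\Gamma\models a = b : B$ from Definition~\ref{def:snd} and verify its three components directly from the hypotheses $\Gamma\models a = b : A$ and $\Gamma\models A = B$, using only the machinery already developed (mainly Lemma~\ref{lem:eq-resp-trans} and the PER properties recorded in Remark~\ref{rem:def-snd}).

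\textbf{Step 1 (syntactic judgment).} From $\Gamma\models a = b : A$ we extract $\Gamma\vdash a = b : A$, and from $\Gamma\models A = B$ we extract $\Gamma\vdash A = B$. The usual type-conversion rule of cubical type theory then gives $\Gamma\vdash a = b : B$, which is the syntactic part of the conclusion.

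\textbf{Step 2 (well-formedness of the target type).} From $\Gamma\models A = B$, together with $\Gamma\vdash B$ (which is part of unpacking $\Gamma\vdash A = B$), we obtain $\Gamma\models B$ as noted in Remark~\ref{rem:def-snd}. This supplies the $\Gamma\models B$ clause required in the definition of $\Gamma\models a = b : B$.

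\textbf{Step 3 (semantic condition).} Fix $I$ and substitutions with $I\force \sigma = \tau : \Gamma$. Since $\force{\cdot = \cdot : \Gamma}$ is a partial equivalence relation, we also have $I\force \sigma = \sigma : \Gamma$. Applying the semantic hypothesis of $\Gamma\models A = B$ to $\sigma = \sigma$ yields $I\force A\sigma = B\sigma$. On the other hand, the semantic hypothesis of $\Gamma\models a = b : A$ applied to $\sigma = \tau$ gives $I\force a\sigma = b\tau : A\sigma$. Invoking Lemma~\ref{lem:eq-resp-trans}\eqref{item:eq-resp}\eqref{item:eq2} with the type equality $I\force A\sigma = B\sigma$ transfers this to $I\force a\sigma = b\tau : B\sigma$, which is exactly the required semantic clause.

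There is no genuine obstacle here: the only mildly subtle point is remembering that the semantic condition of $\Gamma\models A = B$ is stated for pairs $\sigma = \tau$, so to obtain the one-substitution fact $I\force A\sigma = B\sigma$ one first uses reflexivity at $\sigma$ coming from the PER structure of Remark~\ref{rem:def-snd}. Once this is in place, the argument is a single application of the type-respect lemma.
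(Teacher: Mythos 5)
Your proof is correct and follows the same route as the paper's: pass to a fixed $I\force\sigma=\tau:\Gamma$, obtain $I\force\sigma=\sigma:\Gamma$ by reflexivity, deduce $I\force A\sigma=B\sigma$, and transfer $I\force a\sigma=b\tau:A\sigma$ to type $B\sigma$ via Lemma~\ref{lem:eq-resp-trans}. The only difference is that you spell out the syntactic and well-formedness bookkeeping (Steps 1--2) that the paper leaves implicit.
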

\begin{proof}
  Suppose $I \force \sigma = \tau : \Gamma$. By assumption we have $I
  \force a \sigma = b \tau : A \sigma$.  Moreover also $I \force
  \sigma = \sigma : \Gamma$, so $I \force A \sigma = B \sigma$, and
  hence $I \force a \sigma = b \tau : B \sigma$ by
  Lemma~\ref{lem:eq-resp-trans} which was what we had to prove.
\end{proof}
\begin{lemma}
  \label{lem:subst-snd}
  \begin{mathpar}
    \inferrule { %
      \Gamma \models \sigma = \tau : \Delta \\
      \Delta \models A = B
    }%
    { \Gamma \models A \sigma = B \tau } %
    \and %
    \inferrule { %
      \Gamma \models \sigma = \tau : \Delta \\
      \Delta \models a = b : A
    }%
    { \Gamma \models a \sigma = b \tau : A \sigma } %
    \and %
    \inferrule { %
      \Gamma \models \sigma = \tau : \Delta \\
      \Delta \models \delta = \gamma : \Xi
    }%
    { \Gamma \models \delta \sigma  = \gamma \tau : \Xi } %
  \end{mathpar}
\end{lemma}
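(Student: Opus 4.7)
The plan is to unfold the three conclusions to their constituent pieces (a syntactic derivation, well-formedness of contexts, and the semantic clause on substitutions), and discharge each piece using already-established properties: the admissibility of substitution in the original calculus, the context premises hidden inside $\Gamma\models\sigma=\tau:\Delta$ and $\Delta\models A=B$, and, crucially, the fact that computability of substitutions composes.

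In more detail, for the first rule I would argue as follows. The syntactic judgement $\Gamma\vdash A\sigma=B\tau$ follows from $\Gamma\vdash\sigma=\tau:\Delta$ and $\Delta\vdash A=B$ by the (admissible) substitution rule of cubical type theory, which is part of $\Gamma\models\sigma=\tau:\Delta$ and $\Delta\models A=B$. The condition $\Gamma\models{}$ is immediate from $\Gamma\models\sigma=\tau:\Delta$. For the semantic clause, fix $I$ and $I\Vdash\delta=\gamma:\Gamma$. By the defining clause of $\Gamma\models\sigma=\tau:\Delta$ applied to this datum we obtain
\[
I\Vdash \sigma\delta = \tau\gamma : \Delta.
\]
Now applying the defining clause of $\Delta\models A=B$ to the substitution $I\Vdash\sigma\delta=\tau\gamma:\Delta$ yields $I\Vdash A(\sigma\delta)=B(\tau\gamma)$; by associativity of substitution this is $I\Vdash (A\sigma)\delta=(B\tau)\gamma$, which is what was required.

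The second rule is entirely analogous, simply replacing $A,B$ by $a,b$ and noting that the needed $\Gamma\models A\sigma$ (which is part of the conclusion $\Gamma\models a\sigma=b\tau:A\sigma$) is produced by the first rule applied to $\Gamma\models\sigma=\tau:\Delta$ and the $\Delta\models A$ extracted from $\Delta\models a=b:A$. For the third rule, one unfolds $\Gamma\models\delta\sigma=\gamma\tau:\Xi$ similarly: $\Gamma\models{}$ and $\Xi\models{}$ come from the premises; the syntactic derivation $\Gamma\vdash\delta\sigma=\gamma\tau:\Xi$ from admissible substitution; and the semantic clause, given $I\Vdash\rho=\rho':\Gamma$, is obtained by composing the two substitution clauses in the same way as above, producing first $I\Vdash\sigma\rho=\tau\rho':\Delta$ and then $I\Vdash(\delta\sigma)\rho=(\gamma\tau)\rho':\Xi$.

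There is no real obstacle in this proof: everything reduces to composing two "substitution" clauses and invoking the associativity of syntactic substitution. The only mild care needed is to observe that the three defining data of $\Gamma\models(\cdot)$ (syntactic judgement, well-formed context, semantic clause) really do all follow from the premises, and in particular that the \emph{right} side of the conclusion is well-typed in the expected context (e.g.\ $A\sigma$ rather than $B\tau$ in the type annotation of the second rule), which is no problem because $\Delta\models A=B$ supplies $\Delta\models A$ via Remark~\ref{rem:def-snd}.
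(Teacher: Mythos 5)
Your proof is correct and is exactly the unfolding the paper has in mind: the paper's own proof is the one-liner ``Immediate by definition,'' and your argument simply makes the composition of the two semantic clauses (together with associativity of syntactic substitution and the auxiliary facts from Remark~\ref{rem:def-snd} and Lemma~\ref{lem:eq-resp-trans}) explicit. No gap, same approach.
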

\begin{proof}
  Immediate by definition.
\end{proof}

\begin{lemma}
  \label{lem:pi-snd}
  The rules for $\Pi$-types are sound:
  \begin{enumerate}
    \openup 2\jot
  \item \label{eq:17}
    $\inferrule { %
      \Gamma \models A = A' \\
      \Gamma, x : A \models B = B' } %
    { \Gamma \models (x : A) \to B = (x : A') \to B' }$
  \item %
    \label{eq:18} $\inferrule { 
      \Gamma \models A = A' \\
      \Gamma, x : A \models t = t' : B } %
    { \Gamma \models \lambda x : A. t = \lambda x : A'. t' : (x : A)
      \to B } $ %
    \item %
    \label{eq:19}
    $\inferrule { %
      \Gamma \models w = w' : (x : A) \to B \\
      \Gamma \models u = u' : A } %
    { \Gamma \models w \, u = w' \, u' : B \subst x u }$ %
    \item %
    \label{eq:20}
    $\inferrule { %
      \Gamma, x : A \models t : B \\
      \Gamma \models u : A } %
    { \Gamma \models (\lambda x : A. t) \, u = t \subst x u : B \subst
      x u }$ %
    \item %
    \label{eq:21}
    $\inferrule { %
      \Gamma \models w : (x : A) \to B \\
      \Gamma \models w' : (x : A) \to B \\
      \Gamma, x : A \models w \, x = w' \, x : B} %
    { \Gamma \models w = w' : (x : A) \to B }$
  \end{enumerate}
\end{lemma}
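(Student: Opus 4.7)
The plan for all five rules is to fix $I \force \sigma = \tau : \Gamma$ and establish the required forcing at the appropriate type; the raw judgmental equalities are routine applications of the usual cubical type theory rules, so I concentrate on the semantic content. The key structural observation, used throughout, is that if $\Gamma \models A$ and $J \force u : A\sigma f$ for some $f \co J \to I$, then $(\sigma f,\su x u)$ is a forced substitution for $\Gamma, x : A$, and similarly for equalities; combined with the equation $B\sigma(f,\su x u) = B(\sigma f,\su x u)$ this lets me feed the induction hypothesis on $B$ or $t$ directly into the $\Pi$-case of the computability definition.

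For rule~\eqref{eq:17}, I apply $\rulename{Pi-C}$ to $(x : A\sigma) \to B\sigma$ and $(x : A'\tau) \to B'\tau$. The equality $I \force A\sigma = A'\tau$ is immediate from $\Gamma \models A = A'$, and for any $f \co J \to I$ with $J \force u : A\sigma f$ (resp.\ $J \force u = v : A\sigma f$) I form the extended, forced-equal substitutions into $\Gamma, x : A$ and invoke $\Gamma, x : A \models B = B'$. For rule~\eqref{eq:18}, I show the two $\lambda$-abstractions are forced equal in the $\Pi$-type by unfolding the $\rulename{Pi-C}$ clause: for each $f$ and $J \force u : A\sigma f$, the term $(\lambda x : A\sigma f.\, t\sigma f)\, u$ reduces via $\sred$ to $t\sigma f \subst x u = t(\sigma f, \su x u)$, and similarly on the right-hand side. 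The hypothesis $\Gamma, x : A \models t = t' : B$ produces $J \force t(\sigma f,\su x u) = t'(\tau f, \su x u) : B(\sigma f, \su x u)$, and the Expansion Lemma (Lemma~\ref{lem:exp}) transports this forced equality back along the $\sred$ reductions.

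Rule~\eqref{eq:19} is an application of the definition of forcing at $\Pi$-types: since $\Gamma \models w = w' : (x:A)\to B$ gives $J \force w\sigma f = w'\tau f$ at the $\Pi$-type and $J \force u\sigma f = u'\tau f : A\sigma f$, the $\rulename{Pi-C}$ clause yields $J \force (w\,u)\sigma f = (w'\,u')\tau f : B(\sigma f, \su x {u\sigma f})$, and the target type matches because $B\subst x u \sigma f = B(\sigma f, \su x {u\sigma f})$. Rule~\eqref{eq:20} is then immediate from the Expansion Lemma applied to the $\sred$-step for $\beta$, using \eqref{eq:19} (or directly~\eqref{eq:18}) to see that the right-hand side is computable. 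Rule~\eqref{eq:21} is the extensionality-style principle: for each $f \co J \to I$ and $J \force u : A\sigma f$, extending both substitutions by $\su x u$ and applying $\Gamma, x:A \models w\,x = w'\,x : B$ yields $J \force w\sigma f\,u = w'\tau f\,u : B(\sigma f, \su x u)$, which is exactly the premise of Remark~\ref{rem:transitivity-and-pi} and therefore gives $I \force w\sigma = w'\tau : ((x:A)\to B)\sigma$.

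The main obstacle I anticipate is not any single logical step but the bookkeeping: one must repeatedly verify the equations $B\sigma(f,\su x u) = B(\sigma f,\su x u)$ and $B\subst x u \sigma = B(\sigma, \su x {u\sigma})$ for composed substitutions, check that $\lambda$-annotations interact correctly with $\sigma$ (so that the $\sred$-reduct lives in the claimed type), and confirm that the extended substitutions $(\sigma f,\su x u)$ and $(\tau f,\su x u)$ are genuinely forced-equal in $\Gamma, x : A$ whenever $J \force u : A\sigma f$, which uses Lemma~\ref{lem:per-field} to promote $J \force u : A\sigma f$ to $J \force u = u : A\sigma f$. Once this substitution calculus is in place, each of the five rules reduces to a direct appeal to the corresponding clause of the computability definition together with the Expansion Lemma.
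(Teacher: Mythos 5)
Your overall strategy coincides with the paper's: fix a forced-equal substitution, extend it with computable arguments into $\Gamma, x : A$, feed the result through the $\Pi$-clauses of the computability definition, and use the Expansion Lemma to transport along the $\sred$-reduct for the $\lambda$- and $\beta$-cases. The one place your write-up is loose concerns which arguments you quantify over. In the paper's treatment of \eqref{eq:18}, one takes $J \force u = v : A\sigma f$ and feeds $(\sigma f, \su x u)$, $(\tau f, \su x v)$ into the hypothesis; after two applications of the Expansion Lemma this yields $J \force (\lambda x : A. t)\sigma f\,u = (\lambda x : A'. t')\tau f\,v : B(\sigma f, \su x u)$, which by Remark~\ref{rem:transitivity-and-pi} simultaneously gives computability of each $\lambda$-abstraction and their forced equality. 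Because you instead fix a single $u$ on both sides, you only obtain the equality clause of the forcing definition and have not yet discharged the suppressed premises $I \force (\lambda x : A. t)\sigma : C\sigma$ and $I \force (\lambda x : A'. t')\tau : C\sigma$; these need the $u,v$ form (or a separate argument). The same slip appears in \eqref{eq:21}, where you claim the $u$-only condition ``is exactly the premise of Remark~\ref{rem:transitivity-and-pi}''; that Remark is stated with $u,v$ ranging over a forced equality, so your condition does not literally match. In that case the argument is still sound because the premises $\Gamma \models w : (x:A)\to B$ and $\Gamma \models w' : (x:A)\to B$ already supply the computability of each side, so one can fall back on the raw definition of forced equality at $\Pi$-types, as the paper does. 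Both issues are easily repaired by starting from $J \force u = v : A\sigma f$ throughout, which is what Remark~\ref{rem:transitivity-and-pi} is designed to let you do in one shot.
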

\begin{proof} Abbreviate $(x: A) \to B$ by $C$.  We will make use of
  Remark~\ref{rem:transitivity-and-pi}.

  \eqref{eq:17} It is enough to prove this in the case where $\Gamma$
  is of the form $I$, in which case this directly follows by \repi.

  \eqref{eq:18}
  %
  Suppose $\Gamma \models A = A'$ and $\Gamma, x : A \models t = t' :
  B$; this entails $\Gamma, x: A \models B$.  For $I \force \sigma =
  \tau : \Gamma$ we show $I \force (\lambda x : A. t) \sigma =
  (\lambda x : A'. t') \tau : C \sigma$.  For this let $J \force u = v
  : A \sigma f$ where $f \co J \to I$.  Then also $J \force u = v : A'
  \tau f$,
  \begin{align*}
    &J \der (\lambda x : A. t) \sigma f \, u \sred t (\sigma f, \su x
    u ) : B (\sigma f, \su x u ), \text{ and}
    \\
    &J \der (\lambda x : A'. t') \tau f \, v \sred t' (\tau f, \su x
    v ) : B (\tau f, \su x v).
  \end{align*}
  Moreover, $J \force (\sigma f, \su x u ) = (\tau f, \su x v ):
  \Gamma, x :A$, and so $J \force B(\sigma f, \su x u ) = B(\tau f,
  \su x v )$
  and
  \[
  J \force t (\sigma f, \su x u ) = t' (\tau f, \su x v ) : B (\sigma
  f, \su x u )
  \]
  which gives
  \begin{align*}
    &J \force (\lambda x : A. t) \sigma f \, u = t (\sigma f, \su x u
    ) : B (\sigma f, \su x u ), \text{ and}
    \\
    &J \force (\lambda x : A'. t') \tau f \, v = t' (\tau f, \su x v )
    : B (\tau f, \su x v),
  \end{align*}
  by applying the Expansion Lemma twice, and thus also
  \[
  J \force (\lambda x : A. t) \sigma f \, u = (\lambda x : A'. t')
  \tau f \, v : B (\sigma f, \su x u )
  \]
  what we had to show.

  \eqref{eq:19} For $I \force \sigma = \tau : \Gamma$ we get $I \force
  w \sigma = w' \tau : C \sigma$ and $I \force u \sigma = u' \tau : A
  \sigma$; so also $I \force w \sigma: C \sigma$, therefore $I \force
  (w \, u) \sigma = w \sigma \, u'\tau = (w'\, u') \tau : B (\sigma,
  \su x u)$.

  \eqref{eq:20} Given $I \force \sigma = \tau : \Delta$ we get, like
  in \eqref{eq:18}, $I \force (\lambda x : A. t) \sigma \, u \sigma =
  t (\sigma, \su x {u\sigma}) : B (\sigma, \su x {u \sigma})$ using
  the Expansion Lemma; moreover, $I \force (\sigma, \su x {u \sigma})
  = (\tau, \su x {u \tau}) : \Gamma, x : A$, hence
  \[
  I \force (\lambda x : A. t) \sigma \, u \sigma = t (\sigma, \su x
  {u\sigma}) = t (\tau, \su x {u \tau}): B (\sigma, \su x {u
    \sigma}).
  \]

  \eqref{eq:21} Suppose $I \force \sigma = \tau : \Gamma$ and $J
  \force u : A \sigma f$ for $f \co J \to I$.  We have to show $J
  \force w \sigma f \, u = w' \tau f \, u : B(\sigma f, \su x u )$.
  We have
  \[
  J \force (\sigma f, \su x u ) = (\tau f, \su x u ): \Gamma,x:A
  \]
  and thus, by the assumption $\Gamma, x : A \models w \, x = w' \, x
  : B$, we get
  \[
  J \force (w \, x) (\sigma f, \su x u ) = (w'\, x) (\tau f, \su x u)
  : B (\sigma f, \su x u ).
  \]
  Since $x$ does neither appear in $w$ nor in $w'$ this was what we
  had to prove.
\end{proof}

\begin{lemma}
  \label{lem:sigma-snd}
  The rules for $\Sigma$-types are sound:
  \begin{enumerate}
    \openup 2\jot
  \item \label{eq:sigma-intro} $\inferrule { \Gamma \models A = A' \\
      \Gamma, x : A \models B = B' } %
    {\Gamma \models (x : A) \times B = (x : A') \times B'}$
  \item \label{eq:pair-intro} $\inferrule { %
      \Gamma, x : A \models B \\
      \Gamma \models u = u' : A\\
      \Gamma \models v = v' : B \subst x u\\
    } %
    { \Gamma \models (u,v) = (u',v') : (x : A) \times B }$
  \item \label{eq:proj}
    $\inferrule { %
      \Gamma, x : A \models B \\
      \Gamma \models w = w' : (x : A) \times B}
    { \Gamma \models w.1 = w'.1 : A\\\\
      \Gamma \models w.2 =  w'.2 : B \subst x {w.1} }$
  \item \label{eq:pair-beta} $\inferrule { %
      \Gamma, x : A \models B \\
      \Gamma \models u : A \\
      \Gamma \models v : B \subst x u } %
    { \Gamma \models (u,v).1 = u : A \\\\
      \Gamma \models (u,v).2 = v : B \subst x u}$
  \item \label{eq:pair-eta}
    $\inferrule { %
      \Gamma, x : A \models B \\
      \Gamma \models w : (x : A) \times B\\
      \Gamma \models w' : (x : A) \times B\\
      \Gamma \models w.1 = w'.1 : A\\
      \Gamma \models w.2 =  w'.2 : B \subst x {w.1} }
    { \Gamma \models w = w' : (x : A) \times B}$
  \end{enumerate}
\end{lemma}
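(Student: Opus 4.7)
The plan is to follow the template of Lemma~\ref{lem:pi-snd} line for line, with clause \rcsi\ standing in for \rcpi\ and the $\beta$-reductions for projections standing in for $\beta$-reduction for application. Item \eqref{eq:sigma-intro} reduces (as with \eqref{eq:17}) to the case $\Gamma = I$, where it is immediate from \resi\ applied to the hypotheses $I \models A = A'$ and $I, x : A \models B = B'$. Item \eqref{eq:proj} is read off the $\Sigma$-clause of term-forcing: for $I \force \sigma = \tau : \Gamma$, the hypothesis $I \force w \sigma = w' \tau : ((x{:}A) \times B)\sigma$ unfolds directly to the two projection equalities at the required types, using $\Gamma, x : A \models B$ to see that $B \subst x {w.1}$ is interpreted correctly. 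Item \eqref{eq:pair-eta} is the converse: the assumptions supply the two projection equalities, which re-package under \rcsi\ to the required equality of pairs (the $\Sigma$-clause for $I \force \cdot = \cdot : \cdot$ is precisely this combination).

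For \eqref{eq:pair-beta}, the key observation is that $(u,v).1 \sred u : A$ and $(u,v).2 \sred v : B \subst x u$ are substitution-stable reductions (no system, no negated face, no premise referring to another reduction). So for $I \force \sigma : \Gamma$ the Expansion Lemma, applied to the hypotheses $I \force u\sigma : A\sigma$ and $I \force v\sigma : B(\sigma, \su x {u\sigma})$, immediately yields
\[
I \force ((u,v)\sigma).1 = u\sigma : A\sigma \quad\text{and}\quad I \force ((u,v)\sigma).2 = v\sigma : B(\sigma, \su x {u\sigma}),
\]
which is exactly what is required.

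The most delicate item is \eqref{eq:pair-intro}. For $I \force \sigma = \tau : \Gamma$ I would aim to show $I \force (u,v)\sigma = (u',v')\tau : ((x{:}A)\times B)\sigma$ through the \rcsi\ clause, that is, by verifying
\begin{align*}
  &I \force ((u,v)\sigma).1 = ((u',v')\tau).1 : A\sigma, \text{ and}\\
  &I \force ((u,v)\sigma).2 = ((u',v')\tau).2 : B(\sigma, \su x {((u,v)\sigma).1}).
\end{align*}
The Expansion Lemma applied to the substitution-stable reductions $((u,v)\sigma).1 \sred u \sigma$ and $((u',v')\tau).1 \sred u'\tau$ (and analogously for the second projection), combined with the hypotheses $I \force u\sigma = u'\tau : A\sigma$ and $I \force v\sigma = v'\tau : B(\sigma, \su x {u\sigma})$ and transitivity/symmetry from Lemma~\ref{lem:eq-resp-trans}, yields both displayed lines. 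The only subtlety is that for the second line the type $B(\sigma, \su x {((u,v)\sigma).1})$ must first be identified with $B(\sigma, \su x {u\sigma})$; this follows from $\Gamma, x : A \models B$ applied to the forced equality $I \force ((u,v)\sigma).1 = u\sigma : A\sigma$ just established, after which Lemma~\ref{lem:eq-resp-trans}\eqref{item:eq-resp} lets us transport the equality to the required type.

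I expect the main obstacle to be the bookkeeping of dependent types in \eqref{eq:pair-intro}, in particular ensuring the Expansion Lemma is invoked at the correct instance of $B$ and that the two projections are glued back together coherently; the remaining items are purely mechanical adaptations of the corresponding $\Pi$-type arguments.
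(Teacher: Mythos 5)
The paper omits the proof of this lemma (treating it as routine given the $\Pi$-type case), and your proposal is a correct reconstruction following the template of the paper's proof of Lemma~\ref{lem:pi-snd}. Items \eqref{eq:sigma-intro}, \eqref{eq:proj}, and \eqref{eq:pair-eta} are handled exactly as the $\Pi$-type analogues, and \eqref{eq:pair-beta} and \eqref{eq:pair-intro} correctly use the Expansion Lemma on the substitution-stable projection $\beta$-reductions; you also correctly identify the one genuinely dependent-type subtlety in \eqref{eq:pair-intro}, namely the need to transport the second-projection equality along $I \force B(\sigma, \su x {((u,v)\sigma).1}) = B(\sigma, \su x {u\sigma})$.

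One small imprecision worth tightening: in \eqref{eq:pair-beta} you quantify over $I \force \sigma : \Gamma$ and conclude $I \force ((u,v)\sigma).1 = u\sigma : A\sigma$, but the conclusion $\Gamma \models (u,v).1 = u : A$ requires the two-sided instance, $I \force ((u,v)\sigma).1 = u\tau : A\sigma$ for $I \force \sigma = \tau : \Gamma$. This is exactly the extra step the paper carries out explicitly for $\Pi$ in \eqref{eq:20} via $I \force (\sigma, \su x {u\sigma}) = (\tau, \su x {u\tau}) : \Gamma, x : A$; here one composes the Expansion Lemma equality with $I \force u\sigma = u\tau : A\sigma$ (from $\Gamma \models u : A$) by transitivity. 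Similarly, in \eqref{eq:pair-intro} the Expansion Lemma on $((u',v')\tau)$ first yields an equality at type $A\tau$ (resp. $B(\tau, \su x {u'\tau})$), which must be transported to $A\sigma$ (resp. the $\sigma$-instance) before transitivity applies; your invocation of Lemma~\ref{lem:eq-resp-trans} covers this but it deserves to be spelled out, since the type-transport step is load-bearing. With those details supplied, the proof is complete and matches the paper's intended argument.
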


\begin{lemma}
  \label{lem:natrec-is-computable}
  Given $I, x : \N \models C$ we have:
  \begin{enumerate}
    \openup 2\jot
  \item \label{eq:natrec-comp}
    $\inferrule { %
      I \force u : \N \\
      I \force z : C \subst x 0\\
      I \force s : (x : \N) \limp C \limp C \subst x {\suc x}\\
    } %
    { I \force \natrec \, u \, z \, s : C \subst x u \\\\
      I \force \natrec \, u \, z \, s = (\natrec \, u \, z \, s) \dn :
      C \subst x u }$
  \item \label{eq:natrec-eq}
    $\inferrule { %
      I \force u = u' : \N \\
      I \force z = z' : C \subst x 0\\
      I \force s = s' : (x : \N) \limp C \limp C \subst x {\suc x}\\
    } %
    { I \force \natrec \, u \, z \, s = \natrec \, u' \, z' \, s' : C
      \subst x u}$
  \end{enumerate}
\end{lemma}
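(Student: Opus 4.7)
The plan is to prove both parts by simultaneous main induction on $I \force u : \N$ for (1) and on $I \force u = u' : \N$ for (2), mirroring the three clauses of the definition of $\N$-forcing under rule \rcn. Throughout I will use that $I, x : \N \models C$ together with $I \force u : \N$ yields $I \force C \subst x u$ (and correspondingly $I \force C \subst x u = C \subst x {u'}$ for part (2)) by Remark~\ref{rem:def-snd}, so one can transport freely along these type equalities.

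For $u = 0$ (and $u' = 0$), the reduction $\natrec\,0\,z\,s \sred z$ is substitution-stable and $I \force z : C \subst x 0$ is given, so the Expansion Lemma immediately yields $I \force \natrec\,0\,z\,s = z : C \subst x 0$; the equality needed for (2) then follows from $I \force z = z' : C \subst x 0$ via Lemma~\ref{lem:eq-resp-trans}. For $u = \suc u_0$ (and analogously $u' = \suc u_0'$), I would use that $\natrec\,(\suc u_0)\,z\,s \sred s\,u_0\,(\natrec\,u_0\,z\,s)$ is substitution-stable, invoke the IH on the subderivation for $u_0$ to handle $\natrec\,u_0\,z\,s$, combine with the assumption on $s$ to obtain computability (resp.\ forced equality) of $s\,u_0\,(\natrec\,u_0\,z\,s)$ at type $C \subst x {\suc u_0}$, and conclude by the Expansion Lemma.

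The hard part will be the non-introduced case. There the one-step reduction $\natrec\,u\,z\,s \red \natrec\,u\dn\,z\,s$ is \emph{not} substitution-stable: for some $f \co J \to I$ the term $uf$ may itself be introduced, so the single-step reduct of $(\natrec\,u\,z\,s)f$ is $zf$ or $sf\,t\,(\natrec\,t\,zf\,sf)$ rather than $\natrec\,u\dn f\,zf\,sf$. To apply the Expansion Lemma I would therefore verify, for each $f \co J \to I$ separately, that $(\natrec\,u\,z\,s)f$ has a weak-head reduct computable at $(C \subst x u)f$ and that this reduct is forced equal to $(\natrec\,u\,z\,s)\dn f = \natrec\,u\dn f\,zf\,sf$. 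The argument case-splits on whether $uf$ is introduced (invoking the IH at a subderivation of $J \force uf : \N$ as in the previous cases) or non-introduced (invoking the IH at the strictly smaller premise $J \force uf\dn : \N$ supplied by the non-introduced clause of \rcn\ for $I \force u : \N$). The remaining coherence $J \force (\natrec\,u\,z\,s)f\dn = \natrec\,u\dn f\,zf\,sf$ then follows from the built-in substitution condition $J \force uf\dn = u\dn f : \N$ on the $\N$-forcing of $u$, combined with part (2) of the present lemma applied at this lower rank.
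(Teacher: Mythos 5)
Your proof is correct and follows the paper's own argument: simultaneous induction on the derivations of $I \force u : \N$ and $I \force u = u' : \N$, with the Expansion Lemma applied in every case, and the non-introduced case handled by checking, for each $f \co J \to I$, that the reduct $\natrec\,(uf\dn)\,zf\,sf$ is computable (IH at the premise $J \force uf\dn : \N$) and coherent with $(\natrec\,(u\dn)\,z\,s)f$ via the built-in condition $J \force uf\dn = u\dn f : \N$ together with part~(2) applied to that smaller derivation. Note only that your sub-case ``$uf$ introduced'' is vacuous: the non-introduced clause of $I \force u : \N$ already demands $uf \hr^\N$ for every $f$ (and introduced terms are normal), so the genuine obstruction to substitution-stability is merely that $uf\dn$ and $u\dn f$ need not coincide syntactically, which is exactly what your coherence step addresses.
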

\begin{proof}
  By simultaneous induction on $I \force u : \N$ and $I \force u = u'
  : \N$.

  \case\ $I \force 0 : \N$. We have $I \der \natrec \, 0 \, z \, s
  \sred z : C \subst x 0$ so~\eqref{eq:natrec-comp} follows from the
  Expansion Lemma.

  \case\ $I \force 0 = 0 : \N$. \eqref{eq:natrec-eq} immediately follows
  from~\eqref{eq:natrec-comp} and $I \force z = z' : C \subst x 0$.

  \case\ $I \force \suc u : \N$ from $I \force u : \N$.  We have
  \[
  I \der \natrec \, (\suc u) \, z \, s \sred s \, u \, (\natrec \, u
  \, z \, s) : C \subst x {\suc u}
  \]
  and $I \force s \, u \, (\natrec \, u \, z \, s) : C \subst x {\suc
    u}$ by \IH, and using that $u$ and $s$ are computable.  Hence we
  are done by the Expansion Lemma.

  \case\ $I \force \suc u = \suc u' : \N$ from $I \force u = u' : \N$.
  \eqref{eq:natrec-eq} follows from~\eqref{eq:natrec-comp} and $ I
  \force s = s' : (x : \N) \limp C \limp C \subst x {\suc x}$, $I
  \force u = u' : \N$, and the \IH.

  \case\ $I \force u : \N$ for $u$ non-introduced.  For $f \co J \to
  I$ we have
  \[
  J \der (\natrec \, u \, z \, s)f \red \natrec \, (u f \dn) \, zf \,
  sf : C (f, \su x {uf \dn}).
  \]
  Moreover, we have $I \force u f \dn$ and $I \force u f \dn = u \dn f
  : \N$ with a shorter derivation (and thus also $J \force C (f, \su x
  {uf \dn}) = C \subst x {u \dn} f$), hence by \IH
  \begin{align*}
    J &\force \natrec \, (u f \dn) \, zf \, sf : C \subst x {u \dn} f,
    \text{
      and}\\
    J &\force \natrec \, (u f \dn) \, zf \, sf = (\natrec \, (u \dn)
    \, z \, s) f : C \subst x {u \dn} f,
  \end{align*}
  which yields the claim by the Expansion Lemma.

  \case\ $I \force u = u' : \N$ for $u$ or $u'$ non-introduced.  We
  have
  \[
  I \force \natrec \, u \, z \, s = \natrec \, (u \dn) \,
  z \, s : C \subst x u
  \]
  by either \eqref{eq:natrec-comp} (if $u$ is non-introduced) or by
  reflexivity (if $u$ is an introduction); likewise for $u'$.  So with
  the \IH\ for $I \force u \dn = u' \dn : \N$ we obtain
  \begin{equation*}
    I \force \natrec \, u \, z \, s = \natrec \, (u \dn) \, z \, s %
    = \natrec \, (u' \dn) \, z' \, s' = \natrec \, u' \, z' \, s' : C
    \subst x u
  \end{equation*}
  what we had to show.
\end{proof}

We write $\num n$ for the numeral $\suc^n 0$ where $n \in \NN$.

\begin{lemma}
  \label{lem:n-canon}
  If $I \force u : \N$, then $I \force u = \num n : \N$ (and hence
  also $I \der u = \num n : \N$) for some $n \in \NN$.
\end{lemma}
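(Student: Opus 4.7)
The plan is to proceed by well-founded induction on the ordinal height of the derivation of $I \force u : \N$, splitting cases according to which of the three clauses for \rcn\ in the definition of $I \lforce u : A$ produced the judgment.

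The two introduction clauses are immediate: in the base clause $u = 0$, I take $n = 0$ and use reflexivity from Lemma~\ref{lem:per-field}; in the successor clause $u = \suc u'$, the premise $I \force u' : \N$ is a sub-derivation of strictly smaller height, so the induction hypothesis supplies some $m \in \NN$ with $I \force u' = \num m : \N$, and the successor rule for the equality predicate on $\N$ then gives $I \force \suc u' = \num{m+1} : \N$.

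The interesting case is when $u$ is non-introduced. The derivation then has, among its premises, the clause $J \force u f \dn : \N$ for every $f \co J \to I$; specializing to $f = \id \co I \to I$ yields the sub-derivation $I \force u \dn : \N$ of strictly smaller height. The induction hypothesis therefore supplies some $n \in \NN$ with $I \force u \dn = \num n : \N$. Combining this with $I \force u = u \dn : \N$ from Lemma~\ref{lem:red-eq}~\eqref{item:nat-red-cl} and invoking transitivity (Lemma~\ref{lem:eq-resp-trans}~\eqref{item:ter-trans}) delivers the desired $I \force u = \num n : \N$; the judgmental equality $I \der u = \num n : \N$ then follows from Lemma~\ref{lem:cpred-complete}.

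The one place where care is needed is well-foundedness: the non-introduced clause branches over all substitutions $f$, so the argument cannot rest on naive structural induction but must use the ordinal-height measure that the paper has explicitly attached to forcing derivations. Once that is acknowledged, the three sub-derivations invoked (the successor premise, and the $f = \id$ instance of the non-introduced premise) are strictly smaller, and all remaining steps are direct applications of already-established lemmas; I do not anticipate any further obstacle.
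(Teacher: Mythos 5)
Your proof is correct and follows essentially the same route as the paper: induction on the derivation of $I \force u : \N$, with the zero and successor cases handled directly, and the non-introduced case resolved by instantiating the premise at $f = \id$ to obtain $I \force u \dn : \N$, then combining Lemma~\ref{lem:red-eq}~\eqref{item:nat-red-cl} with transitivity. The remark about ordinal height versus naive structural induction is a reasonable precaution, though here the $f = \id$ instance is a proper sub-derivation so either measure suffices.
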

\begin{proof}
  By induction on $I \force u : \N$.  The cases for zero and successor
  are immediate.  In case $u$ is non-introduced, then $I \force u \dn
  = \num n$ for some $n \in \NN$ by \IH.  By
  Lemma~\ref{lem:red-eq}~\eqref{item:nat-red-cl} and transitivity
  we conclude $I \force u = \num n : \N$.
\end{proof}

\begin{lemma}
  \label{lem:n-discrete}
  $I \force \cdot = \cdot : \N$ is discrete, i.e., if $I \force u : \N$, $I
  \force v : \N$, and $J \force u f = v g : \N$ for some $f, g \co J \to
  I$, then $I \force u = v : \N$.
\end{lemma}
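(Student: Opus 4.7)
The plan is to reduce to canonicity (Lemma~\ref{lem:n-canon}) in both positions and then show that judgmental equality of numerals at type $\N$ forces syntactic equality of the indices.

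First, by Lemma~\ref{lem:n-canon} applied to $I \force u : \N$ and $I \force v : \N$, there are $n, m \in \NN$ with $I \force u = \num n : \N$ and $I \force v = \num m : \N$. Observe that numerals are closed terms, so $\num n f = \num n$ and $\num m g = \num m$; hence by Monotonicity (Lemma~\ref{lem:subst}) applied to the two substitutions $f,g \co J \to I$ we obtain $J \force uf = \num n : \N$ and $J \force vg = \num m : \N$. Combining with the hypothesis $J \force uf = vg : \N$ via transitivity and symmetry (Lemma~\ref{lem:eq-resp-trans}) yields
\[
  J \force \num n = \num m : \N.
\]

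The crux is now the following auxiliary claim: if $K \force \num n = \num m : \N$, then $n = m$. This I would prove by induction on $n$ (alternatively on $\max(n,m)$). The point is that both $\num n$ and $\num m$ are introduced, so the third clause in the definition of $K \lforce \cdot = \cdot : \N$ (which requires at least one side to be non-introduced) is inapplicable; only the rules for $0=0$ and $\suc u = \suc v$ remain. These are syntax-directed on introduced arguments: the case $n=0$ forces $m=0$, and the case $n = n'+1$ forces $m = m'+1$ together with $K \force \num{n'} = \num{m'} : \N$, to which the induction hypothesis applies. I expect this inversion step to be the only real content of the proof; it rests on Lemma~\ref{lem:indep-of-derivation} so that the syntax-directedness may safely be read off the rules.

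Once $n = m$ is established, we have $I \force u = \num n : \N$ and $I \force v = \num n : \N$, and another application of symmetry and transitivity from Lemma~\ref{lem:eq-resp-trans} gives $I \force u = v : \N$.
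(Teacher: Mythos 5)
Your proposal is correct and follows the paper's own proof essentially step for step: reduce both $u$ and $v$ to numerals via Lemma~\ref{lem:n-canon}, push through $f,g$ by monotonicity and combine with the hypothesis using symmetry and transitivity to get $J \force \num n = \num m : \N$, read off $n = m$, and conclude. Your spelled-out inversion argument for the step $J \force \num n = \num m : \N \Imp n = m$ (only the clauses for $0 = 0$ and $\suc\cdot = \suc\cdot$ are applicable when both sides are introduced) is exactly what the paper leaves implicit with "hence $n = m$"; nothing is missing.
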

\begin{proof}
  By Lemma~\ref{lem:n-canon}, we have $I \force u = \num n : \N$ and
  $I \force v = \num m : \N$ for some $n,m \in \NN$, and thus $J
  \force \num n = u f = v g = \num m : \N$, i.e., $J \force \num n =
  \num m : \N$ and hence $n = m$ which yields $I \force u = v : \N$.
\end{proof}

\begin{lemma}
  \label{lem:path-sound}
  The rules for $\Path$-types are sound:
  \begin{enumerate}
    \openup 2\jot
  \item \label{eq:15} %
    $\inferrule { %
      \Gamma \models A = A' \\
      \Gamma \models u = u' :A\\
      \Gamma \models v = v' :A } %
    { \Gamma \models \Path\, A \, u \, v = \Path \, A' \, u' \, v'}$
  \item \label{eq:nabs-intro-snd} %
    $\inferrule { %
      \Gamma \models A \\
      \Gamma, i:\II \models t = t' : A } %
    {\Gamma \models \nabs i t = \nabs i t' : \Path\,A\,t(i0)\,t(i1)}$
  \item \label{eq:path-app-snd} %
    $\inferrule { %
      \Gamma \models w = w' : \Path\,A\,u\,v \\
      \Gamma \models r = r' : \II } %
    { \Gamma \models w \, r = w' \, r' : A}$
  \item \label{eq:path-app-endpoints-snd} %
    $\inferrule { %
      \Gamma \models w : \Path\,A\,u\,v\\
    } %
    { \Gamma \models w \, 0 = u : A\\
      \Gamma \models w \, 1 = v : A}$
  \item \label{eq:path-beta-snd} %
    $\inferrule { %
      \Gamma \models A\\
      \Gamma, i : \II \models t : A\\
      \Gamma \models r : \II
    } %
    { \Gamma \models (\nabs i t)\, r = t \subst i r : A}$
  \item \label{eq:path-eta-snd} %
    $\inferrule { %
      \Gamma \models w : \Path\,A\,u\,v \\
      \Gamma \models w' : \Path\,A\,u\,v \\
      \Gamma, i : \II \models w \, i = w' \, i : A\
    } %
    { \Gamma \models w = w' : \Path\,A\,u\,v \\}$
  \end{enumerate}
\end{lemma}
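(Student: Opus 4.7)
The plan is to treat all six rules by reducing each, via a fixed substitution $I \force \sigma = \tau : \Gamma$, to a statement over the name context $I$, and then to unfold the $\rcpa$ and $\repa$ clauses of the computability definition from Section~\ref{sec:computability}. Weak-head reductions for $\Path$-application, $(\nabs i t)\,r \sred t \subst i r$, are stable under name substitution, so the Expansion Lemma applies in its stable form $\sred$ whenever a reduct needs to be absorbed.

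Rule \eqref{eq:15} follows directly from $\repa$ using the instances $I \force A\sigma = A'\tau$, $I \force u\sigma = u'\tau : A\sigma$ and $I \force v\sigma = v'\tau : A\sigma$ supplied by the premises. For \eqref{eq:nabs-intro-snd}, given $f \co J \to I$ and $r \in \II(J)$, the paired substitution $J \force (\sigma f, \su i r) = (\tau f, \su i r) : \Gamma, i : \II$ combined with the hypothesis $\Gamma, i : \II \models t = t' : A$ gives $J \force t(\sigma f, \su i r) = t'(\tau f, \su i r) : A\sigma f$; since $(\nabs i t)\sigma f\,r \sred t(\sigma f, \su i r)$ and symmetrically for $t'$, the Expansion Lemma yields the required clause of $\rcpa$, and the endpoints are verified the same way. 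Rule \eqref{eq:path-app-snd} follows because Remark~\ref{rem:def-snd}\eqref{item:fla-val} gives $r\sigma = r'\tau \in \II(I)$, and the definition of path equality applied to $I \force w\sigma = w'\tau : \Path A\sigma u\sigma v\sigma$ with this common interval element delivers the claim.

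Rule \eqref{eq:path-app-endpoints-snd} is the delicate one: the $\rcpa$ clause applied to $I \force w\sigma : \Path A\sigma u\sigma v\sigma$ gives $I \force w\sigma\,0 = u\sigma : A\sigma$, but we need $I \force w\sigma\,0 = u\tau : A\sigma$, so we must compose with $I \force u\sigma = u\tau : A\sigma$. This last equation comes from inverting $\Gamma \models \Path A u v$ (implicit in $\Gamma \models w : \Path A u v$) through $\repa$, and is the one step that goes beyond the defining clauses. Rule \eqref{eq:path-beta-snd} then reduces, writing $r_0 := r\sigma = r\tau$, to combining the reduct $(\nabs i t)\sigma \, r_0 \sred t(\sigma, \su i {r_0}) : A\sigma$ with the equality $I \force t(\sigma, \su i {r_0}) = t(\tau, \su i {r_0}) : A\sigma$ via the Expansion Lemma, using that $i \notin \dom(\Gamma)$ is fresh for $A$ so that $A(\sigma, \su i {r_0}) = A\sigma$. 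Rule \eqref{eq:path-eta-snd} is handled by observing that the substitutions $(\sigma f, \su i r)$ and $(\tau f, \su i r)$ are forced equal as substitutions for $\Gamma, i : \II$, so applying the hypothesis to them yields $J \force w\sigma f\,r = w'\tau f\,r : A\sigma f$ directly, using once more that $i$ is fresh.

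The main technical obstacle is the inversion step for \eqref{eq:path-app-endpoints-snd}, which requires extracting computability of the endpoints from computability of the $\Path$-type; apart from this, the proof is a systematic application of $\rcpa$, $\repa$, the Expansion Lemma, and the symmetry and transitivity granted by Lemma~\ref{lem:eq-resp-trans}.
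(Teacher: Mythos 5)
Your proposal is correct and matches the paper's argument rule by rule: each case fixes $I \force \sigma = \tau : \Gamma$, unfolds the $\rcpa$/$\repa$ clauses over $J$-substitutions and interval elements, and absorbs the stable $\Path$-application reducts via the Expansion Lemma. The only cosmetic difference is in \eqref{eq:path-app-endpoints-snd}, where the paper first converts $w\sigma$ to type $\Path\,A\tau\,u\tau\,v\tau$ using $I \force \Path\,A\sigma\,u\sigma\,v\sigma = \Path\,A\tau\,u\tau\,v\tau$ and Lemma~\ref{lem:eq-resp-trans} and then converts the resulting equation back to $A\sigma$, whereas you extract $I \force u\sigma = u\tau : A\sigma$ from $\repa$ and compose by transitivity -- the same underlying inversion, packaged slightly differently.
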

\begin{proof}
  \eqref{eq:15} Follows easily by definition.

  \eqref{eq:nabs-intro-snd} For $I \force \sigma = \sigma' : \Gamma$
  we have to show
  \begin{equation}
    \label{eq:34}
    I \force (\nabs i t) \sigma = (\nabs i t') \sigma' : \Path\,A
    \sigma\,t(\sigma,\su i 0 )\,t(\sigma,\su i 1 ).
  \end{equation}
  For $f \co J \to I$ and $r \in \II (J)$ we have $J \force (\sigma f,
  \su i r ) = (\sigma' f, \su i r ) : \Gamma, i : \II$ and
  \begin{align*}
    J &\der (\nabs i t) (\sigma f) \, r \sred t (\sigma f, \su i r ) :
    A \sigma f, \text{ and}\\
    J &\der (\nabs i t') (\sigma' f) \, r \sred t' (\sigma' f, \su i r
    ) : A \sigma' f,
  \end{align*}
  and moreover $J \force t (\sigma f, \su i r ) = t' (\sigma' f, \su i
  r ) : A \sigma f$ and $J \force A \sigma f = A \sigma' f$ by
  assumption.  Hence the Expansion Lemma yields
  \begin{align*}
    J &\force (\nabs i t) (\sigma f) \, r = t (\sigma f, \su i r ) :
    A \sigma f, \text{ and}\\
    J &\force (\nabs i t') (\sigma' f) \, r = t' (\sigma' f, \su i r )
    : A \sigma f,
  \end{align*}
  in particular also, say $J \force (\nabs i t) \sigma \, 0 = t
  (\sigma, \su i 0 ) : A \sigma$ and $J \force (\nabs i t') \sigma' \,
  0 = t' (\sigma', \su i 0 ) = t (\sigma, \su i 0 ) : A \sigma$.  And
  hence \eqref{eq:34} follows.

  \eqref{eq:path-app-snd} Supposing $I \force \sigma = \sigma' :
  \Gamma$ we have to show $I \force (w \sigma) \, (r \sigma) =
  (w'\sigma') \, (r' \sigma) : A \sigma$.  We have $I \force w \sigma
  = w'\sigma' : \Path\,A \sigma\,u \sigma\,v \sigma$ and $r \sigma =
  r' \sigma'$, hence the claim follows by definition.

  \eqref{eq:path-app-endpoints-snd} Let $I \force \sigma = \sigma' :
  \Gamma$; we have to show, say, $I \force w \sigma \, 0 = u \sigma' :
  A \sigma$.  First, we get $I \force w \sigma : \Path\,A \sigma\,
  u\sigma \, v \sigma$. Since $\Gamma \models w : \Path\,A\,u\,v$ we
  also have $\Gamma \models \Path\,A\,u\,v$, hence
  \begin{equation}
    \label{eq:35}
    I \force \Path\,A \sigma \, u \sigma \, v\sigma = \Path\,A
    \sigma' \, u \sigma' \, v \sigma'.
  \end{equation}
  Hence we also obtain $I \force w \sigma : \Path\,A \sigma'\, u
  \sigma'\, v \sigma'$, and thus $I \force w \sigma \, 0 = u \sigma' :
  A \sigma'$.  But~\eqref{eq:35} also yields $I \force A \sigma = A
  \sigma'$ by definition, so $I \force w \sigma \, 0 = u \sigma' : A
  \sigma$ what we had to show.

  \eqref{eq:path-beta-snd} Similar to \eqref{eq:nabs-intro-snd} using
  the Expansion Lemma.

  \eqref{eq:path-eta-snd} For $I \force \sigma = \sigma' : \Gamma$, $f
  \co J \to I$, and $r \in \II (J)$, we have $J \force (\sigma f, \su
  i r ) = (\sigma'f,\su i r ) : \Gamma,i : \II$, and thus
  \begin{equation}
    \label{eq:36}
    J \force (w \, i)  (\sigma f, \su i r ) =  (w' \, i)
    (\sigma'f,\su i r )  : A \sigma f.
  \end{equation}
  But $(w \, i) (\sigma f, \su i r )$ is $w \sigma f ~ r$, and $(w' \,
  i) (\sigma'f, \su i r )$ is $w' \sigma' f ~ r$, so~\eqref{eq:36} is
  what we had to show.
\end{proof}

\begin{lemma}
  \label{lem:systems}
  Let $\phi_i \in \FF(I)$ and $\phi_1 \lor \dots \lor \phi_n = 1$.
  \begin{enumerate}
  \item\label{item:sys-typ} Let $I, \phi_i \lforce A_i$ and $I, \phi_i
    \land \phi_j \lforce A_i = A_j$ for all $i,j$; then
    \begin{enumerate}
    \item $I \lforce [ \phi_1 \smap A_1, \dots, \phi_n \smap A_n]$,
      and
    \item $I \lforce [ \phi_1 \smap A_1, \dots, \phi_n \smap A_n] =
      A_k$ whenever $\phi_k = 1$.
    \end{enumerate}
  \item\label{item:sys-ter} Let $I \lforce A$, $I,\phi_i \lforce t_i :
    A$, and $I,\phi_i \land \phi_j \lforce t_i = t_j : A$ for all
    $i,j$; then
    \begin{enumerate}
    \item $I \lforce [ \phi_1 \smap t_1, \dots, \phi_n \smap t_n] :
      A$, and
    \item $I \lforce [ \phi_1 \smap t_1, \dots, \phi_n \smap t_n] =
      t_k : A$ whenever $\phi_k = 1$.
    \end{enumerate}
  \end{enumerate}
\end{lemma}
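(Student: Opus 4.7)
Since $I$ is a name context, the disjunction property for $\FF$ applied to $\phi_1 \lor \dots \lor \phi_n = 1$ forces some $\phi_k$ to equal $1$, so each system in question is non-introduced and admits a weak-head reduction step; the same holds after any substitution $f \co J \to I$. For every such $f$, write $m(f)$ for the least index with $\phi_{m(f)} f = 1$ in $\FF(J)$, so the reduction rule for systems gives $A f \dn = A_{m(f)} f$ in the type case and $u f \dn = t_{m(f)} f$ in the term case (where $u = [\phi_1 \smap t_1, \dots, \phi_n \smap t_n]$). The plan is to discharge~(1) directly via the rules \rcni\ and \reni, and to discharge~(2) by invoking the Expansion Lemma, so as not to case-split on the shape of $I \lforce A$.

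For~(1a), I would apply \rcni; the suppressed premise $I \der A$ follows from Lemma~\ref{lem:cpred-complete} together with the typing rule for systems. The middle premise reduces to $J \lforce A_{m(f)} f$, which is immediate from the hypothesis $I, \phi_{m(f)} \lforce A_{m(f)}$ instantiated at $f$. For the coherence premise, given further $g \co K \to J$, both $\phi_{m(f)} fg = 1$ and $\phi_{m(fg)} fg = 1$, so their meet is also $1$ under $fg$, and the overlap hypothesis $I, \phi_i \land \phi_j \lforce A_i = A_j$ instantiated at $fg$ yields $K \lforce A_{m(f)} fg = A_{m(fg)} fg$, which is exactly $K \lforce A f \dn g = A fg \dn$. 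For~(1b), \reni\ applies since the system is non-introduced; for each $f$, minimality of $m(f)$ together with $\phi_k = 1$ gives $(\phi_{m(f)} \land \phi_k) f = 1$, so the overlap hypothesis produces $J \lforce A_{m(f)} f = A_k f$, and Lemma~\ref{lem:red-eq}~\eqref{item:8} (together with $I, \phi_k \lforce A_k$ instantiated at $f$) rewrites the right-hand side as $A_k f \dn$.

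For~(2a), I would apply the Expansion Lemma to $u$ and $A$; derivability of $I \der u : A$ follows as in~(1a). The first hypothesis of the Expansion Lemma becomes $J \lforce t_{m(f)} f : A f$, supplied by $I, \phi_{m(f)} \lforce t_{m(f)} : A$. The compatibility hypothesis becomes $J \lforce t_{m(f)} f = t_{m(\mathrm{id})} f : A f$, which follows from the overlap hypothesis applied at $\phi_{m(f)} \land \phi_{m(\mathrm{id})}$, using that both conjuncts become $1$ under $f$. The Expansion Lemma then yields $I \lforce u : A$ and $I \lforce u = u \dn = t_{m(\mathrm{id})} : A$. For~(2b), pick the given $k$: since $\phi_{m(\mathrm{id})} = \phi_k = 1$, their meet is $1$, so the assumption gives $I \lforce t_{m(\mathrm{id})} = t_k : A$, and transitivity via Lemma~\ref{lem:eq-resp-trans} closes the gap.

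The only subtlety—and the reason the lemma requires proof at all—is that the index $m(f)$ is not a substitution-invariant quantity: a $\phi_j$ with $j < m(\mathrm{id})$ may well become $1$ after some substitution. Consequently every compatibility check must be routed through the overlap hypothesis rather than through any attempted syntactic identification of $u f \dn$ with $(u \dn) f$, and this is exactly where the shape of the computability definition (with its built-in coherence condition on reduction and substitution) pays off.
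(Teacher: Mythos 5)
Your proof is correct and follows the paper's argument exactly: \rcni\ and \reni\ for the type case, the Expansion Lemma for the term case, and the overlap hypotheses to discharge the coherence conditions between the index chosen for a given substitution and that chosen for the identity. (One small inessential point: in step (1b) you invoke minimality of $m(f)$, but you only need $\phi_{m(f)} f = 1$ and $\phi_k f = 1$ to conclude $(\phi_{m(f)} \land \phi_k) f = 1$; minimality plays no role there.)
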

\begin{proof}
  \eqref{item:sys-typ} Let us abbreviate $[ \phi_1 \smap A_1, \dots,
  \phi_n \smap A_n]$ by $A$.  Since $A$ is non-introduced, we have to
  show $J \force A f \dn$ and $J \force A f \dn = A \dn f$.  For the
  former observe that $A f \dn$ is $A_k f$ with $k$ minimal such that
  $\phi_k f = 1$.  For the latter use that $J \force A_k f = A_l f$ if
  $\phi_k f = 1$ and $\phi_l = 1$, since $I,\phi_k \land \phi_l \force
  A_k = A_l$.

  \eqref{item:sys-ter} Let us write $t$ for $[ \phi_1 \smap t_1,
  \dots, \phi_n \smap t_n]$.  By virtue of the Expansion Lemma, it
  suffices to show $J \force t f \dn : A f$ and $K \force t f \dn = t
  \dn f : A f$.  The proof is just like the proof for types given
  above.
\end{proof}

\begin{lemma}
  \label{lem:sys-cover}
  Given $\Gamma \models \phi_1 \lor \dots \lor \phi_n = 1 : \FF$,
  then:
  \begin{mathpar}
    \inferrule { \Gamma, \phi_1 \models \J  ~\dots~  \Gamma, \phi_n
      \models \J }
    { \Gamma \models \J}
  \end{mathpar}
\end{lemma}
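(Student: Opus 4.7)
The plan is to reduce this to the disjunction property of the face lattice $\FF$, exploited via the fact that only name contexts $I$ appear on the computability side. Given the assumptions, I need to show $\Gamma \models \J$, which splits into a purely syntactic derivability part (that $\Gamma \der \J$) and the semantic clause that for $I \force \sigma = \tau : \Gamma$ the computability instance of $\J$ under $\sigma$ and $\tau$ holds. The derivability part is a rule of cubical type theory (combine $\Gamma \der \phi_1 \vee \dots \vee \phi_n = 1 : \FF$ with the $n$ derivations $\Gamma,\phi_i \der \J$), so I concentrate on the semantic part.

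First I would fix $I$ and $I \force \sigma = \tau : \Gamma$. Since $\Gamma \models \phi_1 \vee \dots \vee \phi_n = 1 : \FF$ we get, via Remark~\ref{rem:def-snd}~\eqref{item:fla-val} and reflexivity, that $\phi_1\sigma \vee \dots \vee \phi_n\sigma = 1$ in $\FF(I)$. Because $I$ contains only name variable declarations and \emph{no} restrictions, the disjunction property of $\FF$ recalled at the beginning of Section~\ref{sec:reduction} applies, producing some $k$ with $\phi_k\sigma = 1$. Remark~\ref{rem:def-snd}~\eqref{item:fla-val} also ensures $\phi_k\sigma = \phi_k\tau$, so $\phi_k\tau = 1$ as well.

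Next, from $I \force \sigma = \tau : \Gamma$ together with $\phi_k \sigma = \phi_k \tau = 1$ I get $I \force \sigma = \tau : \Gamma, \phi_k$ directly from the inductive clause for $I \force \sigma = \tau : \Gamma,\phi$. Invoking the premise $\Gamma, \phi_k \models \J$ then yields the required computability statement for $\J\sigma$ and $\J\tau$. This argument is uniform in the shape of $\J$: for a type or term judgment it gives $I \force A\sigma = B\tau$ or $I \force a\sigma = b\tau : A\sigma$, and for a substitution judgment $I \force \sigma' \delta = \tau' \gamma : \Xi$, simply by reading off what the conclusion of $\Gamma,\phi_k \models \J$ means in each case.

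The only subtle point, and the main obstacle worth checking, is that the disjunction property of $\FF$ is legitimately available in $I$: it holds precisely because $I$ is a pure name context without any $\phi$-restrictions, so the semantic equations $\phi_i\sigma = 1$ coincide with genuine equalities in the face lattice. All other steps are bookkeeping: the quantifier structure of $\models$, reflexivity of $I \force \sigma = \sigma : \Gamma$ (from $I \force \sigma = \tau : \Gamma$ by the partial equivalence properties in Remark~\ref{rem:def-snd}), and the definition of computability for restricted contexts.
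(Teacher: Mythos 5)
Your proof is correct and takes essentially the same route as the paper: both extract a common index $k$ with $\phi_k\sigma = \phi_k\tau = 1$ (relying on the disjunction property of $\FF$, which the paper leaves implicit and you make explicit) and then invoke $\Gamma,\phi_k \models \J$. The paper splits the presentation into first handling $I \force \sigma : \Gamma$ and then $I \force \sigma = \tau : \Gamma$, whereas you handle the relational case directly, but the content is the same.
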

\begin{proof}
  Let $\phi = \phi_1 \lor \dots \lor \phi_n$.  Say if $\J$ is a typing
  judgment of the form $A$.  For $I \force \sigma : \Gamma$ we have
  $\phi \sigma = 1$, so $\phi_k \sigma = 1$ for some $k$, hence $I
  \force A \sigma$ by $\Gamma,\phi_k \models A$.  Now let $I \force
  \sigma = \tau : \Gamma$; then $\phi_i \sigma = \phi_i \tau$
  ($\sigma$ and $\tau$ assign the same elements to the interval
  variables), so $\phi \sigma = \phi \tau = 1$ yields $\phi_k \sigma =
  \phi_k \tau = 1$ for some common $k$ and thus $I \force A \sigma = A
  \tau$ follows from $\Gamma, \phi_k \models A$.  The other judgment
  forms are similar.
\end{proof}

For $I \force A$ and $I, \phi \force v : A$ we write $I \force u : A
[\phi \mapsto v]$ for $I \force u : A$ and $I,\phi \force u = v : A$.
And likewise $I \force u = w : A [\phi \mapsto v]$ means $I \force u =
w : A$ and $I,\phi \force u = v : A$ (in this case also $I,\phi \force
w = v : A$ follows).  We use similar notations for for ``$\models$''.

\begin{lemma}
  \label{lem:Glue-snd-local}
  Given $\phi \in \FF (I)$ and $I \lforce A, I,\phi \lforce T$, and
  $I,\phi\lforce w : \Equiv\,T\,A$, and write $B$ for $\Glue\,[\phi
  \mapsto (T,w)]\,A$.  Then:
  \begin{enumerate}
  \item\label{item:locGlue-intro} $I \lforce B$ and $I,\phi \lforce B =
    T$.
  \item\label{item:locGlue-eq} If $I \lforce A = A'$, $I, \phi \lforce
    T = T'$, $I, \phi \lforce w = w' : \Equiv\,T\,A$, then $I \lforce
    B = \Glue\,[\phi \mapsto (T',w')]\,A'$.
  \item\label{item:locunglue-comp} If $I \lforce u : B$ and $I, \phi
    \lforce w = w' : \Equiv\,T\,A$, then $I \lforce
    \unglue\,[\phi\mapsto w']\, u : A [\phi \mapsto w'.1 \, u]$ and $I
    \lforce \unglue\,[\phi\mapsto w]\,u = \unglue\,[\phi\mapsto w']\,u
    : A$.
  \item\label{item:locunglue-eq} If $I \lforce u = u' : B$, then
    \[
    I \lforce \unglue\,[\phi\mapsto w]\,u = \unglue\,[\phi\mapsto
    w]\,u' : A.
    \]
  \item\label{item:locglue} If $I, \phi \lforce t = t' : T$ and $I \lforce
    a = a' : A[\phi \mapsto w.1 \, t]$, then
    \begin{enumerate}
    \item $I \lforce \glue\,[\phi \mapsto t]\,a = \glue\,[\phi \mapsto
      t']\,a' : B$,
    \item\label{item:locglue-res} $I,\phi \lforce \glue\,[\phi \mapsto
      t]\,a = t : T$, and
    \item\label{item:locunglue-beta} $I \lforce \unglue\,[\phi\mapsto
      w]\, (\glue\,[\phi \mapsto t]\,a) = a : A$.
    \end{enumerate}
  \item\label{item:locunglue-eta} If $I \lforce u : B$, then $I
    \lforce u = \glue\,[\phi \mapsto u] (\unglue\,[\phi\mapsto w]\,u)
    : B$.
  \end{enumerate}
\end{lemma}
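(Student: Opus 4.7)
The plan is to prove the six items in order, throughout splitting cases on whether $\phi = 1 \in \FF(I)$ or $\phi \neq 1$. In the former case, $B$, $\glue\,[\phi \mapsto \cdot]\,\cdot$, and $\unglue\,[\phi \mapsto \cdot]\,\cdot$ are all non-introduced (they reduce), so that \rcni, \reni together with the Expansion Lemma apply, while in the latter case \rcgl and \regl are the workhorses.

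For \eqref{item:locGlue-intro}, if $\phi \neq 1$ the premises of \rcgl are exactly our hypotheses; $I,\phi \lforce B = T$ then follows because for each $f \co J \to I,\phi$ we have $\phi f = 1$, whence $Bf \sred Tf$, and Lemma~\ref{lem:red-eq} together with $J \lforce Tf$ yields $J \lforce Bf = Tf$. If $\phi = 1$ then $B \sred T$ itself and the result follows from \rcni and Lemma~\ref{lem:red-eq}. Item \eqref{item:locGlue-eq} is analogous, using \regl in the introduced case and \reni otherwise.

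For items \eqref{item:locunglue-comp} and \eqref{item:locunglue-eq}, the key observation is that the second clause in the \rcgl definition of $I \lforce u : B$ (and its analogue for equality) is tailor-made for exactly this purpose: it provides $I \lforce \unglue\,[\phi\mapsto w']\,u = \unglue\,[\phi\mapsto w]\,u : A$ whenever $w' = w$ on $\phi$. Computability of $\unglue\,[\phi\mapsto w']\,u : A$ then follows from Lemma~\ref{lem:per-field}, and the boundary equality on $\phi$ (namely $\unglue\,[\phi \mapsto w']\,u \sred w'.1\,u$) is handled by the Expansion Lemma using $I,\phi \lforce u : T$ (granted by item \eqref{item:locGlue-intro}). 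When $\phi = 1$, $\unglue$ is itself non-introduced, reducing to $w.1\,u$ respectively $w'.1\,u$, and $w = w'$ in $\Equiv\,T\,A$ gives the equality.

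For \eqref{item:locglue}, one verifies the two premises of \rcgl for $\glue\,[\phi\mapsto t]\,a : B$ when $\phi \neq 1$. The restriction clause \eqref{item:locglue-res} is immediate from $\glue\,[\phi\mapsto t]\,a \sred t$ (applicable on $\phi$) via Expansion. The ``$\unglue$-invariance'' clause requires analyzing $\unglue\,[\phi f\mapsto w'f]\,(\glue\,[\phi f\mapsto tf]\,af)$ for varying $f \co J \to I$: if $\phi f \neq 1$ the $\beta$-reduction $\sred af$ applies directly; if $\phi f = 1$ the $\unglue$-rule gives $\sred w'f.1\,(\glue\,\cdots)$ and then, by Expansion, is forced equal to $w'f.1\,tf$, which matches $af$ by the assumption $I,\phi \lforce a = w.1\,t : A$ together with $w = w'$ on $\phi$. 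The $\beta$-rule \eqref{item:locunglue-beta} then follows for $\phi \neq 1$ directly from the reduction $\sred a$, and for $\phi = 1$ by reducing both sides (to $w.1\,t$ and to $a$) and using the same boundary constraint. Finally, \eqref{item:locunglue-eta} follows from the \rcgl-equality definition: the restriction is \eqref{item:locglue-res} (since $\glue\,[\phi\mapsto u]\,\cdots \sred u$ on $\phi$), and the unglued equality is \eqref{item:locunglue-beta} applied to $t := u$, $a := \unglue\,[\phi\mapsto w]\,u$, which is computable by \eqref{item:locunglue-comp}. The main obstacle is item \eqref{item:locglue}: verifying the $\unglue$-invariance premise of \rcgl requires a delicate per-substitution case split on whether $\phi f$ collapses to $1$, and it is precisely the boundary constraint $a = w.1\,t$ on $\phi$ that reconciles the two reduction paths, mirroring exactly why that condition is present in the typing rule for $\glue$.
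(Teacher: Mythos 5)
Your proposal is correct and follows essentially the same strategy as the paper's proof: case-split on $\phi = 1$ versus $\phi \neq 1$, use the Expansion Lemma (together with Lemma~\ref{lem:red-eq}) in the former case, use $\rcgl$ / $\regl$ in the latter, and for item~\eqref{item:locglue} do the per-substitution reduct analysis where the boundary constraint $I\lforce a = w.1\,t : A$ on $\phi$ reconciles the two branches. One caveat worth flagging: in item~\eqref{item:locglue} you write ``the $\beta$-reduction $\sred af$ applies directly'' and ``\eqref{item:locunglue-beta} follows for $\phi\neq 1$ directly from the reduction $\sred a$'', but those reductions are \emph{not} $\sred$ in the $\phi \neq 1$ case — precisely because a further substitution $g$ can collapse $\phi f g$ to $1$, in which case $\unglue$ takes the other branch and reduces to $wfg.1\,(\glue\dots)$ rather than $afg$. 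This is exactly the obstacle the full Expansion Lemma is designed to handle (and what you correctly acknowledge in your final sentence), so it is only a notational slip, but the symbol $\sred$ should not be used here: what is actually being verified is the compatibility condition $J\force (\unglue[\phi\mapsto w]\,b)f\dn = (\unglue[\phi\mapsto w]\,b)\dn f : Af$ of the Expansion Lemma in its full form.
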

\begin{proof}
  %
  \eqref{item:locGlue-intro} Let us first prove $I,\phi \force B$ and
  $I,\phi \force B = T$; but in $I,\phi$, $\phi$ becomes $1$ so
  w.l.o.g.\ let us assume $\phi = 1$; then $B$ is non-introduced and $I
  \der B \sred T$ so $I \force B$ from $I \force T$.  For $I \force B
  = T$ we have to show $J \force B f \dn = T f \dn$ for $f \co J \to
  I$.  But $B f \dn$ is $T f$ so this is an instance of
  Lemma~\ref{lem:red-eq}.

  It remains to prove $I \force B$ in case where $\phi \neq 1$; for
  this use \rcgl\ with the already proven $I,\phi \force B$.

  \eqref{item:locGlue-eq} In case $\phi \neq 1$ we only have to show
  $I,\phi \force B = B'$ and can apply \regl.  But restricted to
  $I,\phi$, $\phi$ becomes $1$ and hence we only have to prove the
  statement for $\phi = 1$.  But then by \eqref{item:locGlue-intro} we
  have $I \force B = T = T' = B'$.

  \eqref{item:locunglue-comp} In case $\phi \neq 1$, $I \force
  \unglue\,[\phi\mapsto w']\, u : A$ and
  \begin{equation}
    \label{eq:33}
    I \lforce \unglue\,[\phi\mapsto w]\,u = %
    \unglue\,[\phi\mapsto w']\,u : A
  \end{equation}
  are immediate by definition.  Using the Expansion Lemma (and $I \der
  \unglue\,[\phi\mapsto w']\,u \sred w'.1 \, u : A$ for $\phi = 1$) we
  obtain $I,\phi \force \unglue\,[\phi\mapsto w']\,u = w'.1 \, u : A$,
  which also shows $I \force \unglue\,[\phi\mapsto w']\,u : A$ as well
  as \eqref{eq:33} in case $\phi = 1$.

  \eqref{item:locunglue-eq} In case $\phi \neq 1$, this is by
  definition. For $\phi = 1$ we have
  \[
  I \force \unglue\,[\phi\mapsto w]\,u = w.1 \, u = w.1 \, u' =
  \unglue\,[\phi\mapsto w]\,u' : A.
  \]

  \eqref{item:locglue} Let us write $b$ for $\glue\,[\phi \mapsto
  t]\,a$, and $b'$ for $\glue\,[\phi \mapsto t']\,a'$.  We first show
  $I \force b : B$ and $I,\phi \force b= t : B$ (similarly for $b'$).

  In case $\phi = 1$, $I \der b \sred t : T$ so by the Expansion Lemma
  $I \force b : T$ and $I \force b = t : T$, and hence also $I \force
  b : B$ and $I \force b = t : B$ by \eqref{item:locGlue-intro}.  This
  also proves \eqref{item:locglue-res}.

  Let now $\phi$ be arbitrary; we claim
  \begin{equation*}
    I \force \unglue\,[\phi \mapsto w]\,b : A \text{ and } %
    I \force \unglue\,[\phi \mapsto w]\,b = a : A
  \end{equation*}
  (and thus proving \eqref{item:locunglue-beta}).  We will apply the
  Expansion Lemma to do so; for $f \co J \to I$ let us analyze the
  reduct of $(\unglue\,[\phi \mapsto w]\,b) f$:
  \begin{equation*}
    (\unglue\,[\phi \mapsto w]\, b) f \dn =
    \begin{cases}
      w f.1 \, bf & \text{if }\phi f = 1,\\
      a f & \text{otherwise.}
    \end{cases}
  \end{equation*}
  Note that, if $\phi f = 1$, we have as in the case for $\phi =1$, $J
  \force b f = t f : B f$ and hence $J \force wf.1 \, b f = wf.1 \, t
  f = af : A f$.  This ensures $J \force (\unglue\,[\phi\mapsto w]\,b)
  f \dn = (\unglue\,[\phi\mapsto w]\,b) \dn f : A$, and thus the
  Expansion Lemma applies and we obtain $I \force
  \unglue\,[\phi\mapsto w]\,b = (\unglue\,[\phi\mapsto w]\,b) \dn :
  A$; but as we have seen in either case, $\phi = 1$ or not, $I \force
  (\unglue \,[\phi\mapsto w]\, b) \dn = a : A$ proving the claim.

  Let now be $\phi \neq 1$, $f \co J \to I$, and $J \force w'=wf :
  \Equiv\,Tf\,Af$.  We can use the claim for $B f$ and $\Glue\,[\phi f
  \mapsto (Tf,w')]\,Af$ (which is forced equal to $B f$
  by~\eqref{item:locGlue-eq}) and obtain both
  \begin{gather*}
    J \force \unglue\,[\phi f \mapsto wf]\,bf = af : Af \text{ and }
    J \force \unglue\,[\phi f \mapsto w']\,bf = af : Af,
  \end{gather*}
  so the left-hand sides are equal; moreover, $I,\phi \force b : B$
  (as in the case $\phi =1$), and hence $I \force b : B$.  Likewise
  one shows $I \force b' : B$.



  It remains to show $I \force b = b' : B$.  If $\phi =1$, we already
  showed $I \force b = t : T$ and $I \force b' = t' : T$, so the claim
  follows from $I \force t = t' : T$ and $I \force T = B$.  Let us now
  assume $\phi \neq 1$.  We immediately get $I, \phi \force b = t = t'
  = b' : B$ as for $\phi =1$.  Moreover, we showed above that $I
  \force \unglue\,[\phi \mapsto w]\,b = a : A$ and $I \force
  \unglue\,[\phi \mapsto w]\,b' = a' : A$.  Hence we obtain
  \[
  I \force \unglue\,[\phi \mapsto w]\, b = \unglue\,[\phi \mapsto w]\,
  b' : A
  \]
  from $I \force a = a' : A$.


  \eqref{item:locunglue-eta} In case $\phi = 1$, this follows from
  \eqref{item:locglue-res}.  In case $\phi \neq 1$, we have to show
  \begin{gather*}
    I \force \unglue\,[\phi \mapsto w]\,u = \unglue\,[\phi \mapsto
    w]\,(\glue\,[\phi \mapsto u] (\unglue\,[\phi \mapsto w]\,u))
    : A \text{ and }\\
    I,\phi \force u = \glue\,[\phi \mapsto u] (\unglue\,[\phi \mapsto
    w]\,u) : T.
  \end{gather*}
  The former is an instance of \eqref{item:locunglue-beta}; the latter
  follows from~\eqref{item:locglue-res}.
\end{proof}

\begin{lemma}
  \label{lem:Glue-param-size}
  Let $B$ be $\Glue \, [ \phi \mapsto (T, w)] \, A$ and suppose $I
  \force B$ is derived via \rcgl, then also $I,\phi \force T$ and the
  derivations of $I,\phi \force T$ are all proper sub-derivations of
  $I \force B$ (and hence shorter).
\end{lemma}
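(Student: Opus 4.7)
The plan is to peel back the \rcgl\ rule used to derive $I \force B$, extract from its premise involving $\Equiv\,T\,A$ a derivation of $I,\phi \force T$, and verify that each component of that derivation is a proper sub-derivation of $I \force B$.

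First, by inspection of \rcgl, one of its premises is $I,\phi \lforce \Equiv\,T\,A$; since this appears as a premise of the applied rule, it is a proper sub-derivation of the given derivation of $I \force B$. Unfolding the definition of $I,\phi \lforce {\cdot}$, it provides, for every $f \co J \to I,\phi$, a derivation of $J \force \Equiv\,(Tf)\,(Af)$, each again a proper sub-derivation.

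Next, I trace through the definition of $\Equiv\,T'\,A'$: in cubical type theory it is the $\Sigma$-type $(g : T' \to A') \times \mathsf{isEquiv}\,g$, which is an introduced type. Hence the derivation of $J \force \Equiv\,(Tf)\,(Af)$ is forced to use \rcsi, whose first premise yields $J,1 \force Tf \to Af$ as a proper sub-derivation; instantiating at the identity substitution extracts a derivation of $J \force Tf \to Af$ without increasing height. Since $Tf \to Af$ is itself introduced, its derivation must use \rcpi, whose first premise $J,1 \force Tf$ produces, after one more identity instantiation, a derivation of $J \force Tf$.

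Collecting these derivations for every $f \co J \to I,\phi$ gives the desired derivation of $I,\phi \force T$, each component of which is a proper sub-derivation of the original $I \force B$ (and so strictly shorter). The one point that requires care is confirming that the intermediate types $\Equiv\,T'\,A'$ and $T' \to A'$ really are introduced, so that their derivations cannot come from \rcni; this is immediate from the definition of $\Equiv$ in \cite{CohenCoquandHuberMortberg15}, which is built purely from $\Sigma$- and $\Pi$-constructors and further introduced structure inside $\mathsf{isEquiv}$, none of which can reduce at the outermost level.
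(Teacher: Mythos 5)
Your proposal is correct, but it takes a genuinely different route from the paper. The paper's proof is shorter and avoids unfolding the definition of $\Equiv$: it instead uses the premise $I,\phi \lforce \Glue\,[\phi \mapsto (T,w)]\,A$ of \rcgl, and observes that for any $f \co J \to I$ with $\phi f = 1$, the expression $(\Glue\,[\phi\mapsto(T,w)]\,A)f$ is \emph{non-introduced} (since its face formula is $1$) and reduces to $Tf$ by the corresponding $\Glue$-reduction rule. Hence the derivation of $J \lforce (\Glue\,[\phi\mapsto(T,w)]\,A)f$ must be by \rcni, one of whose premises at the identity substitution is exactly $J \lforce Tf$, giving a proper sub-derivation. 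Your approach instead dives into the premise $I,\phi \lforce \Equiv\,T\,A$, unfolds $\Equiv\,T\,A$ to the $\Sigma$-type $(g : T \to A) \times \mathsf{isEquiv}\,g$, concludes its derivation is by \rcsi, peels the first component $T \to A$ (by \rcpi), and finally extracts $T$ from the domain. This is a valid argument---each step passes to a proper sub-derivation, and your care that the outer forms are introduced (so only \rcsi, respectively \rcpi, is possible) is exactly the right point to verify---but it is longer and commits to the particular syntactic shape of the $\Equiv$ abbreviation. The paper's proof is more robust in that respect, as it works at the level of the $\Glue$ type former itself and the behaviour of $\Glue$ under face restriction, which is the mechanism the whole computability definition for $\Glue$ is built around.

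Two small points worth noting in your version. First, you should be explicit that "instantiating at the identity" is what turns the family $J,1 \force Tf \to Af$ (a quantified premise over all $g \co K \to J$) into a single derivation $J \force Tf \to Af$, and that this single derivation is still a proper sub-derivation of $J \force \Equiv\,Tf\,Af$ because the premise ranges over it. Second, the statement also asks for $I,\phi \force T$, which as a judgment includes $I,\phi \der T$; by Remark~\ref{rem:restricted-contexts} this follows from the derivations $J \force Tf$ you have collected, so there is no gap, but it deserves a mention.
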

\begin{proof}
  We have the proper sub-derivations $I,\phi \force B$.  For each $f
  \co J \to I$ with $\phi f = 1$, we have that $B f$ is non-introduced
  with reduct $T f$ so the derivation of $J \force B f$ has a
  derivation of $J \force T f$ as sub-derivation according to \rcni.
\end{proof}

For the next proof we need a small syntactic observation.  Given
$\Gamma \der \alpha : \FF$ irreducible, there is an associated
substitution $\bar \alpha \co \Gamma_\alpha \to \Gamma$ where
$\Gamma_\alpha$ skips the names of $\alpha$ and applies a
corresponding $\bar \alpha$ to the types and restrictions (e.g., if
$\Gamma$ is $i:\II, x : A, j : \II, \phi$ and $\alpha$ is $(i=0)$,
then $\Gamma_\alpha$ is $x : A(i0),j:\II,\phi (i0)$).  Since $\alpha
\bar \alpha = 1$ we even have $\bar \alpha \co \Gamma_\alpha \to
\Gamma, \alpha$.  The latter has an inverse (w.r.t.\ judgmental
equality) given by the projection $\pp \co \Gamma,\alpha \to
\Gamma_\alpha$ (i.e., $\pp$ assigns each variable in $\Gamma_\alpha$
to itself): in the context $\Gamma,\alpha$, $\bar \alpha \pp$ is the
identity, and $\pp \bar \alpha$ is the identity since the variables in
$\Gamma_\alpha$ are not changed by $\bar \alpha$.

\begin{remark}
  \label{rem:restricted-contexts}
  We can use the above observation to show that the condition $I, \phi
  \der \J$ in the definition of $I,\phi \lforce \J$ (in
  Section~\ref{sec:computability}) already follows from the other,
  i.e., $J \lforce \J f$ for all $f \co J \to I,\phi$: We have to show
  $I,\alpha \der \J$ for each irreducible $\alpha \leq \phi$.  But we
  have $I_\alpha \lforce \J \bar\alpha$ by the assumption and
  $\bar\alpha \co I_\alpha \to I,\phi$, and hence $I_\alpha \der \J
  \bar\alpha$.  Substituting along $\pp \co I,\alpha \to I_\alpha$
  yields $I,\alpha \der \J$.
\end{remark}

\begin{theorem}
  \label{thm:comp-is-computable}
  Compositions are computable, i.e., for $\phi \in \FF (I)$ and $i
  \notin \dom (I)$:
  \begin{enumerate}
    \openup 2\jot
  \item \label{eq:comp-comp} %
    $\inferrule { %
      I,i \force A \\
      I,i,\phi \force u : A \\
      I \force u_0 : A (i0) [\phi \mapsto u(i0)] } %
    { I \force \comp^i \, A \, [\phi \mapsto u] \, u_0 : A (i1) [\phi
      \mapsto u(i1)] \\\\
      I \force \comp^i \, A \, [\phi \mapsto u] \, u_0 = %
      (\comp^i \, A \, [\phi \mapsto u] \, u_0) \dn : A(i1)}$
  \item \label{eq:comp-eq} %
    $\inferrule { %
      I,i \force A \\
      I,i,\phi \force u = v : A \\
      I \force u_0 = v_0: A (i0) [\phi \mapsto u(i0)]
    } %
    { I \force \comp^i \, A \, [\phi \mapsto u] \, u_0 = %
      \comp^i \, A \, [\phi \mapsto v] \, v_0: A (i1)
    }$
  \item \label{eq:comp-type-eq} %
    $\inferrule { %
      I,i \force A = B \\
      I,i,\phi \force u : A \\
      I \force u_0 : A (i0) [\phi \mapsto u(i0)]
    } %
    { I \force \comp^i \, A \, [\phi \mapsto u] \, u_0 = %
      \comp^i \, B \, [\phi \mapsto u] \, u_0: A (i1)
    }$
  \end{enumerate}
\end{theorem}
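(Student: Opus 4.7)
The plan is to prove the three statements simultaneously by main induction on the derivation of $I,i \force A$ (respectively $I,i \force A = B$ for part~\eqref{eq:comp-type-eq}), with a side induction on $I \force u_0 : \N$ needed only in the natural-numbers case. For each form of the type, the reduction rules of Section~\ref{sec:reduction} spell out a concrete weak-head reduct of $\comp^i\,A\,[\phi \mapsto u]\,u_0$, and in each case the strategy is: (i) verify that the reduct is computable (and has the prescribed boundary on $\phi$), (ii) apply the Expansion Lemma to conclude that the composition itself is forced and equal to its reduct. The boundary condition $I,\phi \force (\comp^i A\,[\phi\mapsto u]\,u_0) = u(i1) : A(i1)$ follows in each case from the fact that on $\phi$ the reduct degenerates to $u(i1)$ up to the computability relation.

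In case \rcn, I would do a secondary induction on the derivation of $I \force u_0 : \N$: for $u_0 \equiv 0$ the reduction rule gives $0$, which is immediate; for $u_0 \equiv \suc u_0'$ the reduct $\suc(\comp^i\,\N\,[\phi\mapsto \pred u]\,u_0')$ is handled by the side \IH\ applied to $u_0'$ (and Lemma~\ref{lem:natrec-is-computable} for $\pred$); for $u_0$ non-introduced the rule reducing $u_0$ inside the composition together with the side \IH\ gives the result. In cases \rcpi, \rcsi, \rcpa\ the reduct is a $\lambda$-abstraction, a pair, or a path abstraction whose body again contains compositions in $A$ and $B$ (under name and term substitutions); each of these invocations is at a type obtained by substitution in the premise of the corresponding type-formation rule, and computability of those types is exactly what the premises of \rcpi, \rcsi, \rcpa\ give, so the main \IH\ applies. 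In the case \rcni\ the type reduces first, and I use the reduction rule $\comp^i\,A\,[\phi\mapsto u]\,u_0 \red \comp^i\,A\dn\,[\phi\mapsto u]\,u_0$ together with the \IH\ applied to the shorter derivation of $I,i \force A\dn$, which is available by Lemma~\ref{lem:red-closure}.

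The main obstacle is the \rcgl\ case. The reduct $\glue\,[\phi(i1) \mapsto t_1]\,a_1$ is built from the intricate chain of intermediate terms $a$, $a_0$, $a_1'$, $t_1'$, $\omega$, $(t_1,\alpha)$, $a_1$ listed in the paper. For each intermediate expression I must verify computability and the relevant boundary conditions; these rely on the \IH\ applied to compositions at the base $A$ and at the parameter $T$ (the latter is a legal appeal to the \IH\ because by Lemma~\ref{lem:Glue-param-size} the derivation of $I,\phi \force T$ is strictly shorter than $I \force B$), together with Lemma~\ref{lem:Glue-snd-local} to splice everything back into a proper $\glue$-element and to verify the $\phi(i1)$-boundary. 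The \rcu\ case reduces to \rcgl: the reduct $\Glue\,[\phi \mapsto (u(i1), \ptoeq^i\,u\subst{i}{1-i})]\,u_0$ is shown computable by Lemma~\ref{lem:Glue-snd-local}(\ref{item:locGlue-intro}), provided one checks that $\ptoeq^i$ produces a forced equivalence; this last verification is a straightforward (though tedious) computability check using the $\Sigma$- and $\Path$-cases already treated, plus Theorem~\ref{thm:comp-is-computable} itself at strictly smaller type derivations.

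For the equality statements \eqref{eq:comp-eq} and \eqref{eq:comp-type-eq}, I would repeat the same case analysis: once both sides are shown to reduce, via the Expansion Lemma, to explicit reducts of matching shape, the equality between the two compositions reduces to the equality between the corresponding reducts, and this equality is delivered by the \IH\ applied at the structurally smaller sub-problems (compositions at the $\Pi$/$\Sigma$/$\Path$/$\Glue$ constituents, or at the reduced type $A\dn$ in the \rcni\ case). In practice, the proofs of \eqref{eq:comp-comp} and \eqref{eq:comp-eq} are carried out together in each case, using Lemma~\ref{lem:eq-resp-trans} (transitivity, symmetry, and equality respect for the computability relations) to combine the reducts' equalities into the desired congruence; \eqref{eq:comp-type-eq} then follows from \eqref{eq:comp-eq} and the fact that forced equal types have the same (or forced equal) reducts.
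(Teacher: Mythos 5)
Your proposal follows essentially the same plan as the paper: simultaneous induction on $I,i \force A$ (and $I,i \force A = B$), a side induction on $I \force u_0 : \N$ and $I \force u_0 = v_0 : \N$ in case \rcn, and in every case a reduce-then-apply-the-Expansion-Lemma strategy, with the $\Glue$ case carried by Lemma~\ref{lem:Glue-param-size} and Lemma~\ref{lem:Glue-snd-local} and the $\U$ case delegated to $\Glue$ via $\ptoeq^i$.

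There is, however, a genuine gap in your treatment of \rcn. You say that for $u_0 \equiv 0$ ``the reduction rule gives $0$, which is immediate,'' but the relevant reduction rule has as a premise the judgmental equality $\Gamma,\phi,i:\II \der u = 0 : \N$ (and analogously $\Gamma,\phi,i:\II \der u = \suc w : \N$ in the successor case). This is not handed to you: what you know from $I \force u_0 : \N[\phi \mapsto u(i0)]$ is only that $I,\phi \force u_0 = u(i0) : \N$, which is a statement on the face $i=0$. The paper must first prove $I,\phi,i:\II \der u = u_0 : \N$ and $I,\phi \force u(i1) = u_0 : \N$ (its equations~(5) and~(6)) before the reduction rules for $\comp^i\,\N$ become applicable, and both of these rest on the \emph{discreteness} of $\N$ (Lemma~\ref{lem:n-discrete}), applied after passing to an irreducible $\alpha \le \phi$ via the face substitution $\bar\alpha$. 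Without this step the Expansion Lemma cannot even be invoked, because the term has no reduct. A smaller omission: the paper establishes up front, as a consequence of the induction hypothesis, that the \emph{filling} operation is admissible (its display~\eqref{eq:fillcomp}), and this is what licenses the auxiliary $\bar y$ in the $\Pi$-case and $v$ in the $\Sigma$-case; your appeal to the ``main \IH'' on the constituent types is correct in spirit but silently requires that the filling's type $A\subst{i}{i\land j}$ has a derivation no higher than that of $A$, which is the content of the monotonicity Lemma~\ref{lem:subst}.
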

\begin{proof}
  By simultaneous induction on $I,i \force A$ and $I,i \force A = B$.
  Let us abbreviate $\comp^i \, A \, [\phi \mapsto u] \, u_0$ by
  $u_1$, and $\comp^i \, A \, [\phi \mapsto v] \, v_0$ by $v_1$.  The
  second conclusion of~\eqref{eq:comp-comp} holds since in each case
  we will use the Expansion Lemma and in particular also prove $I
  \force u_1 \dn : A (i1)$.

  Let us first make some preliminary remarks.  Given the induction
  hypothesis holds for $I,i \force A$ we also know that filling
  operations are admissible for $I,i \force A$, i.e.:
  \begin{gather}
    \label{eq:fillcomp}
    \inferrule { %
      I,i \force A \\
      I,i,\phi \force u : A \\
      I \force u_0 : A (i0) [\phi \mapsto u(i0)] } %
    { I,i \force \Comp^i \, A \, [\phi \mapsto u] \, u_0 : A [\phi
      \mapsto u, (i=1) \mapsto u_1] }
  \end{gather}
  To see this, recall the explicit definition of filling
  \[
  \Comp^i\,A\,[\varphi\mapsto u]\,u_0 = \comp^j\,A\subst{i}{i\land
    j}\,[\phi \mapsto u\subst{i}{i \land j}, (i=0) \mapsto u_0]\,u_0
  \]
  where $j$ is fresh.  The derivation of $I,i,j \force A \subst{i}{i
    \land j}$ isn't higher than the derivation of $I,i \force A$ so we
  have to check, with $u' = [\phi \smap u\subst{i}{i \land j}, (i=0)
  \smap u_0]$ and $A' = A \subst{i}{i \land j}$,
  \begin{equation}
    \label{eq:14}
    I,i,j,\phi \lor (i=0) \force u' : A' \text { and }I,i,\phi\lor(i=0)
    \force u'(j0) =  u_0 : A(i0).
  \end{equation}
  To check the former, we have to show
  \begin{equation*}
    I,i,j,\phi \land (i=0) \force u \subst i {i \land j} = u_0 : A'
  \end{equation*}
  in order to apply Lemma~\ref{lem:systems}.  So let $f \co J \to
  I,i,j$ with $\phi f = 1$ and $f (i) = 0$; then as $\phi$ doesn't
  contain $i$ and $j$, also $\phi (f-i,j) = 1$ for $f - i,j \co J \to
  I$ being the restriction of $f$, so by assumption $J \force u(i0)
  (f-i,j) = u_0 (f -i,j) : A(i0) (f-i,j)$.  Clearly, $(i0) (f-i,j) =
  \subst i {i \land j} f$ so the claim follows.

  Let us now check the right-hand side equation of \eqref{eq:14}:
  by virtue of Lemma~\ref{lem:sys-cover} we have to check the equation
  in the contexts $I,i,\phi$ and $I,i,(i=0)$; but $I,i,\phi \force
  u'(j0) = u(i0) = u_0 : A(i0)$ and $I, i, (i=0) \force u'(j0) = u_0 :
  A(i0)$ by Lemma~\ref{lem:systems}.

  And likewise the filling operation preserves equality.
  \\

  \case\ \rcn. First, we prove that
  \begin{equation}
    \label{eq:5}
    I,\phi,i: \II \der u = u_0 : \N.
  \end{equation}
  To show this, it is enough to prove $I,\alpha,i: \II \der u = u_0 :
  \N$ for each $\alpha \leq \phi$ irreducible.  Let $\bar \alpha \co
  I_\alpha \to I$ be the associated face substitution.  We have
  $I_\alpha, i \force u (\bar \alpha, \su i i ) : \N$ and also
  $I_\alpha \force u (\bar \alpha, \su i 0 ) = u_0 \bar \alpha: \N$
  since $\phi \bar \alpha = 1$.  By discreteness of $\N$
  (Lemma~\ref{lem:n-discrete}),
  \[
  I_\alpha, i \force u (\bar \alpha, \su i i ) = u_0 \bar \alpha : \N,
  \]
  therefore $I_\alpha, i \der u (\bar \alpha, \su i i ) = u_0 \bar
  \alpha : \N$, i.e., $I_\alpha, i \der u \bar \alpha = u_0 \bar
  \alpha : \N$ with $\bar \alpha$ considered as substitution
  $I_\alpha,i \to I,i$ and $u_0$ weakened to $I,i$. Hence $I, \alpha,
  i : \II \der u = u_0 : \N$ by the observation preceding the
  statement of the theorem.

  Second, we prove that
  \begin{equation}
    \label{eq:6}
    I,\phi \force u(i1) = u_0 : \N.
  \end{equation}
  $I,\phi \der u(i1) = u_0 : \N$ immediately follows from
  \eqref{eq:5}.  For $f\co J \to I$ with $\phi f = 1$ we have to show
  $J \force u (i1) f = u_0 f :\N$; since $\phi f = 1$ we get $J \force
  u (i0) f = u_0 f :\N$ by assumption, i.e., $J \force u (f,\su i j
  )(j0) = u_0 (f,\su i j ) (j0) :\N$ (where $u_0$ is weakened to $I,j$
  and $j$ fresh).  By discreteness of $\N$, we obtain $J,j \force u
  (f,\su i j ) = u_0 (f,\su i j ) :\N$ and hence $J \force u (f,\su i
  1 ) = u_0 (f,\su i 1 ) :\N$, i.e., $J \force u (i1) f = u_0 f : \N$.

  We now prove the statements simultaneously by a side induction on $I
  \force u_0 : \N$ and $I \force u_0 = v_0 : \N$.

  \subcase\ $I \force 0 : \N$.  By~\eqref{eq:5} it follows that $I
  \der u_1 \sred 0 : \N$, and hence $I \force u_1 : \N$ and $I \force
  u_1 = 0 : \N$ by the Expansion Lemma.  Thus also $I,\phi \force u_1
  = u(i1) : \N$ by~\eqref{eq:6}.

  \subcase\ $I \force \suc u_0' : \N$ from $I \force u_0' : \N$ with
  $u_0 = \suc u_0'$.  By \eqref{eq:5} it follows that
  \[
  I \der u_1 \sred \suc (\comp^i \, \N \, [\phi \mapsto \pred u] \, u'_0)
  : \N.
  \]

  From $I,\phi \force \suc u_0' = u (i0) : \N$ we get $I,\phi \force
  u'_0 = \pred (\suc u_0') = \pred u(i0) : \N$ using
  Lemma~\ref{lem:natrec-is-computable} and thus by \SIH, $I \force
  \comp^i \, \N \, [\phi \mapsto \pred u] \, u'_0 : \N [\phi \mapsto
  (\pred u) (i1)]$; hence $I \force u_1 : \N$ and $I \force u_1 = \suc
  ( \comp^i \, \N \, [\phi \mapsto \pred u] \, u'_0) : \N$ by the
  Expansion Lemma.  Thus also
  \[
  I,\phi \force u_1 = \suc (\pred u (i1)) = \suc (\pred (\suc u_0')) =
  \suc u_0' = u(i1) : \N
  \]
  using \eqref{eq:6}.

  \subcase\ $u_0$ is non-introduced.  We use the Expansion Lemma: for
  each $f \co J \to I$
  \[
  u_1 f \dn = \comp^j \, \N \, [\phi f \mapsto u (f,\su i j )] \, (u_0
  f \dn)
  \]
  the right-hand side is computable by \SIH, and this results in a
  compatible family of reducts by \SIH, since we have $K \force u_0 f
  \dn g = u_0 f g \dn : \N$.  Thus we get $I \force u_1 : \N$ and $I
  \force u_1 = u_1 \dn : \N$.  By \SIH, $I,\phi \force u_1 \dn = u(i1)
  : \N$ and thus also $I,\phi \force u_1 = u(i1) : \N$.

  \subcase\ $I \force 0 = 0 : \N$.  Like above we get that $I \force
  u_1 = 0 = v_1 : \N$.

  \subcase\ $I \force S u_0' = S v_0' : \N$ from $I \force u_0' = v_0'
  : \N$.  Follows from the \SIH\ $I \force \comp^i \, \N \, [\phi
  \mapsto \pred u] \, u'_0 = \comp^i \, \N \, [\phi \mapsto \pred v]
  \, v'_0 : \N$ like above.

  \subcase\ $I \force u_0 = v_0 : \N$ and $u_0$ or $v_0$ is
  non-introduced.  We have to show $J \force u_1 f \dn = v_1 f \dn :
  \N$ for $f \co J \to I$.  We have $J \force u_0 f \dn = v_0 f \dn :
  \N$ with a shorter derivation, thus by \SIH
  \[
  J \force \comp^j \, \N \, [\phi f \mapsto u (f,\su i j )] \, (u_0 f
  \dn) = \comp^j \, \N \, [\phi f \mapsto v (f,\su i j )] \, (v_0 f
  \dn) : \N
  \]
  which is what we had to show.

  \case\ \rcpi.  Let us write $(x : A) \to B$ for the type under
  consideration. \eqref{eq:comp-comp} In view of the Expansion Lemma,
  the reduction rule for composition at $\Pi$-types (which is closed
  under substitution), and Lemma~\ref{lem:pi-snd} \eqref{eq:18} and
  \eqref{eq:21}, it suffices to show
  \begin{align}
    \label{eq:26}
    I, x : A(i1) &\models \comp^i \, B \subst{x}{\bar x} \, [\phi
    \mapsto u \, \bar x] \, (u_0 \, \bar x (i0)): B(i1), \text{ and}
    \\
    \label{eq:27}
    I, x : A(i1),\phi &\models \comp^i \, B \subst{x}{\bar x} \, [\phi
    \mapsto u \, \bar x] \, (u_0 \, \bar x (i0)) = u (i1) \, x: B(i1),
  \end{align}
  where $x' = \Comp^i\,A \subst{i} {1-i}\,[]\,x$ and $\bar x = x'
  \subst{i}{1-i}$.  By \IH, we get $I, x : A(i1), i:\II \models \bar x
  : A $ and $I, x : A(i1) \models \bar x (i1) = x : A(i1)$, i.e.,
  \begin{align}
    \label{eq:23}
    I, x : A(i1), i : \II &\models \Comp^i\,A \subst{i} {1-i}\,[]\,x :
    A \subst i {1-i}, \text{ and}
    \\
    \label{eq:28}
    I, x : A(i1) &\models (\Comp^i\,A \subst{i} {1-i}\,[]\,x)(i0) = x:
    A(i1).
  \end{align}
  To see \eqref{eq:23}, let $J \force (f,\su x a ) = (f,\su x b ) : I,
  x : A(i1)$, i.e., $f \co J \to I$ and $J \force a = b : A (i1) f$;
  for $j$ fresh, we have $J,j \force A (f,\su i {1-j})$ (note that
  $(i1) f = (f, \su i {1-j}) (j0)$) and we get
  \[
  J,j \force \Comp^j\,A (f,\su i {1-j})\,[]\,a = \Comp^j\,A (f,\su i
  {1-j})\,[]\,b : A (f,\su i {1-j})
  \]
  by \IH, i.e., $J,j \force x' (f,\su x a ,\su i j ) = x' (f,\su x a
  ,\su i j ) : A (f, \su i {1 -j})$, and hence for $r \in \II (J)$
  \[
  J \force x' (f,\su x a ,\su i r ) = x' (f, \su x b ,\su i r ) : (A
  \subst i {1-i}) (f,\su x a ,\su i r ).
  \]

  Thus we get $I, x : A(i1) \models u_0 \, \bar x (i0) : B (i0) \subst
  x {\bar x (i0)}$, $I,x : A(i1),\phi, i : \II \models u \, \bar x : B
  \subst x {\bar x}$, and
  \[
  I,x: : A(i1), \phi \models u_0 \, \bar x (i0) = u (i0)\, \bar x (i0)
  = (u \, \bar x) (i0) : B (i0) \subst x {\bar x (i0)}.
  \]
  And hence again by \IH, we obtain \eqref{eq:26} and \eqref{eq:27}.

  \eqref{eq:comp-eq} Let $f \co J \to I$ and $J \force a:A(f,\su i 1
  )$.  Then $J,j \force \bar a : A(f,\su i j )$ as above and we have to
  show
  \begin{multline}
    \label{eq:31}
    J \force %
    \comp^j \,B(f,\su x {\bar a},\su i j )\,[\phi f \mapsto u (f,\su i
    j ) \, \bar a] \,(u_0f\, \bar a) \\
    = \comp^j \,B(f,\su x {\bar a}, \su i j )\,[\phi f \mapsto v
    (f,\su i j ) \, \bar a] \,(v_0f\, \bar a):\\
    B(f,\su x {\bar a (i1)}, \su i 1 ).
  \end{multline}
  But this follows directly from the \IH\ for $J,j \force B(f,\su x
  {\bar a}, \su i {j})$.

  \case\ \rcsi. Let us write $(x : A) \times B$ for the type under
  consideration. \eqref{eq:comp-comp}~We have
  \[
  I,i,\phi \force u.1 : A \quad \text{and} \quad I \force u_0.1 : A [\phi
  \mapsto u.1]
  \]
  so by \IH,
  \[
  I,i \force \Comp^i \, A \, [\phi \mapsto u.1] \, (u_0.1) : A [\phi
  \mapsto u.1, (i=0) \mapsto u_0.1].
  \]
  Let us call the above filler $w$.  Thus we get $I,i \force B \subst
  x w$,
  \[ I,i,\phi \force B \subst x {u.1} = B \subst x w \quad \text{and}
  \quad %
  I \force B \subst x {u_0.1} = (B \subst x w) (i0)
  \]
  and hence
  \[
  I,i,\phi \force u.2 : B \subst x w \quad \text{and} \quad
  I \force u_0.1 : (B \subst x w) (i0) [\phi \mapsto u.2].
  \]
  The \IH\ yields
  \[
  I \force \comp^i \, B\subst x w \, [\phi \mapsto u.2] \, (u_0.2) :
  (B \subst x w) (i1) [\phi \mapsto u.2 (i1)];
  \]
  let us write $w'$ for the above.  By the reduction rules for
  composition in $\Sigma$-types we get $I \der u_1 \sred (w(i1),w') :
  (x : A(i1)) \times B (i1)$ and hence the Expansion Lemma yields
  \[
  I \force u_1 = (w(i1),w') : (x : A(i1)) \times B (i1).
  \]
  Which in turn implies the equality
  \[
  I,\phi \force u_1 = (w(i1),w') = (u.1 (i1), u.2 (i1)) = u (i1) :
  (x : A(i1)) \times B (i1).
  \]

  The proof of \eqref{eq:comp-eq} uses that all notions defining $w$
  and $w'$ preserve equality (by \IH), and thus $I \force u_1 \dn =
  v_1 \dn : (x : A(i1)) \times B(i1)$.

  \case\ \rcpa.  Let us write $\Path\,A\,a_0\,a_1$ for the type under
  consideration.  We obtain (for $j$ fresh)
  \begin{multline}
    \label{eq:29}
    I, j \force \comp^i \, A \,[ (j=0) \mapsto a_0, (j=1) \mapsto a_1,
    \phi \mapsto u \, j] \, (u_0 \, j): \\
    A(i1) [ (j=0) \mapsto a_0 (i1), (j=1) \mapsto a_1(i1), \phi
    \mapsto u (i1) \, j]
  \end{multline}
  by the \IH. Using the Expansion Lemma, the reduction rule for
  composition at $\Path$-types, and
  Lemma~\ref{lem:path-sound}~\eqref{eq:nabs-intro-snd} this yields
  \[
  I \force u_1 : \Path\,A(i1)\,\tilde u (j0) \,\tilde u (j1) [\phi \mapsto
  \nabs j {(u (i1) \, j)}]
  \]
  where $\tilde u$ is the element in \eqref{eq:29} and $u_1$ is $\nabs
  j \tilde u$.  But $I \force \tilde u (jb) = a_b (i1) : A(i1)$, so $I
  \force u_1 : \Path\,A(i1)\,a_0(i1)\,a_1 (i1)$.  Moreover,
  \[
  I,\phi \force u_1 = \nabs j (u(i1) \, j) = u(i1) :
  \Path\,A(i1)\,a_0(i1)\,a_1 (i1)
  \]
  by the correctness of the $\eta$-rule for paths
  (Lemma~\ref{lem:path-sound}~\eqref{eq:path-eta-snd}).

  \case\ \rcgl.  To not confuse with our previous notations, we write
  $\psi$ for the face formula of $u$, and write $B$ for $\Glue \, [
  \phi \mapsto (T, w)] \, A$.

  Thus we are given:
  \[
  \nir {\rcgl} { %
    1 \neq \phi \in \FF (I,i) \\
    I,i \force A \\
    I,i, \phi \force w : \Equiv \, T\, A \\
    I,i, \phi \force B } %
  { I,i \force B}
  \]
  and also $I,i,\psi \force u : B$ and $I \force u_0 : B (i0) [\psi
  \mapsto u(i0)]$.  Moreover we have $I,i,\phi \force T$ with shorter
  derivations by Lemma~\ref{lem:Glue-param-size}.  We have to show
  \begin{enumerate}[(i)]
  \item\label{item:gluecomp} $I \force u_1 : B(i1)$, and
  \item\label{item:gluecompcorr} $I,\psi \force u_1 = u(i1) : B(i1)$.
  \end{enumerate}
  We will be using the Expansion Lemma: let $f \co J \to I$ and
  consider the reducts of $u_1 f$:
  \begin{equation*}
    u_1 f \dn =
    \begin{cases}
      \comp^j \,T f' \, [\psi f \mapsto u f']\,(u_0 f)
      & \text{if }\phi f' = 1,\\
      \glue\, [\phi (i1) f \mapsto t_1 f]\,(a_1f)& \text{otherwise,}
    \end{cases}
  \end{equation*}
  with $f'=(f,\su i {j})$, and $t_1$ and $a_1$ as in the corresponding
  reduction rule, i.e.:
  \begin{align*}
    a &= \unglue\,[\phi\mapsto w]\, u
    && I,i,\psi
    \\
    a_0 &= \unglue\,[\phi (i0)\mapsto w(i0)]\, u_0
    && I
    \\
    \delta &= \forall i.\varphi && I
    \\
    a_1' &= \comp^i\,A\,[\psi\mapsto a]\,a_0 && I
    \\
    t_1' &= \comp^i\,T\,[\psi\mapsto u]\,u_0 && I,\delta
    \\
    \omega &=\pres^i\,w\,[\psi\mapsto u]\,u_0 && I,\delta
    \\
    (t_1,\alpha) &= \eq\,w(i1)\,[\delta \mapsto (t'_1,\omega),
    \psi \mapsto (u(i1),\nabs {j}{a_1'})]\,a_1' && I,\varphi(i1)
    \\
    a_1 &= \comp^j\,A(i1)\,[\varphi(i1)\mapsto \alpha\,j,\psi\mapsto
    a(i1)]\,a_1' && I
  \end{align*}

  First, we have to check $J \force u_1 f \dn : B (i1) f$.  In case
  $\phi f' = 1$ this immediately follows from the \IH.  In case $\phi
  f' \neq 1$, this follows from the \IH\ and the previous lemmas
  ensuring that notions involved in the definition of $t_1$ and $a_1$
  preserve computability.

  Second, we have to check $J \force u_1 f \dn = u_1 \dn f : B (i1)
  f$.  For this, the only interesting case is when $\phi f' = 1$; then
  we have to check that:
  \begin{equation}
    \label{eq:glue-comp-unif}
    J \force \comp^j \,T f' \, [\psi f \mapsto u f']\,(u_0 f) = %
    \glue\, [\phi (i1) f \mapsto t_1 f]\,(a_1f) : B (i1) f
  \end{equation}
  Since all the involved notions commute with substitutions, we may
  (temporarily) assume $f = \id$ and $\phi = 1$ to simplify notation.
  Then also $\delta = 1 = \phi (i1)$, and hence (using the \IH)
  \begin{align*}
    I \force t_1 &= t_1' = \comp^i \, T\, [\psi \mapsto u] \, u_0 : T
    (i1),
  \end{align*}
  so \eqref{eq:glue-comp-unif} follows from
  Lemma~\ref{lem:Glue-snd-local}~\eqref{item:locglue-res} and
  \eqref{item:locGlue-intro}.

  So the Expansion Lemma yields \eqref{item:gluecomp} and $I \force
  u_1 = \glue\, [\phi (i1) \mapsto t_1]\,a_1 : B(i1)$.
  \eqref{item:gluecompcorr} is checked similarly to what is done in
  \cite[Appendix~A]{CohenCoquandHuberMortberg15} using the \IH.  This
  proves~\eqref{eq:comp-comp} in this case; for \eqref{eq:comp-eq} one
  uses that all notions for giving $a_1$ and $t_1$ above preserve
  equality, and thus $I \force u_1 \dn = v_1 \dn : B(i1)$ entailing $I
  \force u_1 = v_1 : B(i1)$.

  \case\ \rcu.  We have
  \[
  I \der \comp^i \, \U \, [\phi \mapsto u] \, u_0 \sred \Glue\,[\phi
  \mapsto (u(i1), \ptoeq^i {u \subst {i} {1-i}})]\,u : \U
  \]
  thus it is sufficient to prove that the right-hand side is
  computable, i.e.,
  \begin{equation*}
    I \force_1 \Glue\,[\phi \mapsto (u(i1), \ptoeq^i {u \subst {i}
      {1-i}})]\,u_0 : \U
  \end{equation*}
  that is,
  \begin{equation*}
    I \force_0 \Glue\,[\phi \mapsto (u(i1), \ptoeq^i {u \subst {i}
      {1-i}})]\,u_0.
  \end{equation*}
  We have $I \force_0 u_0$ so by
  Lemma~\ref{lem:Glue-snd-local}~\eqref{item:locGlue-intro} it
  suffices to prove
  \begin{equation*}
    I \force_0 \ptoeq^i {u \subst {i} {1-i}} : \Equiv\,u(i1)\,u_0.
  \end{equation*}
  To see this recall that the definition of $\ptoeq^i {u \subst {i}
    {1-i}}$ is defined from compositions and filling operations for
  types $I,i \force_0 u$ and $I,i \force_0 u \subst i {1-i}$ using
  operations we already have shown to preserve computability.  But in
  this case we have as \IH, that these composition and filling
  operations are computable since the derivations of $I,i \force_0 u$
  and $I,i \force_0 u$ are less complex than the derivation $I
  \force_1 \U$ since the level is smaller.

  \case\ \rcni.  So we have $J \force A f \dn$ for each $f \co J \to
  I,i$ and $J \force A \dn f = A f \dn$ (all with a shorter derivation
  than $I,i \force A$).  Note that by
  Lemma~\ref{lem:red-eq}~\eqref{item:8}, we also have $I,i \force A =
  A \dn$.

  \eqref{eq:comp-comp} We have to show $J \force u_1 f : A (i1) f \dn$
  for each $f \co J \to I$.  It is enough to show this for $f$ being
  the identity; we do this using the Expansion Lemma.  Let $f \co J
  \to I$ and $j$ be fresh, $f' = (f,\su i {j})$; we first show $J
  \force u_1 f \dn : A \dn (i1) f$.  We have
  \[
  J \der u_1 f \red \comp^j \, (A f' \dn) \, [\phi f \mapsto u f'] \,
  u_0 f : A f' (j1)
  \]
  hence also at type $A f' (j1) \dn$, and so, by
  \IH~\eqref{eq:comp-comp} for $J,j \force A f' \dn$, we obtain $J
  \force u_1 f \dn : A f' (j1) \dn$.  But $J \force A f' (j1)\dn = A
  \dn (i1) f$, so $J \force u_1 f \dn : A \dn (i1) f$.

  Next, we have to show $J \force u_1 \dn f = u_1 f \dn : A \dn (i1)
  f$.  Since $J,j \force A \dn f' = A f' \dn$ (with a shorter
  derivation) we get by \IH~\eqref{eq:comp-type-eq}, $J \force u_1 \dn
  f = u_1 f \dn : A \dn f' (j1)$ what we had to show.

  Thus we can apply the Expansion Lemma and obtain $I \force u_1 : A
  \dn (i1)$ and $I \force u_1 = u_1 \dn : A \dn (i1)$, and hence also
  $I \force u_1 : A (i1)$ and $I \force u_1 = u_1 \dn : A (i1)$.  By
  \IH, we also have $I,\phi \force u_1 = u_1 \dn = u(i1) : A \dn (i1)
  = A (i1)$.

  \eqref{eq:comp-eq} Like above, we obtain
  \begin{equation*}
    I \force u_1 = u_1 \dn : A \dn (i1) \quad \text {and} \quad %
    I \force v_1 = v_1 \dn : A \dn (i1).
  \end{equation*}
  But since the derivation of $I,i \force A \dn$ is shorter, and $u_1
  \dn = \comp^i \, A\dn \, [\phi \mapsto u] \, u_0$ and similarly for
  $v_1 \dn$, the \IH\ yields $I \force u_1 \dn = v_1 \dn : A \dn
  (i1)$, thus also $I \force u_1 = v_1 : A \dn (i1)$, that is, $I
  \force u_1 = v_1 : A (i1)$ since $I,i \force A = A \dn$.


  It remains to show that composition preserves forced type equality
  (i.e.,~\eqref{eq:comp-type-eq} holds).  The argument for the
  different cases is very similar, namely using that the compositions
  on the left-hand and right-hand side of \eqref{eq:comp-type-eq} are
  equal to their respective reducts (by \eqref{eq:comp-comp}) and then
  applying the \IH\ for the reducts.  We will only present the case
  \reni.


  \case\ \reni.  Then $A$ or $B$ is non-introduced and $I,i \force A
  \dn = B \dn$ with a shorter derivation.  Moreover, by
  \eqref{eq:comp-comp} (if the type is non-introduced) or reflexivity
  (if the type is introduced) we have
  \begin{align*}
    I &\force \comp^i \, A \, [\phi \mapsto u] \, u_0 = %
    \comp^i \, (A \dn) \, [\phi \mapsto u] \, u_0  : A(i1), \text{ and}
    \\
    I &\force \comp^i \, B \, [\phi \mapsto u] \, u_0 = %
    \comp^i \, (B  \dn) \, [\phi \mapsto u] \, u_0 : B(i1),
  \end{align*}
  but the right-hand sides are forced equal by \IH.
\end{proof}

\begin{lemma}
  \label{lem:univ-sound}
  The rules for the universe $\U$ are sound:
  \begin{enumerate}
  \item $\Gamma \models A : \U \Imp \Gamma \models A$
  \item\label{item:10} $\Gamma \models A = B : \U \Imp \Gamma \models
    A = B$
  \end{enumerate}
  Moreover, the rules reflecting the type formers in $\U$ are sound.
\end{lemma}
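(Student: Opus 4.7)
The main observation is that by rule \rcu, the judgment $I \force_1 A : \U$ unfolds precisely to $I \force_0 A$, and $I \force_1 A = B : \U$ unfolds to $I \force_0 A = B$. Thus parts (1) and (2) amount to unfolding the hypothesis and invoking Lemma~\ref{lem:lvl-cumulative} to lift the resulting level-$0$ data to the level-agnostic ``$\force$''.

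Concretely for (1): assume $\Gamma \models A : \U$, so in particular $\Gamma \der A : \U$, whence $\Gamma \der A$ by the standard typing rule of cubical type theory. Given $I \force \sigma = \tau : \Gamma$, the hypothesis yields $I \force_1 A\sigma = A\tau : \U$, which by \rcu is $I \force_0 A\sigma = A\tau$; Lemma~\ref{lem:lvl-cumulative} then gives $I \force A\sigma = A\tau$. Therefore $\Gamma \models A = A$, i.e.\ $\Gamma \models A$. Part (2) is obtained identically: $I \force_1 A\sigma = B\tau : \U$ unfolds via \rcu to $I \force_0 A\sigma = B\tau$, yielding $\Gamma \models A = B$.

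For the ``moreover'' part, each rule reflecting a type former into $\U$---for instance, for $\Pi$, that $\Gamma \models A = A' : \U$ together with $\Gamma, x : A \models B = B' : \U$ entails $\Gamma \models (x:A) \to B = (x:A') \to B' : \U$---is proved in three steps. First, use parts (1) and (2) to extract the type-level hypotheses (here $\Gamma \models A = A'$ and $\Gamma, x : A \models B = B'$). Second, invoke the already-established soundness of the type former in question: Lemma~\ref{lem:pi-snd}, \ref{lem:sigma-snd}, \ref{lem:path-sound}, or \ref{lem:Glue-snd-local}; for the $\comp\,\U$ rule one uses the case \rcu of Theorem~\ref{thm:comp-is-computable}, which reduces a composition in $\U$ to a $\Glue$-type and then appeals to Lemma~\ref{lem:Glue-snd-local}~\eqref{item:locGlue-intro}. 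Since the proofs of these soundness lemmas never essentially invoke $\U$ but only the rules \rcn, \rcpi, \rcsi, \rcpa, and \rcgl---each of which makes sense at either level---the whole argument goes through at level $0$ under our smaller hypotheses. Third, re-apply \rcu to package the resulting $I \force_0 \cdot = \cdot$ back into $I \force_1 \cdot = \cdot : \U$.

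The only subtlety is the $\Glue$-in-$\U$ case, where the premise involving $\Equiv\,T\,A$ is handled via the fact that $\Equiv$ is a derived $\Sigma$-type whose soundness is available through Lemma~\ref{lem:sigma-snd}, \ref{lem:pi-snd}, and \ref{lem:path-sound}. Beyond this bookkeeping there is no genuine mathematical obstacle: the result is essentially a definitional unfolding, made routine by the level-cumulative design of the computability predicates.
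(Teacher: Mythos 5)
Your proposal is correct and follows essentially the same route as the paper: unfold the \rcu{} case of the computability relation to see that $I \force_1 A\sigma = B\tau : \U$ is by definition $I \force_0 A\sigma = B\tau$, and then for the reflection rules re-run the type-former soundness argument at level~$0$ and re-package via \rcu. Two small nitpicks: the appeal to Lemma~\ref{lem:lvl-cumulative} is unnecessary (by definition $\force$ means ``$\force_\lv$ for some $\lv$'', so $\force_0$ already yields $\force$); and in the ``moreover'' part the paper does not literally cite Lemma~\ref{lem:pi-snd} as a black box but rather re-applies \repi{} etc.\ directly at level~$0$ to the level-$0$ premises extracted from the $\U$-hypotheses---which is a cleaner way of making precise your remark that ``the whole argument goes through at level $0$,'' since the lemma as stated only delivers a level-agnostic $\force$.
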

\begin{proof}
  Of the first two statements let us only prove~\eqref{item:10}: given
  $I \force \sigma = \tau : \Gamma$ we get $I \force A\sigma = B\tau :
  \U$; this must be a derivation of $I \force_1 A\sigma = B\tau : \U$
  and hence we also have $I \force_0 A\sigma = B\tau$.

  The soundness of the rules reflecting the type formers in $\U$ is
  proved very similar to proving the soundness of the type formers.
  Let us exemplify this by showing soundness for $\Pi$-types in $\U$:
  we are give $\Gamma \models A : \U$ and $\Gamma, x : A \models B :
  \U$, and want to show $\Gamma \models (x : A) \to B: \U$.  Let $I
  \force \sigma = \tau : \Gamma$, then $I \force A \sigma = A \tau :
  \U$, so, as above, $I \force_0 A \sigma = A \tau$; it is enough to
  show
  \begin{equation}
    \label{eq:9}
    J \force_0 B (\sigma f, \su x u ) = B (\tau f, \su x v )
  \end{equation}
  for $J \force u = v : A \sigma f$ with $f \co J \to I$.  Then $J
  \force (\sigma f, \su x u ) = (\tau f, \su x v ) : \Gamma, x : A$,
  hence $J \force B (\sigma f, \su x {u}) = B (\tau f, \su x {v}) :
  \U$ and hence~\eqref{eq:9}.
\end{proof}

\begin{proof}[Proof of Soundness (Theorem~\ref{thm:snd})]
  By induction on the derivation $\Gamma \der \J$.

  We have already seen above that most of the rules are sound.  Let us
  now look at the missing rules.  Concerning basic type theory, the
  formation and introduction rules for $\N$ are immediate; its
  elimination rule and definitional equality follow from the ``local''
  soundness from Lemma~\ref{lem:natrec-is-computable} as follows.
  Suppose $\Gamma \models u : \N$, $\Gamma, x : \N \models C$, $\Gamma
  \models z : C \subst x 0$, and $\Gamma \models s : (x : \N) \limp C
  \limp C \subst x {\suc x}$.  For $I \force \sigma = \tau : \N$ we
  get by Lemma~\ref{lem:natrec-is-computable}~\eqref{eq:natrec-eq}
  \[
  I \force \natrec\,u\sigma\,z\sigma\,s\sigma =
  \natrec\,u\tau\,z\tau\,s\tau : C (\sigma, \su x {u \sigma}).
  \]
  (Hence $\Gamma \models \natrec\,u\,z\,s : C \subst x u$.)
  Concerning, the definitional equality, if, say, $u$ was of the form
  $\suc v$, then,
  Lemma~\ref{lem:natrec-is-computable}~\eqref{eq:natrec-comp} gives
  \[
  I \force \natrec\,(\suc v\sigma)\,z\sigma\,s\sigma = \natrec\,(\suc
  v\tau)\,z\tau\,s\tau = (\natrec\,(\suc v\tau)\,z\tau\,s\tau)\dn : C
  (\sigma, \su x {u \sigma}).
  \]
  and $(\natrec\,(\suc v\tau)\,z\tau\,s\tau)\dn$ is $s \tau \,
  v\tau\,(\natrec\,v\tau\,z\tau\,s\tau)$, proving
  \[
  \Gamma \models \natrec\,(\suc v)\,z\,s = s\,v\,(\natrec\,v\,z\,s): C
  \subst x {\suc v};
  \]
  similarly, the soundness of the other definitional equality is
  established.

  Let us now look at the composition operations: suppose $\Gamma, i :
  \II \models A$, $\Gamma \models \phi : \FF$, $\Gamma,\phi, i : \II
  \models u : A$, and $\Gamma \models u_0 : A (i0) [\phi \mapsto u
  (i0)]$.  Further let $I \force \sigma = \tau : \Gamma$, then for $j$
  fresh, $I, j \force \sigma' = \tau' : \Gamma,i:\II$ where $\sigma' =
  (\sigma,\su i {j})$ and $\tau' = (\tau,\su i {j})$, hence $I,j
  \force A \sigma' = A \tau'$, $\phi \sigma = \phi \tau$, $I,j,\phi
  \sigma \force u \sigma' = u \tau' : A \sigma'$, and $I \force u_0
  \sigma = u_0 \tau : A \sigma' (j0) [\phi \sigma \mapsto u \sigma'
  (j0)]$.  By Theorem~\ref{thm:comp-is-computable},
  \[
  I \force \comp^j\,(A\sigma')\,[\phi \sigma \mapsto u \sigma']\, (u_0
  \sigma) =%
  \comp^j\,(A\tau')\,[\phi \tau \mapsto u \tau']\, (u_0 \tau) : A
  \sigma' (j1)
  \]
  and
  \[
  I,\phi \sigma \force \comp^j\,(A\sigma')\,[\phi \sigma \mapsto u
  \sigma']\, (u_0 \sigma) %
  = u \sigma' (j1) = u \tau' (j1) : A \sigma' (j1)
  \]
  hence we showed $\Gamma \models \comp^i\,A\,[\phi \mapsto u]\,u_0 :
  A (i1) [\phi \mapsto u (i1)]$.  Similarly one can justify the
  congruence rule for composition.

  The definitional equalities which hold for $\comp$ follow from the
  second conclusion of
  Theorem~\ref{thm:comp-is-computable}~\eqref{eq:comp-comp}, i.e.,
  that a composition is forced equal to its reduct.

  The remaining rules for systems follow from their ``local''
  analogues in form of Lemma~\ref{lem:systems}; let us, say, suppose
  $\Gamma \models \phi_1 \lor \dots \lor \phi_n = 1 : \FF$,
  $\Gamma,\phi_i \models A_i$, and $\Gamma,\phi_i \land \phi_j \models
  A_i = A_j$.  For $I \force \sigma = \tau : \Gamma$ we get $k$ with
  $\phi_k \sigma = \phi_k \tau = 1$ like in the proof of
  Lemma~\ref{lem:sys-cover} so, writing $A$ for $[ \phi_1 \smap A_1,
  \dots, \phi_n \smap A_n]$,
  \[
  I \force A \sigma = A_k \sigma = A_k \tau = A \tau
  \]
  by Lemma~\ref{lem:systems} and using $\Gamma,\phi_k \models A_k$,
  so $\Gamma \models A$.  Likewise, if $\Gamma \models \phi_l = 1
  : \FF$ for some $l$, then $I \force A \sigma = A_l \sigma = A_l
  \tau$, showing $\Gamma \models A = A_l$ in this case.  The other
  rules concerning systems are justified similarly.

  The soundness of the remaining rules concerning $\Glue$ follow
  similarly from their ``local'' version in
  Lemma~\ref{lem:Glue-snd-local}.
\end{proof}

\begin{corollary}[Canonicity]
  \label{cor:canonicity}
  If $I$ is a context of the form $i_1 : \II, \dots, i_k : \II$ and $I
  \der u : \N$, then $I \der u = \num n : \N$ for a unique $n \in \NN$.
\end{corollary}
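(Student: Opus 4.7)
The plan is to assemble canonicity from ingredients already proved, essentially by specializing the Soundness Theorem to a name context with the identity substitution.

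First I would apply Soundness (Theorem~\ref{thm:snd}) to the hypothesis $I \der u : \N$, obtaining $I \models u : \N$. By Remark~\ref{rem:def-snd}(1) we have $\force I$, and the identity substitution $\id \co I \to I$ satisfies $I \force \id : I$. Unfolding the definition of $I \models u : \N$ (Definition~\ref{def:snd}) with $\sigma = \tau = \id$ yields $I \force u : \N$. Applying Lemma~\ref{lem:n-canon} supplies some $n \in \NN$ with $I \force u = \num n : \N$, and Lemma~\ref{lem:cpred-complete} converts this forced equality into the judgmental equality $I \der u = \num n : \N$. This settles existence.

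For uniqueness, suppose we have both $I \der u = \num n : \N$ and $I \der u = \num m : \N$. By symmetry and transitivity of judgmental equality, $I \der \num n = \num m : \N$. Applying Soundness once more and specializing to the identity substitution (as above) yields $I \force \num n = \num m : \N$. Since both $\num n$ and $\num m$ are introduced (each is either $0$ or a successor of a numeral), the non-introduced clause in the definition of $\force \cdot = \cdot : \N$ cannot fire; only the rules $I \force 0 = 0 : \N$ and $I \force \suc a = \suc b : \N$ (from $I \force a = b : \N$) are available. A straightforward induction on $n$ then forces $m = n$.

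There is no real obstacle here: the entire development of Sections~\ref{sec:reduction}--\ref{sec:soundness} has been arranged so that this corollary drops out immediately once one observes that the identity on a name context is a computable substitution. The two substantive facts invoked are Soundness and the canonicity lemma for $\force u : \N$ at the pure-forcing level; everything else is bookkeeping.
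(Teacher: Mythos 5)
Your proof is correct and matches the paper's argument essentially step for step: apply Soundness, specialize at the identity substitution to move from $\models$ to $\force$, invoke Lemma~\ref{lem:n-canon} for existence, and pass back through Soundness for uniqueness. You simply spell out the identity-substitution bookkeeping and the $n=m$ induction in slightly more detail than the paper does.
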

\begin{proof}
  By Soundness, $I \models u : \N$ hence $I \force u : \N$, so $I
  \force u = \num n : \N$ for some $n \in \NN$ by
  Lemma~\ref{lem:n-canon}, and thus also $I \der u = \num n : \N$.
  The uniqueness follows since $I \der \num n = \num m : \N$ yields $I
  \force \num n = \num m : \N$ which is only the case for $n = m$.
\end{proof}

\begin{corollary}[Consistency]
  \label{cor:consistency}
  Cubical type theory is consistent, i.e., there is a type in the
  empty context which is not inhabited.
\end{corollary}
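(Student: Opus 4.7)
The plan is to exhibit a concrete uninhabited type and derive a contradiction from any hypothetical inhabitant by invoking Canonicity. My candidate would be $\Path\,\N\,0\,(\suc 0)$: one expects this type to be empty because its two endpoints are distinct numerals, whereas Canonicity promises that any closed natural-number term reduces to a \emph{unique} numeral.

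First I would assume for contradiction that there is a term $p$ with $\der p : \Path\,\N\,0\,(\suc 0)$ in the empty context. Path application then yields a judgment $i : \II \der p\,i : \N$, while the endpoint rules of $\Path$-types give $\der p\,0 = 0 : \N$ and $\der p\,1 = \suc 0 : \N$. Next I would apply Corollary~\ref{cor:canonicity} in the name context $i : \II$ to obtain a unique $n \in \NN$ with $i : \II \der p\,i = \num n : \N$.

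The final step is to substitute $i := 0$ in this equation and combine it with the left endpoint rule, producing $\der 0 = \num n : \N$ in the empty context; the uniqueness clause of Canonicity applied again in the empty context then forces $n = 0$. The symmetric substitution $i := 1$ yields $n = 1$, contradicting $n = 0$. Hence no such $p$ can exist, and $\Path\,\N\,0\,(\suc 0)$ witnesses consistency.

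There is essentially no obstacle here: all the substantial work has already been carried out in establishing Canonicity. The only mild care required is to invoke the uniqueness clause of Canonicity twice---once over $i : \II$ to collapse $p\,i$ into a single numeral $\num n$, and once over the empty context to identify that numeral against each endpoint---and to make sure that both substitutions $i := 0$ and $i := 1$ really do preserve the derived judgmental equality, which is immediate from the closure of definitional equality under interval substitution.
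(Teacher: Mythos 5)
Your proof is correct and matches the paper's argument essentially verbatim: the paper also takes $\Path\,\N\,0\,1$ (where $1$ denotes the numeral $\suc 0$), applies Canonicity in the context $i:\II$ to the path applied to $i$, and then substitutes $i:=0$ and $i:=1$ to contradict the uniqueness clause of Corollary~\ref{cor:canonicity}. The only cosmetic difference is that the paper directly combines the two substitution instances into $\der 0 = 1 : \N$ before appealing to uniqueness once, whereas you invoke uniqueness twice; both are equivalent.
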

\begin{proof}
  Consider the type $\Path\,\N\,0\,1$ and suppose there is a $u$ with
  $\der u : \Path\,\N\,0\,1$.  Hence we get $i : \II \der u \, i :
  \N$, as well as $\der u \, 0 = 0 : \N$ and $\der u \, 1 = 1 : \N$.
  By Canonicity, we get $n \in \NN$ with $i : \II \der u \, i = \num n
  : \N$, and hence (by substitution) $\der u\,0 = \num n : \N$ and
  $\der u\,1 = \num n : \N$, so $\der 0 = 1 : \N$, contradicting the
  uniqueness in Corollary~\ref{cor:canonicity}.
\end{proof}

\begin{remark}
  \label{rem:consistency}
  One could also extend cubical type theory with an empty type $\N_0$
  whose forcing relation is empty; consistency for this extension is
  then an immediate consequence of the corresponding Soundness
  Theorem.
\end{remark}

\begin{remark}
  \label{rem:pi-injective-u-canon}
  Soundness also implies injectivity of $\Pi$ (and likewise for other
  type formers) in name contexts: if $I \der (x : A) \to B = (x : A')
  \to B'$, then $I \der A = A'$ and $I, x : A \der B = B'$.  Moreover,
  we get a canonicity result for the universe $\U$: if $I \der A :
  \U$, then $A$ is judgmentally equal to an introduced type $B$ with
  $I \force_0 B$.
\end{remark}

\section{Extension with Higher Inductive Types}
\label{sec:hits}

In this section we discuss two extensions to cubical type theory with
two higher inductive types: the circle and propositional
truncation. For both extensions it is suitable to generalize path
types to dependent path types $\Path^i\,A\,u\,v$ where $i$ might now
appear in $A$, with $u$ in $A(i0)$ and $v$ in $A(i1)$.  This extension
is straightforward, e.g., the $\beta$-reduction rule for paths now
reads
\begin{mathpar}
  \inferrule { %
    \Gamma, i : \II \der A \\
    \Gamma, i : \II \der t : A\\
    \Gamma \der r : \II}%
  { \Gamma \der (\nabs i t) \, r \red t \subst i r : A \subst i r}
\end{mathpar}
and likewise the computability predicates and relations are easily
adapted.

\subsection{The Circle}
\label{sec:circle}

In this section we sketch how the proof of canonicity can be extended
to the system where a circle $\sone$ is added; the extension with
$n$-spheres is done analogously.


First, we have to extend the reduction relation as follows to
incorporate the circle.
\begin{mathparpagebreakable}
  \inferrule { %
    \Gamma \der \\
  } %
  { \Gamma \der \sloop 0 \red \base : \sone \\\\
    \Gamma \der \sloop 1 \red \base : \sone
  } %
  \and %
  \inferrule { %
    \Gamma, i: \II \der u : \sone\\
  } %
  {
    \Gamma \der \comp^i\,\sone\,[1 \mapsto u ]\,u(i0) \red u(i1) : \sone
  } %
\end{mathparpagebreakable}
(For simplicity, we will use $\comp^i\,\sone$ instead of adding yet
another constructor $\hcomp^i$ as was done in
in~\cite{CohenCoquandHuberMortberg15}.)

Given $\Gamma, x : \sone \der C$, $\Gamma \der b : C \subst x \base$,
and $\Gamma \der l : \Path^i\, C \subst x {\sloop i}\, b\,b$ we also
add the reduction rules for the elimination
\begin{align*}
  \Gamma &\der \sonerec_{x.C}\,\base\,b\,l \red b : C \subst x \base
  \\
  \Gamma &\der \sonerec_{x.C}\,(\sloop r)\,b\,l \red l \, r : C \subst
  x {\sloop r}
\end{align*}
where $\Gamma \der r \neq 1 : \II$, and moreover for $\Gamma \der \phi
\neq 1 : \FF$,
\begin{multline*}
  \Gamma \der \sonerec_{x.C}\,(\comp^i\,\sone\,[\phi \mapsto
  u]\,u_0)\,b\,l \red \\ \comp^i\,{C \subst x v}\,[\phi \mapsto
  u']\,u_0': C \subst x {\comp^i\,\sone\,[\phi \mapsto u]\,u_0}
\end{multline*}
where $v = \Comp^i\,\sone\,[\phi \mapsto u]\,u_0$, $u' =
\sonerec_{x.C}\,u\,b\,l$, $u_0' = \sonerec_{x.C}\,u_0\,b\,l$, and we
assumed $i \notin \dom \Gamma$ (otherwise rename $i$).

Furthermore, if $\Gamma \der t \red t' : \sone$, then
\begin{equation*}
  \Gamma \der \sonerec_{x.C}\,t\,b\,l \red \sonerec_{x.C}\,t'\,b\,l : C
  \subst x {t'}.
\end{equation*}

Consequently, we also call expressions introduced if they are of the
form $\sone$, $\base$, $\sloop r$ with $r \notin \set {0,1}$, and
$\comp^i\,\sone\,[\phi \mapsto u]\,u_0$ with $\phi \neq 1$.

Next, the computability predicates and relations are adapted as
follows: $I \lforce \sone$ and $I \lforce \sone = \sone$.  $I \lforce
u : \sone$ and $I \force u = v : \sone$ are defined simultaneously
(similarly as for $\N$):
\begin{mathparpagebreakable}
  \inferrule { } {I \lforce \base : \sone} %
  \and %
  \inferrule { r \in \II (I) - \set{0,1} \\
    I \lforce \sloop 0 : \sone\\
    I \lforce \sloop 1 : \sone } {I \lforce \sloop r : \sone} %
  \and %
  \inferrule {%
    1\neq \phi \in \FF (I) \\
    I,i,\phi \lforce u : \sone\\
    I \lforce u_0 : \sone\\
    I,\phi \lforce u_0 = u(i0) : \sone\\
    I,\phi \lforce \comp^i\,\sone\,[\phi \mapsto u]\,u_0 : \sone } %
  { I \lforce \comp^i\,\sone\,[\phi \mapsto u]\,u_0 : \sone} %
  \and %
  \inferrule {u \noti \\
    \all f \co J \to I (u f \hr^\sone \And J \lforce u f \dn^\sone :
    \sone) \\
    \all f\co J \to I \all g \co K \to J (K \lforce u f \dn g = u fg
    \dn : \sone) } %
  { I \lforce u : \sone}
\end{mathparpagebreakable}
Note, the (admissible) two last premises in the case for $\sloop$ are
there to not increase the height of the derivation when doing a
substitution (Lemma~\ref{lem:subst}); similarly for the last premise
in the rule for composition.  The relation $I \lforce u = v : \sone$
is defined analogously, that is, by the usual congruence rules and a
clause for when $u$ or $v$ is non-introduced as we have it for~$\N$
(see also the next section).  To adapt
Theorem~\ref{thm:comp-is-computable} note that compositions are
computable for $\phi = 1$ by using the Expansion Lemma and the
reduction rule; using this, compositions are computable by definition
also for $\phi \neq 1$.

\subsection{Propositional Truncation}
\label{sec:inh}

We will use a slight simplification of propositional truncation as
presented in~\cite[Section~9.2]{CohenCoquandHuberMortberg15}.  Let us
thus recall the typing rules (omitting congruence rules): the
formation rule is $\Gamma \der \inh A$ whenever $\Gamma \der A$, and
likewise $\Gamma \der \inh A : \U$ whenever $\Gamma \der A :
\U$.  Moreover:
\begin{mathpar}
  \inferrule %
  { \Gamma \der a : A} %
  { \Gamma \der \inc a : \inh A} %
  \and
  \inferrule %
  { \Gamma \der u : \inh A \\
    \Gamma \der v : \inh A \\
    \Gamma \der r : \II
  } %
  { \Gamma \der \squash\,u\,v\,r : \inh A
  } %
  \and %
  \and %
  \inferrule %
  { \Gamma \der A\\
    \Gamma \der \phi : \FF\\
    \Gamma,\phi, i:\II \der u : \inh A\\
    \Gamma \der u_0 : \inh A [\phi \mapsto u \subst i 0]
  } %
  { \Gamma \der \hcomp^i_{\inh A}\,[\phi \mapsto u]\,u_0 : \inh A
  }
\end{mathpar}
with the judgmental equalities (omitting context and type):
\begin{mathpar}
  \squash\,u\,v\,0 = u %
  \and %
  \squash\,u\,v\,1 = v %
  \and %
  \hcomp^i_{\inh A}\,[1_\FF \mapsto u]\,u_0 = u \subst i 1
\end{mathpar}
Note that the type in $\hcomp^i$ does not depend on $i$ and we call
these \emph{homogeneous compositions}.  The eliminator, given $\Gamma
\der A$ and $\Gamma, z : \inh A \der C (z)$, is given by the rule
\begin{mathpar}
  \inferrule %
  { 
    \Gamma \der w : \inh A\\
    \Gamma \der t : (a : A) \to C (\inc a)\\
    \Gamma \der p : (u\, v : \inh A) (x : C (u)) (y : C (v)) \to
    \Path^i\,(C (\squash\,u\,v\,i))\,x\,y\\
  } %
  { \Gamma \der \ptElim_{z.C}\,w\,t\,p : C (w)} %
\end{mathpar}
together with judgmental equalities (assuming $i$ fresh):
\begin{align*}
  \ptElim_{z.C}\,(\inc a)\,t\,p &= t\,a \\
  \ptElim_{z.C}\,(\squash\,u\,v\,r)\,t\,p &=
  p\,u\,v\,(\ptElim_{z.C}\,u\,t\,p)\,(\ptElim_{z.C}\,v\,t\,p)\,r\\
  \ptElim_{z.C}\,(\hcomp^i\,[\phi\mapsto u]\,u_0)\,t\,p &=
  \comp^i\,{C \subst z w}\,[\phi\mapsto
  \ptElim_{z.C}\,u\,t\,p]\,(\ptElim_{z.C}\,u_0\,t\,p)
\end{align*}
where $w = \hcomp^j\,[\phi\mapsto u \subst i {i \land j}, (i=0)
\mapsto u_0]\,u_0$.

Instead of $\mathsf{transp}$ and $\mathsf{squeeze}$
in~\cite{CohenCoquandHuberMortberg15} we take the following
\emph{forward} operation:
\begin{mathpar}
  \inferrule %
  { \Gamma, i : \II \der A  \\
    \Gamma \der r : \II \\
    \Gamma \der u : \inh {A \subst i r}
  } %
  { \Gamma \der \fwd_{i.A}\,r\,u : \inh {A \subst i 1}
  } %
\end{mathpar}
which comes with the judgmental equalities:
\begin{align}{}
  \nonumber
  \fwd\,1\,u
  &= u
  \\
  \label{eq:fwd-inc}
  \fwd\,r\,(\inc a)
  &= \inc (\comp^i\,{A \subst i {i \lor r}}\,[(r=1) \mapsto a]\,a)
  \\
  \label{eq:fwd-squah}
  \fwd\,r\,(\squash\,u\,v\,s)
  &= \squash\,(\fwd\,r\,u)\,(\fwd\,r\,v)\,s
  \\
  \label{eq:fwd-hcomp}
  \fwd\,r\,(\hcomp^j_{\inh {A \subst i r}}\,[\phi \mapsto
    u]\,u_0)
  &= \hcomp^j_{\inh {A \subst i 1}}\,[\phi \mapsto
  \fwd\,r\,u]\,(\fwd\,r\,u_0)
\end{align}

Composition for $\inh A$ is now explained using $\fwd$ and homogeneous
composition:
\begin{equation*}
  \comp^i\,{\inh A}\,[\phi \mapsto u]\,u_0 = \hcomp^i_{\inh {A \subst
      i 1}}\,[\phi \mapsto \fwd_{j.{A\subst i
      j}}\,i\,u]\,(\fwd_{i.A}\,0\,u_0)
\end{equation*}

Next, we extend the reduction relation by directing the above
judgmental equalities from left to right, but requiring the following
extra conditions to guarantee determinism (additionally to the
suppressed well-typedness). The directed versions
of~\eqref{eq:fwd-inc}--\eqref{eq:fwd-hcomp} require $r \neq 1$;
\eqref{eq:fwd-squah} and \eqref{eq:fwd-hcomp} additionally require $s
\neq 1$ and $\phi \neq 1$, respectively.  Similarly for the reductions
of $\ptElim$.  Additionally, we need congruence rules:
\begin{mathpar}
  \inferrule %
  { \Gamma, i : \II \der A \\
    \Gamma \der r \neq 1 : \FF \\
    \Gamma \der u \red v : \inh {A \subst i r}} %
  { \Gamma \der \fwd_{i.A}\,r\,u \red \fwd_{i.A}\,r\,v : \inh {A \subst
  i 1}}
\end{mathpar}
and a similar such rule for $\ptElim$.  Correspondingly, we also call
expressions of the following form introduced: $\inh A$, $\inc\,a$,
$\squash\,u\,v\,r$ with $r \neq 1$, and $\hcomp$'s with $\phi \neq 1$.

To incorporate propositional truncation in the computability
predicates we add new the formation rules:
\begin{mathpar}
  \nir {\rcpt} {I,1 \lforce A} {I \lforce \inh A} %
  \and%
  \nir {\rept} {I \lforce A = B} {I \lforce \inh A = \inh B} %
\end{mathpar}
And in the case $I \lforce A$ was derived via $\rcpt$ the definition
of $I \lforce u : A$ is extended to:
\begin{mathpar}
  \inferrule %
  { I \lforce a : A} %
  { I \lforce \inc a : \inh A} %
  \and%
  \inferrule %
  {  0,1 \neq r \in \II (I) \\
    I \lforce \squash\,u\,v\,0 : \inh A \\
    I \lforce \squash\,u\,v\,1 : \inh A } %
  { I \lforce \squash\,u\,v\,r : \inh A} %
  \and%
  \inferrule {%
    1\neq \phi \in \FF (I) \\
    I,i,\phi \lforce u : {\inh A}\\
    I \lforce u_0 : {\inh A}\\
    I,\phi \lforce u_0 = u(i0) : {\inh A}\\
    I,\phi \lforce \hcomp^i_{\inh A}\,[\phi \mapsto u]\,u_0 : {\inh A}
  } %
  { I \lforce \hcomp^i_{\inh A}\,[\phi \mapsto u]\,u_0 : {\inh A} } %
  \and %
  \inferrule {u \noti \\
    \all f \co J \to I (u f \hr^{\inh {Af}} \And J \lforce u f
    \dn^{\inh {A f}} :
    {\inh {A f}}) \\
    \all f\co J \to I \all g \co K \to J (K \lforce u f \dn g = u fg
    \dn : {\inh {A fg}}) } %
  { I \lforce u : {\inh A}}
\end{mathpar}
As before, the rather unnatural formulation of the rules for $\squash$
and $\hcomp$ is to ensure that the height of a derivation is not
increased after performing a substitution (Lemma~\ref{lem:subst}).
\begin{mathpar}
  \inferrule %
  { I \lforce a = a' : A} %
  { I \lforce \inc a = \inc {a'} : \inh A} %
  \and%
  \inferrule %
  {  0,1 \neq r \in \II (I) \\
    I \lforce u = u' : \inh A \\
    I \lforce v = v' : \inh A } %
  { I \lforce \squash\,u\,v\,r = \squash\,u'\,v'\,r : \inh A} %
  \and%
  \inferrule {%
    1\neq \phi \in \FF (I) \\
    I,i,\phi \lforce u = u' : {\inh A}\\
    I \lforce u_0 = u_0': {\inh A}\\
    I,\phi \lforce \hcomp^i_{\inh A}\,[\phi \mapsto u]\,u_0 =
    \hcomp^i_{\inh A}\,[\phi \mapsto u']\,u_0' : {\inh A}
  } %
  { I \lforce \hcomp^i_{\inh A}\,[\phi \mapsto u]\,u_0 =
    \hcomp^i_{\inh A}\,[\phi \mapsto u']\,u_0' : {\inh A} } %
  \and %
  \inferrule {u \text{ or }u'\noti \\
    \all f \co J \to I (J \lforce u f
    \dn^{\inh {A f}} = u' f
    \dn^{\inh {A f}} :
    {\inh {A f}})} %
  { I \lforce u = u': {\inh A}}
\end{mathpar}

We now sketch how one can extend the proofs of
Sections~\ref{sec:computability} and~\ref{sec:soundness}.  The
additional case in the Expansion Lemma is handled as for natural
numbers.  Next, one proves the introduction rules for $\inc$,
$\squash$, and $\hcomp$ correct.  To handle the new case $\rcpt$ for
propositional truncation in Theorem~\ref{thm:comp-is-computable} one
has to simultaneously prove
\begin{align*}
  I \force u : \inh {A \subst i r}
  &\Imp I \force \fwd\,r\,u : \inh {A \subst i 1}\\
  I \force u = v : \inh {A \subst i r}
  &\Imp I \force \fwd\,r\,u = \fwd\,r\,v : \inh {A \subst i 1}
\end{align*}
by a side induction on the premises. Finally, one can then show
soundness of $\ptElim$.

We not only get the corresponding canonicity result for the extended
theory, but we can also extract witnesses from $\inh A$ as long as we
are in a name context:
\begin{theorem}
  \label{thm:inh-witnesses}
  If $I \der A$ and $I \der u : \inh A$, then $I \der v : A$ for some
  $v$, where $I$ is a context of the form $i_1:\II,\dots, i_n : \II$
  with $n \ge 0$.
\end{theorem}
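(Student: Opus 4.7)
The plan is as follows. First, Soundness (Theorem~\ref{thm:snd}) turns $I \der u : \inh A$ into $I \force u : \inh A$, so the task reduces to extracting some $v$ with $I \der v : A$ from this forcing judgment. I would do this by induction on the height of the derivation of $I \force u : \inh A$. Since $I \force \inh A$ can only be derived via $\rcpt$, the derivation $I \force u : \inh A$ must be an instance of one of the four clauses for terms of $\inh A$ listed in Section~\ref{sec:inh}, and I would analyze each in turn.

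If $u = \inc\,a$, the premise gives $I \force a : A$, whence $I \der a : A$ by Lemma~\ref{lem:cpred-complete}, and we take $v := a$. If $u = \hcomp^i_{\inh A}\,[\phi \mapsto u']\,u_0'$ with $\phi \neq 1$, one of the premises is $I \force u_0' : \inh A$ with strictly smaller derivation height, and applying the inductive hypothesis to $u_0'$ supplies the required $v$. If $u = \squash\,a\,b\,r$ with $r \neq 0, 1$, then the premise $I \force \squash\,a\,b\,0 : \inh A$ involves a non-introduced term (note that $\squash\,a\,b\,0 \red a$), so it must have been derived by the non-introduced clause; instantiating that clause at the identity substitution yields a strictly smaller sub-derivation of $I \force a : \inh A$, to which the inductive hypothesis applies. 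Finally, if $u$ is itself non-introduced, instantiating its non-introduced clause at the identity substitution yields a strictly smaller derivation of $I \force u\dn : \inh A$, and one applies the inductive hypothesis again.

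The one point that requires care is establishing that the recursive call in the $\squash$-case is on a strictly smaller derivation. For the $\hcomp$-clause and the non-introduced clause this is transparent from the premises. For $\squash$, the crucial observation is that $\squash\,a\,b\,0$ is non-introduced in the sense of the reduction relation (the introduction clause for $\squash$ in the computability predicate for $\inh A$ requires $r \neq 0, 1$), so the forcing judgment $I \force \squash\,a\,b\,0 : \inh A$ has no option but to be derived through the non-introduced clause, which already packages the desired sub-derivation of $I \force a : \inh A$. Once this is observed, the induction goes through cleanly, and the final witness $v$ is obtained by peeling off $\inc$, $\squash$, $\hcomp$, and reduction layers until one reaches an application of the $\inc$-clause.
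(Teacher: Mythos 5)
Your proof is correct and takes essentially the same approach as the paper: apply Soundness to get $I \force u : \inh A$, then induct on that derivation, with the $\inc$ case giving the witness directly and all other cases following from the inductive hypothesis. (One small simplification: in the $\squash$ case you need not unfold the non-introduced clause inside the premise $I \force \squash\,a\,b\,0 : \inh A$ — that premise already has a strictly smaller derivation, so the inductive hypothesis applies to it directly.)
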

\begin{proof}
  By Soundness we get $I \models A$ and $I \models u : \inh A$, and
  hence also $I \force A$ and $I \force u : \inh A$.  By induction on
  $I \force u : \inh A$ we show that there is some $v$ such that $I
  \force v : A$.  In the case for $\inc$ this is direct; any other
  case follows from the \IH.  Thus also $I \der v : A$ as required.
\end{proof}

As a direct consequence we get that the logic of mere propositions
(cf.\ \cite[Section~3.7]{HottBook13}) of cubical type theory satisfies
the following \emph{existence property}. Define $\exists (x : A)\, B$
as the truncated $\Sigma$-type, i.e., $\inh{(x : A) \times B}$.

\begin{corollary}
  If $I \der \exists (x : A)\, B(x)$ is true (i.e., there is a term
  inhabiting the type), then there exists $u$ with $I \der u : A$ such
  that $I \der B \subst x u$ is true, where $I$ is a context of the
  form $i_1:\II,\dots, i_n : \II$ with $n \ge 0$.
\end{corollary}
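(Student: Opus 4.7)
The plan is to apply Theorem~\ref{thm:inh-witnesses} directly to the truncated $\Sigma$-type and then extract a witness via $\Sigma$-projections. Concretely, suppose $I \der t : \exists (x : A)\, B(x)$, which by definition unfolds to $I \der t : \inh{(x : A) \times B}$. In particular, $I \der (x : A) \times B$ is a well-formed type in the name context $I$.

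Next I would invoke Theorem~\ref{thm:inh-witnesses} with the type $(x : A) \times B$: from $I \der t : \inh{(x : A) \times B}$ we obtain some term $v$ with $I \der v : (x : A) \times B$. I would then set $u := v.1$, so that the $\Sigma$-formation and projection rules give $I \der u : A$ and $I \der v.2 : B \subst x {u}$. Thus $B \subst x u$ is inhabited by $v.2$, so it is true in the sense of the corollary.

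There is no real obstacle here; the substantive work has already been done in Theorem~\ref{thm:inh-witnesses}, and all the corollary amounts to is unfolding the definition of $\exists$ and applying the first and second projections of a $\Sigma$-type, both of which are well-typed by the derivation rules for $\Sigma$-types. The only minor subtlety is to note that Theorem~\ref{thm:inh-witnesses} applies to \emph{any} type formed in a name context, and $(x : A) \times B$ is such a type whenever $I \der A$ and $I, x : A \der B$, which is implicit in $I \der \inh{(x : A) \times B}$.
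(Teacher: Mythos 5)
Your proof is correct and is precisely the argument the paper intends: the corollary is stated as a "direct consequence" of Theorem~\ref{thm:inh-witnesses} with no separate proof given, and your unfolding of $\exists(x:A)\,B$ to $\inh{(x:A)\times B}$, applying the theorem, and projecting out $v.1$ and $v.2$ is exactly what is meant.
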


\section{Conclusion}
\label{sec:conclusion}

We have shown canonicity for cubical type
theory~\cite{CohenCoquandHuberMortberg15} and its extensions with the
circle and propositional truncation.  This establishes that the
judgmental equalities of the theory are sufficient to compute closed
naturals to numerals; indeed, we have even given a deterministic
reduction relation to do so.  It should be noted that we could have
also worked with the corresponding \emph{untyped} reduction relation
$A \red B$ and then take $I \der A \red B$ to mean $I \der A = B$, $I
\der A$, $I \der B$, and $A \red B$ etc.

To prove canonicity we devised computability predicates (and
relations) which, from a set-theoretic perspective, are constructed
using the \emph{least} fixpoint of a suitable operator.  It is
unlikely that this result is optimal in terms of proof-theoretic
strength; we conjecture that it is possible to modify the argument to
only require the existence of a fixpoint of a suitably modified
operator (and not necessarily its least fixpoint); this should be
related to how canonicity is established in~\cite{AngiuliHarper16}.

We expect that the present work can be extended to get a normalization
theorem and to establish decidability of type checking for cubical
type theory (and proving its implementation\footnote{Available at
  \url{https://github.com/mortberg/cubicaltt}.}  correct).  One new
aspect of such an adaption is to generalize the computability
predicates and relations to expressions in any contexts in which we
get new introduced expressions given by systems; moreover, we will
have to consider reductions in such general contexts as well which has
to ensure that, say, variables of path-types compute to the right
endpoints.

Another direction of future research is to investigate canonicity of
various extensions of cubical type theory, especially adding resizing
rules.

\begin{acknowledgments*}
  I thank Carlo Angiuli, Thierry Coquand, Robert Harper, and Bassel
  Mannaa for discussions about this work, as well as Milly Maietti who
  also suggested to investigate the existence property.  I am also
  grateful for the comments by the anonymous reviewer.
\end{acknowledgments*}


\bibliographystyle{amsplain}
\bibliography{cttnormal}

\end{document}